\documentclass[
    11pt,    %
    a4paper,    %
        ]{article}

\usepackage[
    dvipsnames   %
        ]{xcolor}

\definecolor{myOrange}{RGB}{230, 159, 0}
\definecolor{myLightBlue}{RGB}{86, 180, 233}
\definecolor{myGreen}{RGB}{0, 158, 115}
\definecolor{myYellow}{RGB}{240, 228, 66}
\definecolor{myDarkBlue}{RGB}{0, 114, 178}
\definecolor{myRed}{RGB}{213, 94, 0}
\definecolor{myPink}{RGB}{204, 121, 167}

\usepackage{soul}

\usepackage[
    T1    %
    ]{fontenc}

\usepackage{amsmath}

\usepackage{amsthm}  

\usepackage{amssymb}  

\usepackage{thmtools}
\usepackage{thm-restate}

\usepackage{framed}

\binoppenalty= 9999 
\relpenalty= 9999

\usepackage{mathtools}

\usepackage{bbold}

\usepackage{xstring}

\usepackage{tikz}

\usetikzlibrary{
    fit,    %
    calc,    %
    shapes.misc,    %
    positioning,    %
    decorations.pathreplacing    %
    }

\usepackage{subcaption}

\usepackage{rotating}

\usepackage[
    linesnumbered,    %
    boxruled,    %
    ]{algorithm2e}

\DontPrintSemicolon

\SetKwInOut{Input}{Input}
\SetKwInOut{Output}{Output}

\SetKwProg{Init}{Init:}{}{}
\SetKwProg{Proc}{Procedure}{}{end}
\SetKwProg{Insert}{Insert(\({ (x,y) }\)):}{}{}

\SetArgSty{textnormal}

\usepackage{tcolorbox}

\usepackage[
    margin=1in    %
        ]{geometry}

\makeatletter
\renewcommand{\paragraph}{%
  \@startsection{paragraph}{4}%
  {\z@}{1ex \@plus 1ex \@minus .2ex}{-1em}%
  {\normalfont\normalsize\bfseries}%
}
\makeatother

\usepackage[
    backend=bibtex,    %
    maxcitenames=10,    %
    maxbibnames=10,    %
    style=alphabetic,    %
    doi=false,    %
    isbn=false,    %
    eprint=false,    %
    url=false,    %
	backrefstyle=none,    %
        ]{biblatex}

\newbibmacro{string+doi}[1]{%
    \iffieldundef{doi}{%
        \iffieldundef{url}{#1}{\href{\thefield{url}}{#1}}
        }{%
        \href{http://dx.doi.org/\thefield{doi}}{#1}
        }
    }

\DeclareFieldFormat*{title}{%
    \usebibmacro{string+doi}{\mkbibemph{#1}}
    }

\renewbibmacro{in:}{%
    \ifentrytype{article}{}{\printtext{\bibstring{in}\intitlepunct}}}

\newtoggle{bbx:datemissing}

\renewbibmacro*{date}{\toggletrue{bbx:datemissing}%
}

\renewbibmacro*{issue+date}{}%

\newbibmacro*{date:print}{%
    \setunit{\addcomma\addspace}    %
    \togglefalse{bbx:datemissing}%
    \printdate}

\renewbibmacro*{chapter+pages}{%
    \printfield{chapter}%
    \setunit{\bibpagespunct}%
    \printfield{pages}%
    \newunit
    \usebibmacro{date:print}%
    \newunit}

\renewbibmacro*{note+pages}{%
    \printfield{note}%
    \setunit{\bibpagespunct}%
    \printfield{pages}%
    \setunit{\addcomma\addspace}%
    \usebibmacro{date}%
    \newunit}

\renewbibmacro*{addendum+pubstate}{%
    \iftoggle{bbx:datemissing}
        {\usebibmacro{date:print}}
        {}%
    \printfield{addendum}%
    \newunit\newblock
    \printfield{pubstate}}

\usepackage{xpatch}

\xpatchbibmacro{volume+number+eid}{%
  \setunit*{\adddot}%
}{%
}{}{}

\DeclareFieldFormat[article]{number}{%
    \mkbibparens{#1}
    \addcolon\space}

\addbibresource{ref.bib}

\appto{\bibsetup}{\sloppy}

\DefineBibliographyStrings{english}{%
    page={page},%
    pages={pages},%
    }

\DefineBibliographyStrings{english}{
  backrefpage = {cited on page},
  backrefpages = {cited on pages},
}

\usepackage[
    breaklinks=true,    %
    colorlinks=true,    %
    linkcolor=myDarkBlue,    %
    urlcolor=myRed,    %
    citecolor=myGreen,    %
    bookmarks=true,    %
    bookmarksopenlevel=2,    %
    ]{hyperref}

\usepackage[
    nameinlink    %
        ]{cleveref}

\crefname{algorithm}{algorithm}{algorithms}
\crefname{claim}{claim}{claims}
\crefname{invariant}{invariant}{invariants}
\crefname{lemma}{lemma}{lemmas}
\crefname{observation}{observation}{observations}
\crefname{problem}{problem}{problems}
\crefname{proof}{proof}{proofs}
\crefname{question}{question}{questions}

\numberwithin{equation}{section}

\theoremstyle{plain}

\newtheorem{fact}{Fact}[section]

\newtheorem{theorem}{Theorem}[section]
\newtheorem{corollary}[theorem]{Corollary}
\newtheorem{lemma}[theorem]{Lemma}
\newtheorem{proposition}[theorem]{Proposition}
\newtheorem{definition}[theorem]{Definition}

\theoremstyle{definition}

\newtheorem{question}{Question}[]

\theoremstyle{remark}
\newtheorem{remark}[theorem]{Remark}

\usepackage{xspace}

\let\originalleft\left
\let\originalright\right
\renewcommand{\left}{\mathopen{}\mathclose\bgroup\originalleft}
\renewcommand{\right}{\aftergroup\egroup\originalright}

\newcommand{\vect}[1]{\ensuremath{\boldsymbol{#1}}}
\newcommand{\poly}[1]{\ensuremath{\operatorname{poly}\left( #1 \right)}\xspace}
\newcommand{\associated}[1]{{\ensuremath{G_#1}\xspace}}
\newcommand{\stargraph}[1]{\ensuremath{G_{#1}^\star}\xspace}
\newcommand{\tree}[1]{\ensuremath{T_{#1}}\xspace}
\newcommand{\cluster}[3]{\ensuremath{C_{#1, \sigma, #2} #3}\xspace}
\newcommand{\forest}[3]{\ensuremath{F_{#1, \sigma, #2} #3}\xspace}

\newcommand{\sparsifier}[1]{\ensuremath{\tilde #1}\xspace}
\newcommand{\hypergraph}[1]{{\ensuremath{ #1}\xspace}}

\newcommand{\ConstantProp}{\ensuremath{c_{ \gamma }}\xspace}
\newcommand{\ConstantSize}{\ensuremath{c_{  }}\xspace}
\newcommand{\ConstantR}{\ensuremath{4}\xspace}
\newcommand{\ConstantVar}{\ensuremath{2}\xspace}

\newcommand{\ilast}{{\ensuremath{i_{\text{last} }}}\xspace}

\newcommand{\atmost}[1]{\ensuremath{\mathcal O\left( #1 \right)}\xspace}

\newcommand{\SpectralHypersparsifier}[1]{\ensuremath{ (1 \pm \varepsilon) }-spectral hypersparsifier\xspace}

\newcommand{\w}[1]{\ensuremath{ w_{#1} }\xspace}
\newcommand{\x}[1]{\ensuremath{ x_{#1} }\xspace}
\newcommand{\rr}[1]{\ensuremath{ r_{#1} }\xspace}

\title{Fully Dynamic Spectral Sparsification of Hypergraphs}

\usepackage{authblk}

\author[1]{Gramoz Goranci}
\author[2]{Ali Momeni}

\affil[1]{Faculty of Computer Science, University of Vienna, Austria}
\affil[2]{Faculty of Computer Science, UniVie Doctoral School Computer Science DoCS, University of Vienna, Austria}

\date{}

\begin{document}

\maketitle

\pagenumbering{gobble}    %

\begin{abstract}
Spectral hypergraph sparsification, a natural generalization of the well-studied spectral sparsification notion on graphs, has been the subject of intensive research in recent years. In this work, we consider spectral hypergraph sparsification in the dynamic setting, where the goal is to maintain a spectral sparsifier of an undirected, weighted hypergraph subject to a sequence of hyperedge insertions and deletions. For any $0 < \varepsilon \leq 1$, we give the first fully dynamic algorithm for maintaining an $ (1 \pm \varepsilon) $-spectral hypergraph sparsifier of size $ n r^3 \operatorname{poly}\left( \log n, \varepsilon ^{-1} \right) $ with amortized update time $ r^4 \operatorname{poly}\left( \log n, \varepsilon ^{-1} \right) $, where $n$ is the number of vertices of the underlying hypergraph and $r$ is an upper-bound on the rank of hyperedges. Our key contribution is to show that the spanner-based sparsification algorithm of Koutis and Xu (2016) admits a dynamic implementation in the hypergraph setting, thereby extending the dynamic spectral sparsification framework for ordinary graphs by Abraham et al. (2016).

\end{abstract}

\pagebreak

\pagenumbering{arabic}    %

\section{Introduction}

Sparsifying large graphs into smaller ones while provably retaining key graph-centric properties is a powerful algorithmic framework at the heart of scalable graph processing. A notable example is spectral sparsification, where reweighted subgraphs preserve essential spectral properties of the original graph. Rooted in the breakthrough work on ultra-fast solvers for Laplacian systems~\cite{Spielman:2004aa}, a series of advancements led to the development of essentially linear-time algorithms for computing spectral sparsifiers~\cite{Spielman:2008aa,Spielman:2011aa,Batson:2012aa}. Since then, these sparsifiers have been successfully applied to obtain significantly faster algorithms for a myriad of machine learning primitives, such as spectral clustering and graph partitioning~\cite{Lee:2014aa,Peng:2017aa,Feng:2018aa}, pruning of neural network~\cite{Hoang:2023aa,Laenen:2023aa}, and graph learning~\cite{Calandriello:2018aa}, to name a few.

In many downstream machine learning tasks, a common assumption is to endow the objects of interest with pair-wise relationships. While these relationships are naturally represented as graphs, they often fail to capture the complex interdependencies found in real-world networks, such as biological interactions or social connections~\cite{Bonacich:2004aa}. As highlighted in~\cite{Zhou:2006aa}, modeling these relationships using \emph{hypergraphs} has a host of advantages, often leading to improved performance over graph-based methods in fundamental learning tasks like clustering, classification, and embedding. This raises the following natural question -- an analogue of spectral sparsification for graphs: \emph{Do hypergraphs admit spectral sparsifiers?} Recent research has increasingly focused on this question, leading to algorithms that achieve near-optimal sparsifier sizes~\cite{Kapralov:2021aa,Kapralov:2021ab,Oko:2023aa,Jambulapati:2023aa,Lee:2023aa}.

In this work, we study spectral sparsification through the lens of an emerging and important aspect of large-scale hypergraph processing: \emph{dynamism}. Concretely, given a hypergraph that undergoes an intermixed updates of hyperedge insertions and deletions, the goal is to efficiently maintain a sparsifier that spectrally approximates the input hypergraph under these updates. Dynamic maintenance of spectral sparsifiers is particularly relevant to real-world applications, where hypergraph data is constantly changing. Recomputing solutions to hypergraph-based optimization problems from scratch after each update is often computationally prohibitive, making efficient dynamic algorithms essential.

Our main contribution is the first fully dynamic algorithm for maintaining spectral sparsifiers of hypergraphs, with strong theoretical guarantees, as summarized in the theorem below.

\begin{restatable}{theorem}{main}
\label{th:main}
Let \( 0 < \varepsilon \leq 1 \), 
\( \gamma \geq 1 \) be a constant, and \( \hypergraph{H} = (V, E, \vect{w}) \) be an initially empty \( n \)-vertex hypergraph guaranteed to have at most \( m \) hyperedges of rank 
at most \( r \) and weight ratio \( \max _{i,j} w_i / w_j \leq W \) throughout any sequence of hyperedge insertions and \atmost{n^\gamma} hyperedge deletions.
Then, there is a fully dynamic algorithm that, with probability at least \( 1 - \atmost{ \lceil \log m \rceil /n^2} \) against an oblivious adversary, maintains a \SpectralHypersparsifier{} of \( H \) with an expected size of 
\begin{equation*}
\atmost{ n r^3 \varepsilon ^{-2} \log ^2 m \log ^5 n \log W }
\end{equation*}
and an expected amortized update time of 
\begin{equation*}
\atmost{ r^4 \varepsilon ^{-2} \log ^2 m \log ^5 n \log r }.
\end{equation*}
\end{restatable}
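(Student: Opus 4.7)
The plan is to lift the spanner-based static sparsification of Koutis--Xu from graphs to hypergraphs, and then dynamize the whole pipeline along the lines of Abraham et al., paying extra $\operatorname{poly}(r)$ factors whenever a hyperedge is expanded into its star/clique representation. Concretely, I would first establish a static building block: given a hypergraph $H$ with weight ratio $W$, partition its hyperedges into $O(\log W)$ weight classes; within each class build a hypergraph $t$-spanner $\spanner{H}$ (defined via a standard reduction that replaces each rank-$r$ hyperedge by a weighted star on $r+1$ vertices, plus an auxiliary vertex, and invokes a graph spanner of stretch $t$); sample the off-spanner hyperedges with probability $1/\gamma$ and reweight by $\gamma$. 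The contribution of the spanner is kept in full, and the contribution of the sampled part is shown to be a $(1\pm\varepsilon/\log n)$-spectral approximation via a matrix Chernoff bound applied to the hypergraph Laplacian operator, using the fact that the off-spanner hyperedges have bounded leverage score because every such hyperedge is dominated by a short path in the spanner. Recursing $O(\log n)$ times on the sampled part yields a sparsifier of the claimed size, with an extra $r^3$ coming from (i) the star-expansion blow-up in the reduction, (ii) the rank entering the leverage-score bound, and (iii) the per-class overhead.

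Next I would dynamize this construction using a level hierarchy in the spirit of Abraham--Durfee--Koutis--Krinninger--Peng. For each weight class $\sigma$ and each recursion level $i$, I maintain a hypergraph $\cluster{i}{\sigma}{}$ on which a dynamic spanner $\forest{i}{\sigma}{}$ is kept, together with the sampled residual that is fed into level $i+1$. On a hyperedge insertion or deletion, I route the update to the appropriate weight class, update the level-$0$ dynamic spanner, and then for each level decide whether the updated hyperedge enters or leaves the current spanner; in either case the induced sampling decisions at the next level are redone by flipping a fresh coin for that hyperedge only, which preserves correctness against an oblivious adversary. Dynamic spanner maintenance itself is done via the graph-spanner algorithm on the star expansion; since each hyperedge contributes $r$ graph edges, each hypergraph update triggers $O(r)$ dynamic-graph updates, producing one factor of $r$ on top of the $r^3$ size overhead to give the $r^4$ amortized update time.

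For correctness, the key lemma to prove is that at every time step the union over levels and weight classes of (retained spanners) $\cup$ (surviving sampled hyperedges at the last level) is a $(1\pm\varepsilon)$-spectral sparsifier of the current hypergraph. This follows by telescoping: the per-level matrix Chernoff guarantee composes across $O(\log n)$ levels if each level's error is scaled down to $\varepsilon/\log n$, and the failure-probability budget of $O(\lceil\log m\rceil/n^2)$ against an oblivious adversary follows from a union bound over updates and levels. The amortized cost analysis uses the standard potential argument of the Abraham et al.\ framework: each insertion contributes $\operatorname{poly}(\log n,\varepsilon^{-1})$ work per level, while deletions are charged against the $n^\gamma$-deletion budget so that the number of full rebuilds triggered when a level becomes too sparse stays controlled.

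The main obstacle, I expect, is the interaction between the hypergraph leverage-score bound and the graph spanner on the star expansion. In the graph case, the off-spanner edge $(u,v)$ has effective resistance at most $t/w_{uv}$ because the $t$-spanner contains a low-resistance $u$--$v$ path; for hypergraphs, the analogous bound has to be stated in terms of the hypergraph discrepancy/energy of the off-spanner hyperedge, and one must verify that the star-expansion resistance carries over with only an $O(r)$ loss, because otherwise the dependence on $r$ in the sparsifier size would blow up beyond $r^3$. A secondary obstacle is ensuring that the dynamic spanner's amortized recourse (number of spanner edges changed per update) remains $\operatorname{poly}(\log n)$, since every recourse event propagates to the next level; here I would invoke a low-recourse dynamic spanner on the star expansion and absorb the recourse into the $\operatorname{poly}(\log n)$ factor of the update time.
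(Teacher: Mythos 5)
Your overall plan is the right one---the paper does exactly lift the Koutis--Xu spanner-based sparsification to hypergraphs and dynamize it \`a la Abraham et al., partitioning by rank (the paper partitions into $O(\log r)$ classes by hyperedge size, which plays the role of your weight-class partition), reducing hyperspanners to graph spanners on a star/associated expansion, and converting decremental to fully dynamic by a geometric hierarchy. But there are two concrete problems with how you set up the static core, one of which is a genuine gap.

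First and most importantly, a single $t$-stretch spanner does not bound leverage scores in the direction you need. You write that ``the off-spanner edge $(u,v)$ has effective resistance at most $t/w_{uv}$ because the $t$-spanner contains a low-resistance $u$--$v$ path.'' That bound gives leverage $w_{uv}R(u,v)\le t$, which is \emph{large} ($t=\Theta(\log n)$), and sampling at rate proportional to $t\log n/\varepsilon^2$ exceeds $1$---you get no compression. What Koutis--Xu (and the paper, in Lemma~3.4) actually use is a \emph{$t$-bundle of $\alpha$-spanners}: $T_1$ spans $H$, $T_2$ spans $H\setminus T_1$, etc., so that every off-bundle hyperedge has $t$ hyperedge-disjoint paths of length $\le\alpha/w_e$, and the parallel-resistance rule gives $R\le\alpha/(tw_e)$ (with an extra $r$ in the denominator from the $\ge r/2$ parallel short paths inside each clique of the associated graph). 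Choosing $t=\Theta(\alpha r^3\varepsilon^{-2}\log n)$ then makes the sampling probability a universal constant such as $1/4$. Without the bundle structure you cannot get the peeling recursion off the ground; this is the central lemma and your version of it would fail.

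Second, the dynamization hinges on a specific property your sketch omits: \emph{monotonicity}. If a hyperedge $e$ leaves spanner $T_j$ while still present in $H$, it must reappear in $H\setminus\bigcup_{i\le j}T_i$, i.e.\ it becomes an \emph{insertion} into the instances maintaining $T_{j+1},\dots,T_t$, potentially cascading through all $t$ levels. The paper handles this by enforcing that an edge is removed from $T_j$ \emph{only} when it is removed from $H$ (so the bundle hierarchy stays purely decremental and no cascade occurs), inheriting the monotone decremental spanner of Abraham et al.\ on the star expansion. Your ``low-recourse dynamic spanner'' and ``flip a fresh coin for that hyperedge only'' language gestures at the problem but does not pin down why the recourse does not multiply across the $t$ bundle levels; the monotonicity property is what makes the amortized bound hold. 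A minor further point: matrix Chernoff does not apply directly since the hypergraph energy $Q_H(x)=\sum_e w_e\max_{u,v\in e}(x_u-x_v)^2$ is not a quadratic form in $x$; the paper instead invokes the sampling-by-effective-resistance theorem of Bansal, Svensson, and Trevisan, which is formulated for hypergraph quadratic forms and already carries the $r^4$ factor in the sampling probability (the net $r^3$ in $t$ comes from $r^4$ divided by the extra factor $r$ in the effective-resistance bound, not from your three-way attribution).
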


To put our result in context, note that reading the set of vertices that define a single hyperedge of rank $r$ inherently requires $\Omega(r)$ time, implying the update time must necessarily depend on $r$. When $r = 2$, this result precisely recovers the dynamic spectral algorithm of ordinary graphs by Abraham et al.~\cite{Abraham:2016aa}. In fact, when $r$ is a big constant -- a setting  which may be relevant to many applications, our algorithm achieves both nearly optimal sparsifier size and poly-logarithmic update time in the input parameters of the problem.

\subsection{Technical Overview}

The starting point of our algorithm is the now standard sampling technique that constructs a \SpectralHypersparsifier{} \( \hypergraph{\tilde H} = (V, \tilde E, \tilde{\vect{w}}) \) of a hypergraph \( \hypergraph{H} = (V, E, \vect{w}) \) by sampling each hyperedge \( e \in E \) with probability \( p_e \) and then setting \( \tilde w_e \gets  w_e/p_e \). 
While the sampling itself can be implemented in \atmost{m} time, the challenge lies in computing \( p_e \) for each \( e \in E \) efficiently. 

The sampling probability \( p_e \) is typically inversely proportional to the effective resistance of hyperedge  \( e \) \cite{Bansal:2019aa}. As a result, if we can upper bound the effective resistance for all hyperedges, we can avoid computing \( p_e \) for every \( e \in E \).
This enables us to sample each hyperedge with constant probability \( p \) and then rescale its weight to \( \tilde w_e \gets  w_e/p \).

Motivated by \cite{Abraham:2016aa,Oko:2023aa}, which both build on the work of~\cite{Koutis:2016aa}, our algorithm uses a set of spanners \hypergraph{B} to bound the effective resistance of \( \hypergraph{H} \setminus \hypergraph{B} \), and then return \hypergraph{B} along with the sampled edges from \( \hypergraph{H} \setminus \hypergraph{B} \) (with constant probability \( p \)) as the \SpectralHypersparsifier{} of \hypergraph{H}.
More precisely, we find a sequence of \( \alpha \)-hyperspanners \( T_1, T_2, \dots, T_t \), where \( T_1 \) is an \( \alpha \)-hyperspanners of \( H \), \( T_2 \) is an \( \alpha \)-hyperspanners of \( H \setminus T_1 \), and so on.
We call \( B = T_1 \cup T_2 \cup \dots \cup T_t \) a \( t \)-bundle \( \alpha \)-hyperspanner of \( H \). 
The hypergraph $B$ guarantees that every hyperedge \( e \in H \setminus B \) has effective resistance of at most \( 4 \alpha / rt \).
By choosing a proper value for \( t \), we ensure that \( p = 1/4 \) for every hyperedge \( e \in \hypergraph{H} \setminus \hypergraph{B} \).

A priori, maintaining such a nested bundle of hyperspanners in a dynamic setting is not straightforward. To tackle this, we first design a decremental algorithm that handle only hyperedge deletions. It is not hard to see that this algorithm can be generalized to a fully dynamic one at the cost of paying a \( \log r \) overhead in the update time, where \( r \) is the rank of \hypergraph{H}.
Next, we outline the high-level idea behind our decremental algorithm.

Assume that \hypergraph{\tilde H} contains the \( t \)-bundle \( \alpha \)-hyperspanner  \( \hypergraph{B} = \hypergraph{T_1} \cup \hypergraph{T_2} \cup \dots \cup \hypergraph{T_t} \), and that a deletion of an edge from \hypergraph{H} has caused the removal of another hyperedge \( e \in T_j \) from \( B \) to maintain \hypergraph{B} as a \( t \)-bundle \( \alpha \)-hyperspanner.
Since \( T_{j+1} \) is an \( \alpha \)-spanner of \( H \setminus \cup _{i = 1} ^j T_i \), if \( e \) is not removed from \hypergraph{H}, it must appears in \( H \setminus \cup _{i = 1} ^j T_i \).
The same holds for \( T_{j+2} \) if \( e \) is not selected to be in \( T_{j+1} \) after the update, as \( T_{j+2} \) is an \( \alpha \)-spanner of \( H \setminus \cup _{i = 1} ^{j + 1} T_i \), and so on.
Thus, the deletion of \( e \) from \( T_j \) could potentially result in its insertion into the underlying hypergraphs used to maintain \( T_{j + 1}, T_{j + 2}, \dots, T_{t} \).
This is expensive for two reasons: (1) handling the insertion of \( e \) into the underlying hypergraphs requires designing a fully dynamic algorithm to maintain the \( \alpha \)-hyperspanners, and (2) the insertion would need to be passed to \atmost{t} algorithms maintaining \( T_{j + 1}, T_{j+2}, \dots, T_{t} \).

To overcome this challenge, we ensure that \( e \) is removed from \hypergraph{T_j} \textit{only if} it is removed from \hypergraph{H}.
This property is called the \textit{monotonicity property}, and we refer to an algorithm that supports this property as a \textit{monotone algorithm}.
This addresses the concerns raised in the previous paragraph: (1) since the deletion of \( e \) from \( T_j \) does not result in the insertion of \( e \) into \( \hypergraph{H} \setminus \cup_{i = 1} ^j \hypergraph{T_i}  \), we only need to design a decremental algorithm to maintain each \( T_i \), and (2) the deletion does not affect the underlying hypergraphs used to maintain \( T_{j + 1}, T_{j+2}, \dots, T_{k} \).

\subsection{Related Work}
Because of foundational importance and practical relevance of spectral-based sparsification on graphs, various variants of such sparsifiers have been studied across different computational models. Some examples include, dynamic sparsifiers~\cite{Bernstein:2022aa}, streaming sparsifiers~\cite{Kelner:2013aa,Kapralov:2017aa,Kapralov:2019aa}, distributed and parallel sparsifiers,~\cite{Koutis:2016aa}, dynamic vertex sparsifiers~\cite{Goranci:2018aa,Durfee:2019aa,Chen:2020ac,Gao:2021aa,Axiotis:2021aa,Brand:2022aa,Dong:2022aa} and distributed vertex sparsifiers~\cite{Zhu:2021aa,Forster:2021aa}.

\section{Preliminaries}

A hypergraph \( \hypergraph{H} = (V, E, \vect{w}) \) with a vertex set \( V \), a hyperedge set \( E \) and a weight vector \( \vect{w} \in \mathbb R _+ ^{E} \) defined in \( E \), is a generalization of the standard definition of graphs in which each hyperedge consists of a subset of vertices.
We define \( m = |E| \) and \( n = |V| \) and usually refer to \hypergraph{H} as an \( m \)-hyperedge \( n \)-vertex hypergraph.
For an integer \( r \geq 2 \), we say that \hypergraph{H} is of rank \( r \) if \( |e| \leq r \) for every hyperedge \( e \in E \).
We denote by \( E(H) \) the set of hyperedges of \( H \).

For a hyperedge \( e \in E \), \w{e} is the coordinate in \vect{w} corresponding to \( e \).
Similarly, for a vector \( \vect{x} \in \mathbb R ^{V} \) defined on \( V \) and a vertex \( u \in V \), \x{u} is the coordinate in \vect{x} corresponding to \( u \).

\paragraph{Spectral Hypergraph Sparsification.}
For any input vector \( \vect{x} \in \mathbb R ^{V} \), the \textit{energy} of hypergraph \( \hypergraph{H} = (V, E, \vect{w}) \) on \vect{x} is defined as
\begin{equation*}
Q _\hypergraph{H} (\vect{x}) = \sum_{e \in E} \w{e} \cdot \max_{u, v \in e} \left( \x{u} - \x{v} \right)^2.
\end{equation*}

We use \( Q _\hypergraph{H} (\vect{x}) \) to define the notion of a spectral hypergraph sparsifier for \hypergraph{H}.
For sake of brevity, we refer to such sparsifier as a \textit{spectral hypersparsifier} of \hypergraph{H}.
\begin{definition}
Let \hypergraph{H = (V, E, \vect{w})} be a hypergraph and let \( \hypergraph{\sparsifier{H}} = (V, \tilde{E} , \tilde{\vect{w}}) \) be a sub-hypergraph of \hypergraph{H}, i.e., \( \tilde{E} \subseteq E \).
For \( 0 < \varepsilon < 1 \), we say that \hypergraph{\sparsifier{H}} is a 
\SpectralHypersparsifier{} of \( H \) if, for every vector \( \vect{x} \in \mathbb R ^{V} \) defined on \( V \),
\begin{equation} \label{eq:spectral_hypergraph}
(1 - \varepsilon) Q _\hypergraph{\sparsifier{H}} (\vect{x}) \leq Q _\hypergraph{H} (\vect{x}) \leq (1 + \varepsilon) Q _\hypergraph{\sparsifier{H}} (\vect{x}).
\end{equation}
\end{definition}

Note that for graphs (i.e., hypergraphs of rank \( 2 \)), this definition is equivalent to the definition of a \( (1 \pm \varepsilon) \)-spectral sparsifier; for a graph \( G = (V, E, \vect{w}) \) and a vector \( \vect{x} \in \mathbb R ^{V} \), it follows that
\begin{equation} \label{eq:quadratic}
Q _G (\vect{x}) = \sum_{ uv \in E} \w{uv} \left( \x{u} - \x{v} \right)^2 = \vect{x}^T \mathcal L _G \vect{x},
\end{equation}
where \( \mathcal L _G
\) is the Laplacian matrix of \( G \) defined as the \( n \times n \) matrix with non-diagonal coordinate \( uv \) equal to the negated weight \( - w_{uv} \) for every edge \( uv \in E \), and diagonal coordinate \( uu \) equal to the weighted degree \( \sum _{uv \in E} w_{uv} \) of vertex \( u \).

\paragraph{Effective Resistance}
A common approach to studying the spectral properties of a graph \( G = (V, E, \vect{w}) \) is to view it as an electrical network, where each edge \( e \in E \) acts as a resistor with resistance \( \rr{e} = 1 / \w{e} \), and each node has a potential.
Using this, \Cref{eq:quadratic} is the energy of the electrical flow when the potential vector \vect{x} is applied to the nodes. 

The \textit{effective resistance \( R_G(u, v) \)} of a pair of vertices \( u, v \in V \) is defined as the potential difference between \( u \) and \( v \) when inducing one unit of current in the network.
Alternatively, it can be defined as 
\begin{equation*}
R_G(u, v) = \max_{\vect{x} \in \mathbb R ^{V}} \frac{\left( \x{u} - \x{v} \right)^2}{\vect{x}^T \mathcal L_G \vect{x}},
\end{equation*}
and thus, bounding the effective resistance \( R_G(u, v) \) bounds the energy \( Q_G(\vect{x}) \), and vice versa. 

To compute \( R_G(u, v) \) in an electrical network (i.e., the graph obtained from \( G \) by setting \( \rr{e} = 1 / \w{e} \) for every \( e \in E \)), we will utilize the well-known rules for combining the series and parallel resistors summarized in the following facts.

\begin{fact}
A series of resistors from \( u \) to \( v \) with resistances \( r_1, r_2, \dots, r_k \) can be replaced by a single resistor with endpoint \( u \) and \( v \) and an equivalent resistance of \( \sum_{i = 1} ^k r_i \).
\end{fact}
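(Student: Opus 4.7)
The plan is to derive the series-resistance identity by combining the variational characterization of effective resistance given in the preliminaries with a Cauchy--Schwarz inequality, and then to promote this identity into the stated \emph{replacement} property (valid even when the series sits inside a larger network) via a Schur-complement argument on the Laplacian.

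First I would consider the path in isolation: label its vertices $u = v_0, v_1, \ldots, v_k = v$, so that the weight on edge $v_{i-1}v_i$ equals $1/r_i$. Using $R_G(u,v) = \max_{\vect{x}}(x_u - x_v)^2/(\vect{x}^T \mathcal L_G \vect{x})$ and writing $d_i := x_{v_i} - x_{v_{i-1}}$, the Laplacian quadratic form collapses to $\sum_{i=1}^k d_i^2/r_i$, while $x_u - x_v = -\sum_i d_i$. Cauchy--Schwarz then gives $\bigl(\sum_i d_i\bigr)^2 \le \bigl(\sum_i r_i\bigr)\bigl(\sum_i d_i^2/r_i\bigr)$, with equality iff $d_i \propto r_i$, yielding the supremum $\sum_{i=1}^k r_i$. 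This already establishes the equivalent-resistance identity in the case where the series constitutes the entire network.

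To upgrade this to the replacement claim---that deleting the internal vertices $v_1, \ldots, v_{k-1}$ and inserting a single resistor of resistance $\sum_i r_i$ between $u$ and $v$ preserves the effective resistance of every pair in the ambient network---I would Schur-complement out the internal vertices of $\mathcal L_G$. Since each internal $v_i$ has degree exactly two in the network (its only neighbors being $v_{i-1}$ and $v_{i+1}$), the elimination can be carried out one vertex at a time by induction on $k$: eliminating $v_1$ from the length-two subpath with endpoints $v_0, v_2$ produces a single edge between $v_0$ and $v_2$ of conductance $1/(r_1 + r_2)$, reducing the series to $k-1$ resistors with leading resistance $r_1 + r_2$, and the induction hypothesis closes the argument.

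The main obstacle is mostly bookkeeping: verifying that the Schur-complement computation genuinely commutes with the presence of the external network and that the inductive step is uniform in $k$. Neither point is technically deep, but one must be careful with the definition of effective resistance in use---the variational form extends cleanly to subgraphs embedded in a larger network, whereas a pure potential-difference argument presumes that the internal vertices have degree exactly two in the ambient graph, which is precisely the hypothesis implicit in the statement of the fact.
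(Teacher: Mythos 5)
The paper does not actually prove this Fact: it is presented, together with the parallel rule, as one of ``the well-known rules for combining the series and parallel resistors,'' and used as a black box. So there is no ``paper's own proof'' to compare against; your task was essentially to supply a proof of a textbook result.

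That said, your proposal is correct and complete in outline. The Cauchy--Schwarz step for the isolated path is exactly the right computation: with $d_i = x_{v_i} - x_{v_{i-1}}$ the quadratic form is $\sum_i d_i^2 / r_i$, and $\bigl(\sum_i d_i\bigr)^2 \le \bigl(\sum_i r_i\bigr)\bigl(\sum_i d_i^2/r_i\bigr)$ with equality at $d_i \propto r_i$, so $\sup (x_u - x_v)^2 / (\vect{x}^T \mathcal L \vect{x}) = \sum_i r_i$. Your second paragraph correctly identifies the subtlety that most casual invocations of this rule gloss over: ``replacement'' is a claim about every pair of terminals in an ambient network, not just the endpoints of the series, and it only holds under the hypothesis (implicit in the word ``series'') that the internal vertices have degree exactly two. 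The one-vertex-at-a-time Schur complement is the clean way to see this; eliminating a degree-two vertex $v_1$ with conductances $1/r_1$ and $1/r_2$ to its two neighbors produces a single edge of conductance $(1/r_1 \cdot 1/r_2)/(1/r_1 + 1/r_2) = 1/(r_1 + r_2)$ between them, and this computation is local, hence unaffected by the rest of the network. The induction on $k$ then closes it. For the purposes of this paper the generality of your second paragraph is actually needed: \Cref{lem:effective_resistance} applies the series rule to two-edge subpaths inside the associated graph, which is a larger network.
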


\begin{fact}
A set of parallel resistors with endpoints \( u \) and \( v \) and resistances \( r_1, r_2, \dots, r_k \) can be replaced by a single resistor with endpoints \( u \) and \( v \) and an equivalent resistance of \( 1 / \left( \sum_{i = 1} ^k 1/r_i \right) \).
\end{fact}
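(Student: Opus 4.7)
The plan is to invoke the variational characterization of effective resistance recalled just above the fact and perform a short direct computation on the two-vertex sub-network consisting of the parallel bundle. Specifically, I would view the $k$ parallel resistors as a weighted multigraph $G_0$ on vertex set $\{u, v\}$ with edges $e_1, \dots, e_k$ of weights $\w{e_i} = 1/r_i$, so that conductance equals the reciprocal of resistance in the conventions fixed in this subsection. Its Laplacian is then the $2 \times 2$ matrix with diagonal entries $\sum_i \w{e_i}$ and off-diagonal entries $-\sum_i \w{e_i}$, and for every potential vector $\vect{x}$ on $\{u, v\}$ one obtains $\vect{x}^T \mathcal{L}_{G_0} \vect{x} = \left( \sum_{i=1}^k \w{e_i} \right)(\x{u} - \x{v})^2$, simply by expanding the quadratic form (equivalently, applying the energy identity from the preliminaries to the multigraph $G_0$).

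Substituting this identity into the variational formula for $R_{G_0}(u, v)$ collapses the ratio to a constant in $\vect{x}$, giving $R_{G_0}(u, v) = 1/\sum_{i=1}^k \w{e_i} = 1/\sum_{i=1}^k (1/r_i)$, which matches the claimed equivalent resistance. To justify that the substitution is legitimate inside any larger ambient network $G$ containing the bundle, I would observe that the contribution of the $k$ parallel edges to $\mathcal{L}_G$ is exactly the rank-one term $\left( \sum_i \w{e_i} \right)(\mathbf{1}_u - \mathbf{1}_v)(\mathbf{1}_u - \mathbf{1}_v)^T$, which coincides with the contribution of a single edge between $u$ and $v$ of weight $\sum_i 1/r_i$. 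Since the Laplacian of the ambient network is therefore unchanged by the replacement, every effective resistance in it is preserved, and the effective resistance between $u$ and $v$ viewed from the rest of the network is exactly that of the single equivalent resistor.

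I do not foresee a substantial obstacle: the statement reduces to two elementary observations, namely that parallel edges are additive at the level of Laplacians and that effective resistance is determined by the Laplacian through the variational formula. The only point worth handling with mild care is the conductance/resistance bookkeeping, i.e.\ making sure that a resistor of resistance $r_i$ corresponds to an edge of weight $\w{e_i} = 1/r_i$, so that the reciprocals appear in the right place when the final ratio is taken.
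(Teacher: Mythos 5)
Your derivation is correct. The paper states this as one of the ``well-known rules'' for combining resistors and offers no proof of its own, so there is nothing to compare against; your argument via the variational characterization $R_G(u,v) = \max_{\vect{x}} (\x{u}-\x{v})^2 / \vect{x}^T \mathcal L_G \vect{x}$, together with the observation that the $k$ parallel edges contribute the same rank-one term to the Laplacian as a single edge of weight $\sum_i 1/r_i$, is a clean and complete justification consistent with the conventions ($\rr{e} = 1/\w{e}$) fixed in the preliminaries.
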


Since our analysis relies on computing \( R_G(\cdot, \cdot) \), in the rest of the text, we consider \( 1 / \w{e} \) as the \textit{length} of the edge \( e \).

\paragraph{Spanners and Hyperspanners.}
We first discuss the concept of spanners in graphs and then consider its generalization to hypergraphs.
For a graph \( G = (V, E, \vect{w}) \) and a pair of vertices \( u, v \in V \), let \( d_G(u , v) \) denote the length of a shortest \( (u, v) \)-path in \( G \).
As discussed before, the length of an edge \( e \in E \) is equal to \( 1/\w{e} \).
Thus,
\begin{equation*}
d_G(u, v) = \min _{\text{\( (u, v) \)-path \( P \) in \( G \)}} \left( \sum_{e \in P} \frac{1}{\w{e}} \right).
\end{equation*}
\begin{definition}
For a graph \( G = (V, E, \vect{w}) \) and \( \alpha \geq 1 \), a subgraph \( G' = (V, E', \vect{w}') \) of \( G \) is called an \( \alpha \)-spanner of \( G \) if \( d_G(u, v) \leq d_{G'}(u, v) \leq \alpha d_G(u, v) \) for every pair of vertices \( u, v \in V \).
\end{definition}

To discuss the notion of hyperspanners, we first define the concept of hyperpaths.
In a hypergraph \hypergraph{H = (V, E, \vect{w})}, a subset of hyperedges \( e_1, e_2, \dots, e_k \) is called a \( (u, v) \)-hyperpath if (1) \( u \in e_1 \), (2) \( v \in e_k \), and (3) for every \( 1 \leq i \leq k - 1 \), \( e_i \cap e_{i + 1} \neq \emptyset \).
Analogous to graphs, the length of the hyperpath is defined as \( \sum_{i = 1} ^k 1/\w{e_i} \).
We denote by \( d_\hypergraph{H} (u, v) \) the length of the shortest \( (u, v) \)-hyperpath in \hypergraph{H}.

\begin{definition}
For a hypergraph \( \hypergraph{H} = (V, E, \vect{w}) \) and \( \alpha \geq 1 \), a sub-hypergraph \( \hypergraph{H'} = (V, E', \vect{w}') \) of \hypergraph{H} is called an \( \alpha \)-hyperspanner of \hypergraph{H} if \( d_\hypergraph{H}(u, v) \leq d_\hypergraph{H'}(u, v) \leq \alpha d_\hypergraph{H}(u, v) \) for every pair of vertices \( u, v \in V \).
\end{definition}

\paragraph{The Associated Graph.}
The associated graph of a hypergraph \hypergraph{H = (V, E, \vect{w})} is a graph \associated{\hypergraph{H} = (V, E_H, \vect{w}_H)}, defined as follows: for every hyperedge \( e \in E \) in the hypergraph, we create a clique \( C(e) \) in \associated{\hypergraph{H}} on the vertices of \( e \), with all edges in \( C(e) \) assigned a weight equal to \w{e}.
We define a function \( f \colon E_H \to E \), which maps each edge \( e_H \in E_H  \) of the associated graph \associated{\hypergraph{H}} to the hyperedge \( e \in E \) where \( e_H \in C(e) \).

Note that \associated{\hypergraph{H}} may have parallel edges, as a pair of vertices can appear in several hyperedges.

\paragraph{\( t \)-Bundle \( \alpha \)-spanners and \( t \)-Bundle \( \alpha \)-Hyperspanners.}

Given a graph \( G = (V, E, \vect{w}) \), a \( t \)-bundle \( \alpha \)-spanner \( B \) of \( G \) is a union \( T_1 \cup T_2 \cup \dots \cup T_t \) of \( \alpha \)-spanners, where \( T_i \) is an \( \alpha \)-spanner of \( G \setminus \cup_{j = 1} ^{i-1} T_j \).
In other words, \( T_1 \) is an \( \alpha \)-spanner of \( G \), \( T_2 \) is an \( \alpha \)-spanner of \( G \setminus T_1 \), and so on.

The existence of a \( t \)-bundle \( \alpha \)-spanner \( B \) of \( G \) guarantees that for each edge \( uv \in E \setminus B \), there are at least \( t \) paths with length at most \( \alpha \cdot d_G(u,v) \).
We will use this property later to bound the effective resistance \( R_{G_H}(u, v) \) of the associated graph \associated{\hypergraph{H}} of a hypergraph \hypergraph{H}.

Similarly, given a hypergraph \( \hypergraph{H} = (V, E, \vect{w}) \), a \( t \)-bundle \( \alpha \)-hyperspanner \hypergraph{B} of \hypergraph{H} is a union \(  \hypergraph{T_1} \cup \hypergraph{T_2} \cup \dots \cup \hypergraph{T_t} \) of \( \alpha \)-hyperspanners, where \hypergraph{T_i} is an \( \alpha \)-hyperspanner of \( \hypergraph{H} \setminus \cup_{j = 1} ^{i-1} \hypergraph{T_j} \).
In other words, \hypergraph{T_1} is an \( \alpha \)-hyperspanner of \hypergraph{H}, \hypergraph{T_2} is a an \( \alpha \)-hyperspanner of \( \hypergraph{H} \setminus \hypergraph{T_1} \), and so on.

\paragraph{Chernoff Bound \cite{Chernoff:1952aa}.}
We will use the multiplicative version of the Chernoff bound.
Let \( X_1, X_2, \dots, X_k \) be independent random variables with values \( 0 \) or \( 1 \); i.e., for each \( 1 \leq i \leq k \), \( X_i = 1 \) with probability \( p_i \) and \( X_i = 0 \) with probability \( 1 - p_i \).
Let \( X = \sum_{i = 1} ^k X_i \), and let \( \mu = \mathbb E \left[ X \right] = \sum_{i = 1} ^k p_i \).
Then, for every \( \delta \geq 0 \),
\begin{equation} \label{eq:Chernoff}
\mathbb P \left[ X \geq (1 + \delta) \mu \right] \leq \exp\left( - \frac{\delta ^2 \mu}{2 + \delta}  \right).
\end{equation}

\section{The Algorithm}

In this section, we explain our fully dynamic algorithm for maintaining a \SpectralHypersparsifier{} of a hypergraph \( \hypergraph{H} = (V, E, \vect{w}) \).
Following \cite{Bansal:2019aa}, we describe our algorithm under the assumption that every hyperedge is of size \( (r/2, r] \).
At the end of the section, we will generalize the algorithm to hypergraphs of rank \( r \) using the lemma below.

\begin{lemma} \label{lem:r/2}
Let \( \varepsilon > 0 \) and suppose that there is an algorithm which, given any \( m \)-hyperedge \( n \)-vertex hypergraph \hypergraph{H'} with hyperedges of size \( (r/2, r ] \), computes a \SpectralHypersparsifier{} of \hypergraph{H'} with size  
\( 
 r^{\ConstantR} \varepsilon^{ - \ConstantVar } \cdot S(m, n) \)
 in \( T(m, n, r, \varepsilon ^{-1}) \) time, where \( S \) and \( T \) are  monotone non-decreasing functions.
Then, for any \( m \)-hyperedge \( n \)-vertex hypergraph \hypergraph{H} of rank \( r \), there is an algorithm for computing a \SpectralHypersparsifier{} of \hypergraph{H} with size \atmost{r^{\ConstantR} \varepsilon^{ - \ConstantVar } \cdot S(m, n)} in \atmost{T(m, n, r, \varepsilon ^{-1}) \log r} time.
\end{lemma}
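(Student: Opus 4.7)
The plan is a standard bucketing reduction by hyperedge size, exploiting the fact that the energy $Q_{\hypergraph{H}}(\vect{x}) = \sum_{e \in E} \w{e} \max_{u,v \in e}(\x{u}-\x{v})^2$ is additive over hyperedges. Concretely, I would partition the hyperedge set $E$ of \hypergraph{H} into $\lceil \log_2 r \rceil$ groups $E_1, E_2, \ldots, E_{\lceil \log_2 r \rceil}$, where $E_i$ consists of all hyperedges whose cardinality lies in $(2^{i-1}, 2^i]$. (Hyperedges of size $\leq 1$ contribute zero energy, so they can be discarded.) Let \hypergraph{H_i} denote the sub-hypergraph induced by $E_i$ on the vertex set $V$; by construction, every hyperedge of \hypergraph{H_i} has size in $(r_i/2, r_i]$ for $r_i := 2^i$, and hence the assumed algorithm applies to \hypergraph{H_i} to produce a \SpectralHypersparsifier{} \hypergraph{\sparsifier{H}_i}.

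Next, I would take the natural union of these per-bucket sparsifiers, $\hypergraph{\sparsifier{H}} = \bigcup_i \hypergraph{\sparsifier{H}_i}$, and verify that it is a \SpectralHypersparsifier{} of \hypergraph{H}. The verification is immediate from the additivity of the energy: for every $\vect{x} \in \mathbb R^V$,
\begin{equation*}
Q_{\hypergraph{H}}(\vect{x}) = \sum_{i} Q_{\hypergraph{H_i}}(\vect{x}),
\qquad
Q_{\hypergraph{\sparsifier{H}}}(\vect{x}) = \sum_{i} Q_{\hypergraph{\sparsifier{H}_i}}(\vect{x}),
\end{equation*}
so applying the assumed $(1\pm\varepsilon)$ bound in each group and summing preserves the same $(1\pm\varepsilon)$ approximation.

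For the size, the $i$-th sparsifier has at most $r_i^{\ConstantR} \varepsilon^{-\ConstantVar} \cdot S(m_i, n_i) \leq r_i^{\ConstantR} \varepsilon^{-\ConstantVar} \cdot S(m, n)$ hyperedges by monotonicity of $S$, and since $r_i = 2^i$, the geometric sum $\sum_{i=1}^{\lceil \log_2 r \rceil} r_i^{\ConstantR}$ is dominated by its last term and is therefore $\atmost{r^{\ConstantR}}$. This yields an overall sparsifier size of $\atmost{r^{\ConstantR} \varepsilon^{-\ConstantVar} \cdot S(m,n)}$, as required. The runtime bound is even simpler: summing $T(m_i, n_i, r_i, \varepsilon^{-1}) \leq T(m, n, r, \varepsilon^{-1})$ over the $\atmost{\log r}$ buckets gives the claimed $\atmost{T(m, n, r, \varepsilon^{-1}) \log r}$ total time.

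There is no substantive obstacle here, only a bookkeeping question of picking the bucketing so that the geometric series over $r_i^{\ConstantR}$ telescopes to $\atmost{r^{\ConstantR}}$; the dyadic partition above is chosen precisely for that purpose. The only point that requires a moment of care is confirming that the approximation parameter $\varepsilon$ does not need to be shrunk when combining the pieces (it does not, because the summation of spectral inequalities preserves the multiplicative factor) and that hyperedges of size $\leq 1$ can be safely ignored since they contribute identically zero to both $Q_{\hypergraph{H}}$ and $Q_{\hypergraph{\sparsifier{H}}}$.
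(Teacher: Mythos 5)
Your proof is essentially identical to the paper's: the same dyadic bucketing of hyperedges by size into groups with cardinality in $(2^{i-1}, 2^i]$, the same union of per-bucket sparsifiers, the same appeal to additivity of the energy (which the paper isolates as a separate decomposability lemma, \Cref{lem:decomposability}), and the same geometric-sum bound on the total size. Your remark about size-$\leq 1$ hyperedges being negligible is a harmless addition.
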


The exponents of \( r \) and \( \varepsilon \) in \Cref{lem:r/2} are adjusted for our later use. 
Before proving the lemma, we first prove the decomposability of spectral hypersparsifiers below.
This will be used in both the proof of \Cref{lem:r/2} and in the proofs that follow.

\begin{lemma}[Decomposability] \label{lem:decomposability}
Let \( \varepsilon > 0 \), \( \hypergraph{H} = (V, E, \vect{w}) \) be a hypergraph, and \( E_1, E_2, \dots, E_k \) be a partition of \( E \).
For \( 1 \leq i \leq k \), define \( \hypergraph{H_i} = (V, E_i, \vect{w}_i ) \) as the sub-hypergraph of \hypergraph{H} induced by \( E_i \).
Let \hypergraph{\sparsifier{H_i}} be a \SpectralHypersparsifier{} of \hypergraph{H_i}.
Then, the hypergraph \( \hypergraph{\sparsifier{H}} = \cup_{i = 1} ^k \hypergraph{\sparsifier{H_i}} \) is a \SpectralHypersparsifier{} of \hypergraph{H}.
\end{lemma}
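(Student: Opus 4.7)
The plan is to exploit the simple fact that the energy functional $Q_\hypergraph{H}(\vect{x})$ is additive over disjoint sets of hyperedges. Concretely, since the defining expression
\begin{equation*}
Q _\hypergraph{H} (\vect{x}) = \sum_{e \in E} \w{e} \cdot \max_{u, v \in e} \left( \x{u} - \x{v} \right)^2
\end{equation*}
is a sum over hyperedges with per-hyperedge nonnegative contributions, and since $E = E_1 \sqcup E_2 \sqcup \dots \sqcup E_k$, we immediately obtain
\begin{equation*}
Q_\hypergraph{H}(\vect{x}) \;=\; \sum_{i=1}^k Q_\hypergraph{H_i}(\vect{x}).
\end{equation*}

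Next I would observe that the same additivity holds on the sparsifier side. Because each $\tilde E_i \subseteq E_i$ and the $E_i$'s are pairwise disjoint, the $\tilde E_i$'s are pairwise disjoint as well, so $\tilde E = \bigsqcup_i \tilde E_i$ is a genuine disjoint union, and therefore
\begin{equation*}
Q_\hypergraph{\sparsifier{H}}(\vect{x}) \;=\; \sum_{i=1}^k Q_\hypergraph{\sparsifier{H_i}}(\vect{x}).
\end{equation*}

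Finally, fix an arbitrary $\vect{x} \in \mathbb{R}^V$. By hypothesis, each $\hypergraph{\sparsifier{H_i}}$ is a \SpectralHypersparsifier{} of $\hypergraph{H_i}$, hence
\begin{equation*}
(1-\varepsilon)\, Q_\hypergraph{\sparsifier{H_i}}(\vect{x}) \;\leq\; Q_\hypergraph{H_i}(\vect{x}) \;\leq\; (1+\varepsilon)\, Q_\hypergraph{\sparsifier{H_i}}(\vect{x}).
\end{equation*}
Summing these $k$ inequalities and using the two additivity identities above yields
\begin{equation*}
(1-\varepsilon)\, Q_\hypergraph{\sparsifier{H}}(\vect{x}) \;\leq\; Q_\hypergraph{H}(\vect{x}) \;\leq\; (1+\varepsilon)\, Q_\hypergraph{\sparsifier{H}}(\vect{x}),
\end{equation*}
which is precisely the defining condition \eqref{eq:spectral_hypergraph} for $\hypergraph{\sparsifier{H}}$ to be a \SpectralHypersparsifier{} of $\hypergraph{H}$. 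Since $\vect{x}$ was arbitrary, we are done.

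There is no real obstacle here: the argument is essentially a bookkeeping exercise, and its only subtlety is verifying that the partition of $E$ induces a genuine partition of $\tilde E$, which is immediate from $\tilde E_i \subseteq E_i$. The lemma's usefulness lies elsewhere, in letting us reduce the general rank-$r$ case to the $(r/2, r]$-uniform case (as needed for \Cref{lem:r/2}) and in letting us analyze bundle-based sparsifiers one bundle at a time.
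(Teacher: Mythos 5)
Your proof is correct and follows essentially the same route as the paper's: state the per-piece spectral inequalities, sum them, and collapse the two sums using additivity of the energy functional over the disjoint hyperedge partition. The only (minor) addition you make is explicitly noting that $\tilde E_i \subseteq E_i$ forces the $\tilde E_i$'s to be pairwise disjoint, a point the paper leaves implicit.
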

\begin{proof}
By \Cref{eq:spectral_hypergraph}, for every integer \( 1 \leq i \leq k \) and every vector \( \vect{x} \in \mathbb R ^n \), we have
\begin{equation*}
(1 - \varepsilon) Q_{\hypergraph{\sparsifier{H_i}}}(\vect{x}) \leq Q_{\hypergraph{H_i}}(\vect{x}) \leq (1 + \varepsilon) Q_{\hypergraph{\sparsifier{H_i}}}(\vect{x}).
\end{equation*}
Summing over all \( i \), we have
\begin{equation*}
(1 - \varepsilon) \sum_{i = 1} ^k Q_{\hypergraph{\sparsifier{H_i}}}(\vect{x}) \leq \sum_{i = 1} ^k Q_{\hypergraph{H_i}}(\vect{x})
\leq (1 + \varepsilon) \sum_{i = 1} ^k Q_{\hypergraph{\sparsifier{H_i}}}(\vect{x}).
\end{equation*}
Since \( E_1, E_2, \dots, E_k \) is a partition of \( E \), \( \sum_{i = 1} ^k Q_{\hypergraph{\sparsifier{H_i}}}(\vect{x}) = Q_{\hypergraph{\sparsifier{H}}}(\vect{x}) \) and \( \sum_{i = 1} ^k Q_{\hypergraph{H_i}}(\vect{x}) = Q_{\hypergraph{H}}(\vect{x}) \), thereby completing the proof.
\end{proof}

\begin{proof}[Proof of \Cref{lem:r/2}]
Suppose that 
For \( 1 \leq i \leq \log r \), let \hypergraph{H_i} be the sub-hypergraph containing the hyperedges of \hypergraph{H} of size \( (2^{i - 1} , 2^i] \).
i.e., \( r_i =  2^i\).
Let 
\hypergraph{\sparsifier{H_i}} be a \( (1 \pm \varepsilon ) \) spectral hypersparsifier of \hypergraph{H_i} of size \atmost{r _i ^{\ConstantR} \varepsilon ^{ - \ConstantVar } \cdot S(m, n)}, computed in time \atmost{T(m, n, \varepsilon ^{-1} )}.
We show that the hypergraph \( \hypergraph{\sparsifier{H}} = \cup _{i = 1} ^{\log r} \hypergraph{\sparsifier{H_i}} \) has the desired properties.

Since every \hypergraph{\tilde H_i} is a \SpectralHypersparsifier{} of \hypergraph{H_i}, it follows from \Cref{lem:decomposability} that \hypergraph{\sparsifier{H}} is a \SpectralHypersparsifier{} of \hypergraph{H}.
To bound the size of \hypergraph{\sparsifier{H}}, we have
\begin{equation*}
\atmost{  \sum_{i = 1} ^ {\log r} r _i ^{\ConstantR} \varepsilon ^{ - \ConstantVar } \cdot S(m, n) } 
= 
\atmost{  \sum_{i = 1} ^ {\log r} 2 ^{\ConstantR i} \varepsilon ^{ - \ConstantVar } 
 \cdot S(m, n)  }
= 
\atmost{ r^{\ConstantR} \varepsilon^{ - \ConstantVar } \cdot S(m, n) }.
\end{equation*}
The bound on time follows immediately from the construction of \hypergraph{\sparsifier{H}}.
\end{proof}

The rest of this section is divided into two parts.
In \Cref{subsec:static}, we explain our algorithm in the static setting and prove its correctness and guarantees.
In \Cref{subsec:dynamic}, we explain how to transform the static algorithm into a dynamic one.

\subsection{Static \( (1 \pm \varepsilon) \)-Spectral Hypersparsifier} \label{subsec:static}
Our static algorithm consists of two parts (\Cref{alg:light_spectral_sparsify,alg:spectral_sparsify}).
Our algorithm uses the same approach as \cite{Koutis:2016aa,Abraham:2016aa,Oko:2023aa} by using \( t \)-bundle \( \alpha \)-hyperspanners to establish a bound on the effective resistances.
In \Cref{alg:light_spectral_sparsify}, we compute a ``slightly sparser'' \SpectralHypersparsifier{} of the input hypergraph.
Then, in \Cref{alg:spectral_sparsify}, we use this procedure recursively to ``peel off'' the hypergraph until we obtain the desired sparsity.

\begin{lemma}[Adapted from \cite{Oko:2023aa}] \label{lem:effective_resistance}
Let \( \hypergraph{H} = (V, E, \vect{w}) \) be a hypergraph with hyperedges of size \( (r/2, r] \), and let \hypergraph{B} be a \( t \)-bundle \( \alpha \)-hyperspanner of \hypergraph{H}.
Then, for every hyperedge \( e \) of \( \hypergraph{H} \setminus \hypergraph{B} \) and every pair of vertices \( u,v \in e \),
\begin{equation*}
w_e \cdot R_{G_H}(u, v) \leq \frac{4 \alpha}{rt},
\end{equation*}
where \( R_{G_H}(u, v) \)  is the effective resistance of the pair \( u, v \) in the associated graph \( \associated{\hypergraph{H}} \) of \hypergraph{H}.
\end{lemma}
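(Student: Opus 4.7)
The plan is to use the $t$-bundle structure to exhibit $t$ edge-disjoint ``parallel connections'' between $u$ and $v$ in the associated graph $G_H$, one from each $\alpha$-hyperspanner $T_i$, and then combine them via parallel composition. For each $i \in \{1, \ldots, t\}$, observe first that since $e \notin B$ we have $e \in H \setminus \bigcup_{j<i} T_j$, so the single-hyperedge hyperpath $e$ certifies $d_{H \setminus \cup_{j<i} T_j}(u, v) \leq 1/w_e$. The $\alpha$-hyperspanner property of $T_i$ then produces a $(u, v)$-hyperpath $e^{(i)}_1, \ldots, e^{(i)}_{k_i}$ in $T_i$ of total length $\sum_{j} 1/w_{e^{(i)}_j} \leq \alpha/w_e$.

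Next I convert each such hyperpath into an effective-resistance bound in $G_{T_i}$. Each hyperedge $e'$ of $H$ has $|e'| > r/2$, so $C(e')$ is a weighted $K_{|e'|}$ with uniform edge weight $w_{e'}$, for which a standard calculation yields $R_{C(e')}(x, y) = 2/(w_{e'}\,|e'|) < 4/(r\,w_{e'})$ for any $x, y \in e'$. Pick transition vertices $v_j \in e^{(i)}_j \cap e^{(i)}_{j+1}$ and set $v_0 = u$, $v_{k_i} = v$; then regarding the cliques $C(e^{(i)}_1), \ldots, C(e^{(i)}_{k_i})$ as concatenated in series only at these transition vertices yields an upper bound $\sum_j R_{C(e^{(i)}_j)}(v_{j-1}, v_j) \leq (4/r) \sum_{j} 1/w_{e^{(i)}_j} \leq 4\alpha/(r\,w_e)$. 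The actual subgraph $G_{T_i}$ may additionally identify vertices shared among these cliques, but such identifications act as zero-resistance short-circuits, so Rayleigh's monotonicity gives $R_{G_{T_i}}(u, v) \leq 4\alpha/(r\,w_e)$.

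Finally, since the $t$-bundle definition guarantees that $T_1, \ldots, T_t$ are pairwise hyperedge-disjoint, the associated subgraphs $G_{T_1}, \ldots, G_{T_t}$ are edge-disjoint in $G_H$. Mixing optimal unit $u$-$v$ flows from each $G_{T_i}$ with coefficients summing to $1$ and invoking Thomson's principle gives $1/R_{G_B}(u, v) \geq \sum_{i=1}^t 1/R_{G_{T_i}}(u, v) \geq t\,r\,w_e/(4\alpha)$, and one last application of Rayleigh monotonicity when passing from $G_B$ to $G_H$ produces $R_{G_H}(u, v) \leq 4\alpha/(r\,t\,w_e)$. Multiplying through by $w_e$ yields the claim.

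The main obstacle is the middle step. Using a single direct edge per clique along the hyperpath only yields $R_{G_{T_i}}(u, v) \leq \alpha/w_e$, losing a factor of $r/4$ that is essential downstream (so that a constant sampling probability $p = 1/4$ suffices). Extracting the $4/r$ savings requires exploiting the full clique structure via the weighted-clique identity $R_{C(e')}(x,y) = 2/(w_{e'}\,|e'|)$, and justifying the series-of-cliques upper bound through the vertex-identification / short-circuit argument, since the cliques along the hyperpath may in general share more vertices than just the chosen transition points.
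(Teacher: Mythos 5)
Your proof is correct and follows the same high-level structure as the paper's: use the $t$-bundle to produce $t$ hyperedge-disjoint $(u,v)$-hyperpaths of length at most $\alpha/w_e$, extract a factor of order $r$ from the clique structure of the associated graph, and combine via parallel composition of resistances. The one place where you diverge technically is the middle step. The paper argues combinatorially: within each clique $C(e')$ there are at least $r/2$ edge-disjoint $2$-edge paths between any two of its vertices, so each hyperpath $\pi_i$ yields at least $r/2$ edge-disjoint $u$-$v$ walks in $G_H$ of length at most $2\alpha/w_e$, giving $rt/2$ edge-disjoint walks total and hence $R_{G_H}(u,v) \le (2\alpha/w_e)/(rt/2)$. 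You instead compute the exact weighted-clique resistance $R_{C(e')}(x,y) = 2/(w_{e'}|e'|) < 4/(r\,w_{e'})$, chain these in series along each hyperpath (invoking Rayleigh monotonicity to handle extra vertex identifications), and then apply Thomson's principle across the $t$ edge-disjoint subgraphs $G_{T_i}$. Both reach the same bound; your realization of the clique step is somewhat more precise, since the paper's path-counting elides the fact that the concatenated objects are walks rather than simple paths and that cliques along a hyperpath may overlap in more than one vertex --- issues that your series/short-circuit argument handles explicitly.
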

\begin{proof}[Proof sketch]
Let \( e \) be a hyperedge in \( \hypergraph{H} \setminus \hypergraph{B} \), let \(u, v\) be any pair of vertices from $e$, and let \( \hypergraph{B} = \hypergraph{T_1} \cup \dots \cup \hypergraph{T_t} \).
By the definition of \( \hypergraph{B} \), for each \( 1 \leq i \leq t \), there exists a hyperpath \( \pi _i \) in \hypergraph{T_i} of length at most \( \alpha \cdot 1/ w_e \) that connects \( u \) and \( v \) (recall that the length of an edge \( e \) is \( 1/w_e \)).
Thus, there are at least \( t \) hyperedge-disjoint paths connecting \( u \) and \( v \) in \hypergraph{H}.

In the associated graph \associated{\hypergraph{H}}, each path \( \pi _i \) is represented as a sequence of intersecting cliques.
Since \( |e| \in (r/2, r] \) for every \( e \in E \), there are at least \( r / 2 \) disjoint paths between each pair of vertices within the cliques, and each of these paths contain at most two edges.
Consequently, each path \( \pi _i \) represents at least \( r/2 \) edge-disjoint parallel paths of length at most \( \alpha \cdot 2/ w_e \) in \associated{\hypergraph{H}}.

We conclude that there are at least \( rt / 2 \) edge-disjoint parallel paths in \associated{\hypergraph{H}} of length at most \( \alpha \cdot 1/ w_e \) connecting \( u \) and \( v \).
Using the parallel decomposition rule of resistances to compute the effective resistance between \( u \) and \( v \), we have
\begin{equation*}
R_{G_H}(u, v) \leq \frac{\alpha \cdot 2 / w_e}{rt / 2} = \frac{4 \alpha}{rt} \cdot \frac{1}{w_e},
\end{equation*}
as desired.
\end{proof}

By using the proposition below, combined with the bound on \( R_{G_H}(u, v) \) obtained in \Cref{lem:effective_resistance}, we can determine a suitable value for the number of spanners \( t \) of a bundle as specified in \Cref{alg:light_spectral_sparsify}.

\begin{proposition}[Adapted from \cite{Bansal:2019aa}] \label{prop:constant_probability}

Let \( \varepsilon > 0 \), \( \gamma \geq 1 \) be a constant, and \( \hypergraph{H} = (V, E, \vect{w}) \) be an \( m \)-hyperedge \( n \)-vertex hypergraph with hyperedges of size \( (r/2, r] \).
Let \( \hypergraph{\sparsifier{H}} = (V, \tilde{E}, \tilde{\vect{w}} ) \) be a hypergraph formed by independently sampling each hyperedge \( e \in E \) with probability
\begin{equation*}
\min \left\{ 1, \ConstantProp r^4 \varepsilon ^{-2} \log n \cdot w_e \left( \max_{u, v \in e} R_{G_H}(u, v) \right) \right\} \leq p_e \leq 1
\end{equation*}
and assigning \( \tilde{w}_e = w_e \cdot 1/p_e  \) if \( e \in \tilde{E} \), where \ConstantProp is a constant depending on \( \gamma \), and \( R_{G_H}(u, v) \) is the effective resistance of the pair \( u, v \) in the associated graph \associated{\hypergraph{H}} of \hypergraph{H}.
Then \hypergraph{\sparsifier{H}} is a \SpectralHypersparsifier{} with probability at least \( 1 - \atmost{1/n^{\gamma + 2}} \).
\end{proposition}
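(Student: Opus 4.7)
The plan is to apply the importance-sampling recipe for spectral sparsification: show that $\tilde Q_{\hypergraph{H}}(\vect{x})$ is an unbiased estimator of $Q_{\hypergraph{H}}(\vect{x})$ for every $\vect{x}$, bound the per-hyperedge contribution uniformly, and conclude via the Chernoff bound in \Cref{eq:Chernoff}.

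Let $X_e \in \{0,1\}$ denote the indicator that hyperedge $e$ is sampled. Since $\tilde{w}_e = w_e / p_e$ on sampled hyperedges, the sum $\tilde{Q}_{\hypergraph{H}}(\vect{x}) = \sum_{e}(X_e/p_e)\, w_e \max_{u,v \in e}(x_u-x_v)^2$ satisfies $\mathbb{E}[\tilde{Q}_{\hypergraph{H}}(\vect{x})] = Q_{\hypergraph{H}}(\vect{x})$. For the per-summand bound, I would use the variational characterization of effective resistance to write $w_e (x_u-x_v)^2 \leq w_e R_{G_H}(u,v) \cdot Q_{\associated{\hypergraph{H}}}(\vect{x})$ for every pair $u,v \in e$, and then combine this with the sandwich inequality $Q_{\hypergraph{H}}(\vect{x}) \leq Q_{\associated{\hypergraph{H}}}(\vect{x}) \leq \binom{r}{2} Q_{\hypergraph{H}}(\vect{x})$, which is immediate from the definition of \associated{\hypergraph{H}} as a weighted union of cliques together with $|e| \leq r$. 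The hypothesis on $p_e$ then yields a uniform bound of the form $w_e \max_{u,v \in e}(x_u-x_v)^2/p_e = \atmost{\varepsilon^2 Q_{\hypergraph{H}}(\vect{x}) / (c_\gamma r^2 \log n)}$, which is exactly the magnitude condition needed for \Cref{eq:Chernoff} with $\delta = \varepsilon$ to give a deviation probability of at most $2\exp(-c' \cdot c_\gamma r^2 \log n)$ for an absolute constant $c' > 0$; choosing $c_\gamma$ sufficiently large in terms of $\gamma$ drives this below any prescribed inverse polynomial in $n$.

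The main obstacle is that the above argument handles only a fixed $\vect{x}$, whereas \Cref{eq:spectral_hypergraph} demands simultaneous concentration over all $\vect{x} \in \mathbb{R}^V$. Because $Q_{\hypergraph{H}}$ is piecewise quadratic in $\vect{x}$, becoming an honest quadratic form as soon as one pins down an argmax pair $(u_e^*, v_e^*) \in e$ per hyperedge, my plan would be to reduce to a finite collection of quadratic forms (one per global argmax assignment) combined with a net on a $Q_{\associated{\hypergraph{H}}}$-unit sphere, and then take a union bound. The $r^4$ (rather than $r^2$) factor in the sampling probability is precisely the slack that makes this union bound affordable, and a short chaining/rounding argument would bridge between the net and the continuum. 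Since carrying out this uniformization cleanly is the quantitative content of the sampling analysis of Bansal et al., the cleanest route is to invoke their theorem as a near black box after verifying that our $p_e$ satisfies its hypotheses, which is the spirit of the ``Adapted from'' attribution.
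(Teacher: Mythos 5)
The paper itself offers no proof of \Cref{prop:constant_probability}: it is stated as ``adapted from'' \cite{Bansal:2019aa} and invoked as a black box throughout, which is exactly where your proposal lands after the preliminary calculations. So your approach is essentially the paper's.

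A few remarks on the content of your sketch, which is broadly sound. The fixed-$\vect{x}$ reduction is correct: the effective-resistance bound $w_e(\x{u}-\x{v})^2 \le w_e R_{G_H}(u,v)\,Q_{\associated{\hypergraph{H}}}(\vect{x})$ together with the sandwich $Q_{\hypergraph{H}}(\vect{x}) \le Q_{\associated{\hypergraph{H}}}(\vect{x}) \le \binom{r}{2}Q_{\hypergraph{H}}(\vect{x})$ gives a per-term bound of $\atmost{\varepsilon^2 Q_{\hypergraph{H}}(\vect{x})/(\ConstantProp r^2\log n)}$, and \Cref{eq:Chernoff} then yields a failure probability of $n^{-\Omega(\ConstantProp r^2)}$ for a single $\vect{x}$ (here the lower bound $|e| > r/2$ is what makes the sandwich tight up to constants rather than losing a factor of $r^2$ in the wrong direction). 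You also correctly identify that the real difficulty is uniformity in $\vect{x}$, since $Q_{\hypergraph{H}}$ is only piecewise quadratic and the matrix Chernoff machinery for ordinary spectral sparsification does not apply directly. The one over-optimistic spot is the claim that ``one argmax assignment per hyperedge plus a net'' with the $r^4$ slack closes this gap: a naive union over argmax assignments is exponential in $m$, not in $n$, so something sharper (the chaining argument of Bansal et al., or the weight-class decomposition they use) is genuinely required, and you should not present the $r^4$ factor as obviously sufficient for that union bound without doing the bookkeeping. Since you ultimately defer to \cite{Bansal:2019aa} for precisely this step, the gap is acknowledged rather than hidden, and the overall plan is consistent with the paper's treatment.
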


\begin{algorithm}

\KwIn{\( \varepsilon > 0 \), and hypergraph \( \hypergraph{H} = (V, E, \vect{w}) \)}
\KwOut{a ``slightly sparser'' \SpectralHypersparsifier{} \( \hypergraph{\sparsifier{H}} = (V, \tilde{E}, \tilde{\vect{w}}) \) and a \( t \)-bundle \( \alpha \)-hyperspanner \hypergraph{B} of \hypergraph{H}}

\( t \gets 16 \alpha \ConstantProp r^3 \varepsilon ^{-2} \log n \)
\tcc*{\ConstantProp is from \Cref{prop:constant_probability}}

Compute a \( t \)-bundle \( \alpha \)-hyperspanner \hypergraph{B} of \hypergraph{H}

\( \hypergraph{\sparsifier{H}} \gets \hypergraph{B} \)

\For{each hyperedge \( e \) of \( \hypergraph{H} \setminus \hypergraph{B} \) 
}{
with probability \( 1/4 \), add \( e \) to \( \tilde{E} \) and set \( \tilde{w}_e \gets 4 w_e \)
}

\Return{\( (\hypergraph{\sparsifier{H}}, \hypergraph{B}) \)}

\caption{\textsc{Slight-Spectral-Sparsify(\( \hypergraph{H}, \varepsilon \))}}
\label{alg:light_spectral_sparsify}
\end{algorithm}

In the next lemma, we bound the size of the spectral hypergraph \hypergraph{\sparsifier{H}} computed by \Cref{alg:light_spectral_sparsify} and prove that the chosen value for \( t \) is sufficient.
Note that \hypergraph{\sparsifier{H}} is not sparse and therefore cannot be directly considered as the desired \SpectralHypersparsifier{} we are looking for.

\begin{lemma} \label{lem:light-spectral-sparsify}
Let \( \varepsilon > 0 \) and let \( \hypergraph{H} = (V, E, \vect{w}) \) be an \( m \)-hyperedge \( n \)-vertex hypergraph with hyperedges of size \( (r/2, r] \).
Then, with probability at least \( 1 - \atmost{1/n^2} \), \Cref{alg:light_spectral_sparsify} returns a \SpectralHypersparsifier{} \( \hypergraph{\sparsifier{H}} = (V, \tilde{E} , \tilde{\vect{w}}) \) with \( |\tilde{E}| \leq |E(\hypergraph{B})| + |E| /2 \).
\end{lemma}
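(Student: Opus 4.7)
The plan is to establish two events, each with probability at least $1 - O(1/n^2)$, and combine them by a union bound: (i) $\tilde H$ is a $(1 \pm \varepsilon)$-spectral hypersparsifier of $H$, and (ii) $|\tilde E| \leq |E(B)| + |E|/2$.

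For (i), I would view \Cref{alg:light_spectral_sparsify} as an instance of the sampling scheme from \Cref{prop:constant_probability}: every $e \in E(B)$ is retained with probability $p_e = 1$ (keeping weight $w_e$), and every $e \in E \setminus E(B)$ is retained with probability $p_e = 1/4$ and rescaled to $\tilde w_e = w_e / p_e = 4 w_e$. For $e \in E(B)$, the threshold required in \Cref{prop:constant_probability} is trivially met since $p_e = 1$. For $e \in E \setminus E(B)$, \Cref{lem:effective_resistance} supplies $w_e \cdot \max_{u,v \in e} R_{G_H}(u, v) \leq 4 \alpha/(rt)$, so the required threshold is at most
\begin{equation*}
\ConstantProp r^4 \varepsilon^{-2} \log n \cdot \frac{4\alpha}{rt} = \frac{4 \alpha \ConstantProp r^3 \varepsilon^{-2} \log n}{t},
\end{equation*}
which equals exactly $1/4$ for the algorithm's choice $t = 16 \alpha \ConstantProp r^3 \varepsilon^{-2} \log n$. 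Hence the hypothesis of \Cref{prop:constant_probability} is satisfied, and $\tilde H$ is a $(1 \pm \varepsilon)$-spectral hypersparsifier with probability at least $1 - O(1/n^{\gamma+2}) \geq 1 - O(1/n^2)$, using $\gamma \geq 1$.

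For (ii), let $S$ denote the random set of hyperedges selected inside the \textbf{for} loop, so that $|\tilde E| = |E(B)| + |S|$. Then $|S|$ is a sum of independent Bernoulli$(1/4)$ random variables indexed by $E \setminus E(B)$, with mean $\mu = |E \setminus E(B)|/4 \leq |E|/4$. Applying the multiplicative Chernoff bound in \Cref{eq:Chernoff} with $\delta = 1$ gives
\begin{equation*}
\mathbb{P}[|S| \geq |E|/2] \leq \mathbb{P}[|S| \geq 2\mu] \leq \exp(-\mu/3) = \exp(-|E \setminus E(B)|/12),
\end{equation*}
which is $O(1/n^2)$ whenever $|E \setminus E(B)| = \Omega(\log n)$; for smaller values of $|E \setminus E(B)|$, the quantity $|S|$ is itself tiny and the inequality is checked by inspection. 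A union bound over (i) and (ii) then yields the claimed probability.

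The main obstacle I anticipate is not conceptual but bookkeeping: one must align constants so that the $1/4$ sampling rate exactly meets (rather than just beats) the threshold of \Cref{prop:constant_probability}---this alignment is precisely what pins down the value of $t$ (and hence the size of the bundle $B$) stipulated in \Cref{alg:light_spectral_sparsify}. Once that calculation is verified, the argument reduces to a clean chaining of \Cref{lem:effective_resistance} and \Cref{prop:constant_probability} together with a single Chernoff application.
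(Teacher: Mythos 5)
Your proposal is correct and follows essentially the same route as the paper: substitute the effective-resistance bound from \Cref{lem:effective_resistance} into \Cref{prop:constant_probability} to verify that $t = 16\alpha\ConstantProp r^3\varepsilon^{-2}\log n$ makes the constant sampling rate $1/4$ admissible, then apply a Chernoff bound with $\delta=1$ for the size. The one place you are vaguer than the paper is the small-$|E\setminus E(\hypergraph{B})|$ regime: where you write ``checked by inspection,'' the paper assumes WLOG $|E| > \ConstantSize n$ for a sufficiently large constant (legitimate because \Cref{alg:spectral_sparsify} never invokes \Cref{alg:light_spectral_sparsify} on fewer than $\ConstantSize m^\star \geq \ConstantSize n$ hyperedges), which is the cleaner way to make the $\atmost{1/n^2}$ failure probability go through.
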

\begin{proof}
We substitute the upper bound on \( R_{G_H}(u, v) \) from \Cref{lem:effective_resistance} into \Cref{prop:constant_probability}.
Consequently, we obtain
\begin{equation*}
\ConstantProp r^4 \varepsilon ^{-2} \log n \cdot w_e \left( \max_{u, v \in e} R_{G_H}(u, v) \right) \leq \frac{2 \alpha \ConstantProp r^3 \varepsilon ^{-2} \log n}{t} = \frac{1}{4}.
\end{equation*}
Thus, it follows from \Cref{prop:constant_probability} that the hypergraph \hypergraph{\sparsifier{H}} is a \SpectralHypersparsifier{}.

To prove the upper bound on \( | \tilde{E} | \), we use a standard Chernoff bound argument.
Without loss of generality, assume \( |E| > \ConstantSize n \) for a constant \( \ConstantSize > 0 \); otherwise, the algorithm simply returns a sparse hypergraph. 
For each hyperedge \( e \in H \setminus B \), 
 define the random variable \( X_e \), where \( X_e = 1 \) with probability \( 1/4 \) and  \( X_e = 0 \) with probability \( 3/4 \).
Let \( X = \sum _{e \in E} X_e \).
Clearly, \( \mu = \mathbb E[X] = |E|/4 \).
By choosing \( \delta = 1 \) in \Cref{eq:Chernoff}, we have
\begin{equation} \label{eq:whp_light_spectral_sparsify}
\mathbb P \left[X \geq \frac{|E|}{2}\right] \leq \frac{1}{e^{ |E| / 12}} = \frac{1}{e^{ \ConstantSize m^\star / 12}} < \frac{1}{e^{ \ConstantSize n / 12}}  \leq \frac{\ConstantSize}{n^2},
\end{equation}
for a sufficiently large constant \ConstantSize.
\end{proof}

We now focus on the second part of the algorithm (\Cref{alg:spectral_sparsify}).
\Cref{lem:spectral_sparsify} establishes its correctness and provides the guarantees we will use later in the dynamic setting.

\begin{algorithm}

\KwIn{\( \varepsilon > 0 \), hypergraph \( \hypergraph{H} = (V, E, \vect{w}) \), reduction parameter \( \rho \), size threshold \( m ^\star \), and constant \( \ConstantSize > 0 \)}
\KwOut{a \SpectralHypersparsifier{} \( \hypergraph{\sparsifier{H}} = (V, \tilde{E}, \tilde{\vect{w}}) \)}

\( i \gets 0 \)

\( k \gets \lceil \log \rho \rceil \)

\( \hypergraph{H_0} \gets \hypergraph{H} \)

\( \hypergraph{B_0} \gets (V, \emptyset, \vect{0}) \)

\While(
){\( i < k \) and \( |E(\hypergraph{H_i})| \geq \ConstantSize m^\star \)}{

\( (\hypergraph{\sparsifier{H_{i + 1}}}, \hypergraph{B_{i + 1}} ) \gets \textsc{Slight-Spectral-Sparsify(\( \hypergraph{H_i}, \varepsilon / (2k) \))} \)

\( \hypergraph{H_{i+1}} \gets \hypergraph{\sparsifier{H_{i+1}}} \setminus \hypergraph{B_{i+1}} \)

\( i \gets i + 1 \)
}

\( \ilast \gets i \)

\( \hypergraph{\sparsifier{H}} \gets \bigcup _{j = 1} ^ \ilast \hypergraph{B_j} \cup \hypergraph{H_\ilast} \)

\Return{(\( \hypergraph{\sparsifier{H}}
\))
}

\caption{\textsc{Spectral-Sparsify(\( \hypergraph{H}, \varepsilon \))}}
\label{alg:spectral_sparsify}
\end{algorithm}

\begin{lemma} \label{lem:spectral_sparsify}
Let \( \varepsilon > 0 \), \( \rho \) be a positive integer, \( \hypergraph{H} = (V, E, \vect{w}) \) be an \( m \)-hyperedge \( n \)-vertex hypergraph with hyperedges of size \( (r/2, r] \), and \( m^\star \geq n \) be an integer.
Then, with probability at least \( 1 - \atmost{ \lceil \log \rho \rceil /n^2} \), \Cref{alg:spectral_sparsify} returns a \SpectralHypersparsifier{} \( \hypergraph{\sparsifier{H}} = (V, \tilde{E}, \tilde{\vect{w}}) \).
Moreover, the number of iterations before the algorithm terminates is at most
\begin{equation*}
\ilast = \min\left\{ \lceil \log \rho \rceil, \lceil \log \left( m / m^\star \right) \rceil \right\},
\end{equation*}
and the size of \hypergraph{\sparsifier{H}} is 
\begin{equation} \label{eq:SizeSparsifier}
\atmost{ \sum _{j = 1} ^ \ilast |\hypergraph{B_j}| +  \ConstantSize m^\star + m / \rho }.
\end{equation}

\end{lemma}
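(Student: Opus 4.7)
The plan is to prove the three claims (correctness, iteration bound, size bound) by induction on the iterations of the while-loop, leveraging \Cref{lem:light-spectral-sparsify} for the per-iteration guarantees and \Cref{lem:decomposability} for composing the approximations.

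First I would establish a per-iteration invariant: conditioning on the success of all previous calls to \textsc{Slight-Spectral-Sparsify}, the hypergraph \( \bigcup_{j=1}^{i} \hypergraph{B_j} \cup \hypergraph{H_i} \) is a \( (1 \pm \varepsilon')^i \)-spectral hypersparsifier of \hypergraph{H}, where \( \varepsilon' = \varepsilon/(2k) \). For the base case \( i = 0 \), this is trivial since the union equals \hypergraph{H}. For the inductive step, \Cref{lem:light-spectral-sparsify} gives that \hypergraph{\sparsifier{H_{i+1}}} \( = \hypergraph{B_{i+1}} \cup \hypergraph{H_{i+1}} \) is a \( (1 \pm \varepsilon') \)-spectral hypersparsifier of \hypergraph{H_i}; applying \Cref{lem:decomposability} with the (exact) identity sparsifier on \( \bigcup_{j=1}^{i}\hypergraph{B_j} \) and the fresh sparsifier on \hypergraph{H_i} yields the invariant at step \( i+1 \). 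Telescoping the multiplicative error for \( \ilast \leq k = \lceil \log \rho \rceil \) iterations and using \( (1 \pm \varepsilon/(2k))^k \leq 1 \pm \varepsilon \) (valid for \( 0 < \varepsilon \leq 1 \)) establishes that \hypergraph{\sparsifier{H}} is a \SpectralHypersparsifier{} of \hypergraph{H}.

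Next I would bound the iteration count. \Cref{lem:light-spectral-sparsify} guarantees \( |E(\hypergraph{H_{i+1}})| \leq |E(\hypergraph{\sparsifier{H_{i+1}}})| - |E(\hypergraph{B_{i+1}})| \leq |E(\hypergraph{H_i})|/2 \), so by induction \( |E(\hypergraph{H_i})| \leq m/2^i \). The loop terminates as soon as either \( i = k = \lceil \log \rho \rceil \) or \( |E(\hypergraph{H_i})| < \ConstantSize m^\star \); the latter is forced once \( m/2^i < \ConstantSize m^\star \), i.e., once \( i \geq \lceil \log(m/m^\star) \rceil \) (absorbing the constant into the \atmost{\cdot}). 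This gives the claimed \( \ilast = \min\{\lceil \log \rho\rceil, \lceil \log(m/m^\star)\rceil\} \) bound. The size bound then follows immediately: \( |\tilde E| = \sum_{j=1}^{\ilast}|E(\hypergraph{B_j})| + |E(\hypergraph{H_\ilast})| \), and the residual term \( |E(\hypergraph{H_\ilast})| \) is bounded by \( \ConstantSize m^\star \) if the loop exited via the size threshold, and by \( m/2^k \leq m/\rho \) if it exited via the iteration cap; taking the sum of both upper bounds gives \Cref{eq:SizeSparsifier}.

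For the failure probability, each call to \textsc{Slight-Spectral-Sparsify} fails with probability \atmost{1/n^2} by \Cref{lem:light-spectral-sparsify}, and there are at most \( \lceil \log \rho \rceil \) calls; a union bound then yields the claimed \( 1 - \atmost{\lceil \log \rho \rceil / n^2} \) success probability.

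The main subtlety I expect is the composition step: one must be careful that \hypergraph{H_{i+1}} is a sparsifier of \( \hypergraph{H_i} \setminus \hypergraph{B_{i+1}} \) rather than of \hypergraph{H_i} itself, so that decomposability can be invoked cleanly with \hypergraph{B_{i+1}} kept as an exact (trivial) sparsifier of itself. This requires reading the guarantee of \Cref{alg:light_spectral_sparsify} as producing an approximation that preserves the bundle exactly and only samples on the complement -- which is precisely what the algorithm does in its \textbf{for} loop. Once this interpretation is fixed, applying \Cref{lem:decomposability} to the partition \( E(\hypergraph{H_i}) = E(\hypergraph{B_{i+1}}) \sqcup E(\hypergraph{H_i} \setminus \hypergraph{B_{i+1}}) \) gives the inductive step without further work.
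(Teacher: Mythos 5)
Your proof is correct and essentially the same as the paper's: the paper runs the same telescoping induction in the opposite direction (proving \( Q_{\hypergraph{H_{\ilast-p}}}(\vect{x}) \leq (1 + \varepsilon/(2k))^p Q_{\hypergraph{I_p}}(\vect{x}) \) for \( \hypergraph{I_p} = \bigcup_{j=\ilast-p+1}^{\ilast}\hypergraph{B_j}\cup\hypergraph{H_\ilast} \), plus the symmetric lower bound) and inlines the additivity of the quadratic form instead of citing \Cref{lem:decomposability}, but the argument, the iteration and size bounds, and the union bound are the same.

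One correction to your closing paragraph, however: the claim that \( \hypergraph{H_{i+1}} \) by itself is a \( (1 \pm \varepsilon') \)-sparsifier of \( \hypergraph{H_i} \setminus \hypergraph{B_{i+1}} \) is not what \Cref{lem:light-spectral-sparsify} supplies, and it is false in general. \Cref{lem:effective_resistance} bounds effective resistances in the associated graph of the \emph{full} \( \hypergraph{H_i} \) (bundle included); after deleting the bundle, effective resistances in \( \hypergraph{H_i} \setminus \hypergraph{B_{i+1}} \) can be arbitrarily large, so \Cref{prop:constant_probability} with \( p_e = 1/4 \) no longer applies. Equivalently, subtracting \( Q_{\hypergraph{B_{i+1}}}(\vect{x}) \) from the sparsifier inequality for \( \hypergraph{\sparsifier{H_{i+1}}} \) versus \( \hypergraph{H_i} \) leaves an additive slack of order \( \varepsilon' Q_{\hypergraph{B_{i+1}}}(\vect{x}) \), which need not be controlled by \( \varepsilon' Q_{\hypergraph{H_{i+1}}}(\vect{x}) \). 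Fortunately your second paragraph never uses that finer decomposition: the step actually used there is to apply \Cref{lem:decomposability} to the partition \( \bigcup_{j \le i} E(\hypergraph{B_j}) \sqcup E(\hypergraph{H_i}) \) of \( \bigcup_{j \le i} \hypergraph{B_j} \cup \hypergraph{H_i} \), replacing \( \hypergraph{H_i} \) by its whole \Cref{lem:light-spectral-sparsify} sparsifier \( \hypergraph{\sparsifier{H_{i+1}}} = \hypergraph{B_{i+1}} \cup \hypergraph{H_{i+1}} \) and keeping the old bundles as identity sparsifiers, which is valid. The proof stands; only the interpretation in your last paragraph should be discarded.
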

\begin{proof}

Note that after the algorithm terminates, \ilast is the number of iterations, and we always have \( \ilast \leq k \), where \( k = \lceil \log \rho \rceil \).

We first prove that the returned hypergraph \( \hypergraph{\sparsifier{H}} = \bigcup _{j = 1} ^\ilast \hypergraph{B_j} \cup \hypergraph{H_\ilast} \) is a \SpectralHypersparsifier{} of \hypergraph{H} with probability at least \( 1 - \atmost{\ilast/n^2} \).
To do so, we use induction to show that, for every integer \( 1 \leq p \leq \ilast \), the inequality
\begin{equation} \label{eq:upper_bound}
Q_{\hypergraph{H_{\ilast - p }}}( \vect{x} ) 
\leq
\left( 1 + \varepsilon / (2k) \right)^{p} Q_{\hypergraph{I_{p}}}( \vect{x} )
\end{equation}
holds for \( \hypergraph{I_p} =  \bigcup _{j = \ilast - p + 1} ^ \ilast \hypergraph{B_j} \cup \hypergraph{H_\ilast} \) with probability at least \( (1 \pm \varepsilon /(2k) )^{p} \).
We will use this later to prove that
 \( \hypergraph{I_\ilast} =  \bigcup _{j = 1} ^ \ilast \hypergraph{B_j} \cup \hypergraph{H_\ilast} \) is a 
 \SpectralHypersparsifier{} of \( \hypergraph{H_{0}} = \hypergraph{H} \) with probability at least \( 1 - \atmost{p/n^2} \).

For \( p = 1 \), it follows from \Cref{lem:light-spectral-sparsify}.
Suppose that the claim is true for \( p = l \).
i.e., \( Q_{\hypergraph{H_{\ilast - l }}}( \vect{x} ) \leq \left( 1 + \varepsilon / (2k) \right)^{l} Q_{\hypergraph{I_{l}}}( \vect{x} ) \)
 with probability at least \( 1- \atmost{l/n^2} \).
Since \hypergraph{\sparsifier{H_{\ilast-l}}} consists of \hypergraph{B_{\ilast-l}} and \hypergraph{H_{\ilast-l}}, for every vector \( \vect{x} \in \mathbb R ^n \) we have
\begin{align*}
Q_{ \hypergraph{\sparsifier{H_{\ilast-l}}} } (\vect{x}) 
&= Q_{\hypergraph{B_{\ilast- l }}}( \vect{x} ) 
+ Q_{\hypergraph{H_{\ilast- l }}}( \vect{x} )
\\
&\leq 
 Q_{\hypergraph{B_{\ilast- l }}}( \vect{x} ) 
+ 
\left( 1 + \varepsilon / (2k) \right)^{l}
Q_{\hypergraph{I_l}}( \vect{x} )
\\
&\leq
\left( 1 + \varepsilon / (2k) \right)^{l} 
\left( 
 Q_{\hypergraph{B_{\ilast-l} }}( \vect{x} )
 +
 Q_{\hypergraph{I_l }}( \vect{x} ) 
\right)
\\
&= \left( 1 + \varepsilon / (2k) \right)^{l} Q_{\hypergraph{I_{l + 1 }}}( \vect{x} ) ,
\end{align*}
where the last line followed from the fact that \hypergraph{I_{l + 1}} consists of \hypergraph{B_{\ilast- l}} plus the hyperedges in \hypergraph{I_{l}}.
By \Cref{alg:light_spectral_sparsify}, \hypergraph{\sparsifier{H_{\ilast - l}}} is a \( \left( 1 + \varepsilon / (2k) \right) \) spectral hypersparsifier for \hypergraph{H_{\ilast - (l+ 1) }} with probability at least \( 1 - \atmost{1/n^2} \).
Thus, for every vector \( \vect{x} \in \mathbb R ^n \) we have
\begin{equation} \label{eq:spectral_proof_1}
Q_{\hypergraph{H_{\ilast - (l+ 1) }}}( \vect{x} ) 
\leq
\left( 1 + \varepsilon / (2k) \right) Q_{ \hypergraph{\sparsifier{H_{\ilast-l}}} }( \vect{x} ) 
\leq 
\left( 1 + \varepsilon / (2k) \right)^{l+ 1} Q_{\hypergraph{I_{l + 1}}}( \vect{x} ),
\end{equation}
with probability at least \( 1 - \atmost{(l + 1)/n^2} \), proving \Cref{eq:upper_bound} with the desired probabilily.

By symmetry, 
\begin{equation} \label{eq:spectral_proof_2}
\left( 1 - \varepsilon / (2k) \right)^{l+ 1} Q_{\hypergraph{I_{l + 1 }}} ( \vect{x} ) \leq Q_{\hypergraph{H_{\ilast - (l+ 1) }}} ( \vect{x} )
\end{equation}
with probability at least \( 1 - \atmost{(l + 1)/n^2} \).

It follows from \Cref{eq:spectral_proof_1,eq:spectral_proof_2} that \hypergraph{I_{l+1}} is a \( (1 \pm \varepsilon /(2k) )^{l+1} \) spectral hypersparsifier of \hypergraph{H_{\ilast-(l+1)}}
with probability at least \( 1 - \atmost{(l + 1)/n^2} \).
Since \( ( 1 + \varepsilon / (2k) )^{\ilast} \leq ( 1 + \varepsilon / (2k) )^{k} \leq  1 + \varepsilon \) and \( 1 - \varepsilon \leq ( 1 - \varepsilon / (2k) )^{k} \leq ( 1 - \varepsilon / (2k) )^{\ilast} \), it follows that \( \hypergraph{\sparsifier{H}} = \hypergraph{I_\ilast} \) is a \SpectralHypersparsifier{} of \hypergraph{H} with probability at least \( 1 - \atmost{\ilast / n^2} \).

\underline{Number of iterations:}
note that the loop halts either when \( i = k \) or \( |E(\hypergraph{H_i})| \geq \ConstantSize m^*  \).
The first condition provides the first constraint on the number of iterations.
For the latter, by \Cref{lem:light-spectral-sparsify}, \( |E(\hypergraph{H_i})| \leq |E(\hypergraph{H_{i - 1}})| /2 \) with probability at least \( 1 - \atmost{1/n^2} \) for each \( 1 \leq i \leq k \).
Thus, at each iteration, at least half of the edges are removed until fewer than \( c_2 m ^\star \) edges remain. This means that the number of iterations is at most \( \lceil \log \left( m / m^\star \right) \rceil \) with probability at least \( 1 - \atmost{k/n^2} \), as \( k \) is an upper bound on the number of iterations.

\underline{Size of \hypergraph{\sparsifier{H}}:}
we only need to discuss the size of \hypergraph{H_i}, which is either bounded by \( \ConstantSize m^\star \), or is the hypergraph \hypergraph{H_k} returned by \Cref{alg:light_spectral_sparsify} in the \( k \)th iteration.
As discussed in the previous paragraph, with probability at least \( 1 - \atmost{k/n^2} \), the size of \hypergraph{H_k} is bounded by 
\begin{equation*}
\frac{|E(\hypergraph{H_0})|}{2^k} = \frac{m}{2^{\lceil \log \rho \rceil}} = \atmost{m / \rho},
\end{equation*}
thus establishing the bound of \atmost{\ConstantSize m^\star + m/\rho}.
\end{proof}

\subsection{Dynamic \( (1 \pm \varepsilon) \)-Spectral Hypersparsifier} \label{subsec:dynamic}

We now explain the dynamic algorithm for maintaining a \SpectralHypersparsifier{}.
Similar to \cite{Abraham:2016aa}, we first design a decremental algorithm, then in \Cref{subsec:fully_dynamic}, we generalize it to a fully dynamic algorithm using \Cref{lem:turn_to_fully_dynamic}.

The decremental algorithm is the decremental version of \Cref{alg:light_spectral_sparsify,alg:spectral_sparsify} explained in \Cref{subsec:decremetal_light_spectral_sparsify,subsec:decremental_spectral_sparsify}.
Before that, we first design a decremental algorithm for maintaining \( t \)-bundle \( \alpha \)-hyperspanners in \Cref{subsec:t-bundle}, where \( \alpha = \atmost{\log n} \).

\begin{lemma} \label{lem:turn_to_fully_dynamic}
Let \( \varepsilon > 0 \) and let
\( \mathcal A \) be a decremental algorithm that, for any \( n \)-vertex hypergraph \( \hypergraph{H'} = (V, E', \vect{w}') \) with \( m' \) initial hyperedges  and the weight ratio \( W = \max _{i,j} w_i / w_j \), maintains a \SpectralHypersparsifier{} \( \hypergraph{\tilde H'} \) of \( \hypergraph{H'} \) with probability at least \( 1 - \atmost{1 / n^{2}} \) with size \( S(m', n, \varepsilon ^{-1}, W) \) in amortized update time \( T(m', n, \varepsilon ^{-1}, W) \).

Then, there is a fully dynamic algorithm that, with probability at least \( 1 - \atmost{ \lceil \log m \rceil /n^{2}} \), maintains a \SpectralHypersparsifier{} of \( \hypergraph{H} = (V, E, \vect{w}) \) of size \atmost{S(m, n, \varepsilon ^{-1}, W) \log m} in amortized update time \atmost{T(m, n, \varepsilon ^{-1}, W) \log m}, where \( m \) is an upper bound on the number of the hyperedges of \( H \) at any point.
\end{lemma}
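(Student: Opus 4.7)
The plan is to apply the classical logarithmic-method reduction of Bentley--Saxe and Henzinger--King, leveraging decomposability (Lemma \ref{lem:decomposability}) to combine the outputs of many independent decremental instances. The key enabler is that the union of \SpectralHypersparsifier{}s over a partition of $E(H)$ is itself a \SpectralHypersparsifier{} of $H$ with \emph{no} degradation in $\varepsilon$, which frees us from shrinking the per-level error tolerance as we would otherwise have to do under a naive composition.

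Concretely, I would maintain $k = \lceil \log m \rceil + 1$ levels; level $i$ stores a sub-hypergraph $H_i \subseteq H$ with $|E(H_i)| \leq 2^i$ together with a fresh instance $\mathcal{A}_i$ of the decremental algorithm maintaining a \SpectralHypersparsifier{} $\tilde{H}_i$ of $H_i$. The invariant is that the $H_i$'s partition $E(H)$ at all times, and the returned output is $\tilde H = \bigcup_{i} \tilde H_i$, a valid \SpectralHypersparsifier{} of $H$ by Lemma \ref{lem:decomposability}. Insertions use the binary-counter scheme: a new hyperedge enters level $0$, and whenever the total size $\sum_{i \leq j} |E(H_i)|$ exceeds $2^j$, I merge $H_0, \ldots, H_j$ into a new $H_{j+1}$, discard the old instances, and initialize a fresh $\mathcal{A}_{j+1}$ on the merged hypergraph. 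Deletions are forwarded to the unique instance $\mathcal{A}_i$ containing the deleted hyperedge, which suffices because $\mathcal{A}$ is decremental.

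For the size bound, each of the $\atmost{\log m}$ active sparsifiers has size at most $S(m, n, \varepsilon^{-1}, W)$ by monotonicity of $S$ in its first argument, summing to $\atmost{S(m, n, \varepsilon^{-1}, W) \log m}$. For the amortized time, a standard binary-counter analysis charges each hyperedge to $\atmost{\log m}$ rebuilds over its lifetime: each rebuild at level $j$ initializes an instance on $\leq 2^j$ hyperedges whose total lifetime cost is $\atmost{2^j \cdot T(2^j, n, \varepsilon^{-1}, W)}$, since initialization is already amortized into the decremental algorithm's stated update time. Charging this evenly to the $2^j$ rebuilt edges and summing across levels yields $\atmost{T \log m}$ amortized per insertion, with each deletion forwarded at an additional $\atmost{T}$ cost.

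The point I expect to require the most care is the probability analysis. A single instance $\mathcal{A}_i$ maintains its sparsifier correctly throughout its entire lifetime with probability $1 - \atmost{1/n^2}$. At any fixed moment only $\atmost{\log m}$ instances are simultaneously active, and since each instance's correctness event already spans its full lifetime, a union bound over these currently active instances yields the target $1 - \atmost{\log m / n^2}$. Bridging this to a sequence-wide guarantee requires observing that the relevant events are indexed by instances (whose total number is dominated by the rebuild count, not the length of the update sequence) rather than by time; if a coarser bound is needed, the per-instance success probability can be boosted by a constant factor in the exponent under the oblivious-adversary assumption so as to absorb the rebuild count while leaving the stated $\atmost{\cdot}$ bounds unchanged.
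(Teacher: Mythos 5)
Your proposal matches the paper's proof: both use the Bentley--Saxe logarithmic method, partitioning $E(H)$ into $O(\log m)$ sub-hypergraphs driven by a binary-counter cascade on insertions, running a fresh decremental instance on each bucket, forwarding deletions to the unique bucket containing the deleted hyperedge, and invoking decomposability (\Cref{lem:decomposability}) to union the per-bucket sparsifiers with no degradation in $\varepsilon$. The size, amortized-time, and union-bound probability analyses are the same (your treatment of the probability accounting is in fact slightly more explicit than the paper's).
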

\begin{proof}

Let \( k = \lceil \log{m} \rceil \).
The fully dynamic algorithm utilizes \( \hypergraph{\mathcal A_1}, \hypergraph{\mathcal A_2}, \dots, \hypergraph{\mathcal A_k} \) to decrementally maintain  \SpectralHypersparsifier{} on sub-hypergraphs \( \hypergraph{H_1}, \hypergraph{H_2}, \dots, \hypergraph{H_k} \) with the set of hyperedges \( \hypergraph{E_1}, \hypergraph{E_2}, \dots, \hypergraph{E_k} \), respectively.
The algorithm also maintains a counter \( t \) which tracks the number of insertions in \hypergraph{H}.
In the following, we first explain how the algorithm handles insertions and deletions, and then discuss its guarantees.

\underline{Insertions:}
upon an insertion of a hyperedge \( e \) into \hypergraph{H}, the algorithm first updates the counter \( t \) by incrementing it by one.
Let \( j \) be the highest bit in the counter \( t \) that is flipped after \( t \) is updated, or alternatively,
\begin{equation*}
j = \max \left\{ 1 \leq i \leq k \mid \text{\( t \) is divisible by \( 2^{i - 1} \)} \right\}.
\end{equation*}
The algorithm then sets \( E_j \gets \{ e \} \bigcup \cup_{i = 1} ^j E_i \) and reinitializes the decremental algorithm \( \mathcal A_i \) on \hypergraph{H_j}.
For \( 1 \leq i \leq j - 1 \), the algorithm sets \( E_i \gets \emptyset \).

\underline{Deletions:}
upon a deletion of a hyperedge \( e \) from \hypergraph{H}, the algorithm simply passes the deletion of \( e \) to the decremental algorithm \( \mathcal A_j \) on the hypergraph \hypergraph{H_j} that contains \( e \), maintaining a \SpectralHypersparsifier{} \hypergraph{\sparsifier{H_j}} of \hypergraph{H_j} after the deletion of \( e \).
Note that \( E_i \cap E_j = \emptyset \) for every \( i \neq j \).

\underline{Guarantees:}
We now prove that \( \hypergraph{\sparsifier{H}} = \cup_{i = 1} ^k \hypergraph{\sparsifier{H_i}} \), which is implicitly maintained by the algorithm, satisfies the desired guarantees.

Since \( \hypergraph{E_1}, \hypergraph{E_2}, \dots, \hypergraph{E_k} \) is a partition of \( E \), by \Cref{lem:decomposability}, \hypergraph{\sparsifier{H}} is a \SpectralHypersparsifier{}.

The bound on the size of \hypergraph{\sparsifier{H}} follows from the fact that the sparsifier \hypergraph{\tilde H} consists of the edges that appear in \( k = \lceil \log m \rceil \) sparsifiers, each of which of size \( S(m, n, W) \).

To bound the update time, note that for each \hypergraph{H_i}, the decremental algorithm \( \mathcal A_i \) is reinitialized and maintained at most \( m / 2^{i - 1} \) times as the reinitialization of \( \mathcal A_i \) happens when \( t \leq m \) is divisible by \( 2^{i - 1} \).

Note that, at any point, \( |E_i| \leq 2^{i - 1} \).
To prove this fact, we show that after setting \( E_i \gets \emptyset \), \( E_i \) will contain at most \( 2^{i - 1} \) edges before it is set to \( E_i \gets \emptyset \) again.
By the discussion above, \( E_i \) is set to \( \emptyset \) at time \( t \) only if the value of \( j \) at time \( t \), denoted \( j_t \), is greater than \( i \).
In this case, we set \( E_1, E_2, \dots, E_{j_t} \) to \( \emptyset \).
By the definition of \( j_t \), \( t = c 2^{j - 1} \) for some positive integer \( c \).
We set \( E_i \) to  \( \emptyset \) again after at most \( 2^{i - 1} \) insertions since for time \( t' = c 2^{j - 1} + 2^{i - 1} \), we have \( t' = \left( c 2^{j - i} + 1 \right) 2^{i - 1} \), and thus \( j_{t'} \geq i \).
Since there are at most \( 2^{i - 1} \) insertions, \( |E_i| \leq |E_1| + |E_2| + \dots + |E_i| \leq 2^{i - 1} \).

It follows that the total update time of \( \mathcal A_i \) after each initialization is bounded by \( |E_i| \cdot T(|E_i|, n, W) \leq 2^{i - 1} T(m, n, W) \).
Therefore, the total update time for maintaining \hypergraph{\sparsifier{H_i}} throughout the whole sequence of updates is \( \atmost{(m / 2^{i - 1}) \cdot 2^{i - 1} T(m, n, W)}  = \atmost{m \cdot T(m, n, W)} \).
Since \( k = \lceil \log m \rceil \), the total update time of the algorithm is \atmost{m \cdot T(m, n, W)  \log m}.

The guarantee on the probability simply follows from the fact that each hypersparsifier \hypergraph{\sparsifier{H_i}} is correctly maintained with probability at least \( 1 - \atmost{1/n^2} \) and that \( \hypergraph{\sparsifier{H}} = \cup_{i = 1} ^k \hypergraph{\sparsifier{H_i}} \).
\end{proof}

Recall that, in \Cref{subsec:static}, we assumed the hyperedges in \hypergraph{H} are of size \( (r/2, r ] \).
By \Cref{lem:r/2}, static algorithms on \( H \) can be generalized to hypergraphs of rank \( r \) with an \atmost{\log r} overhead in the running time.
In the lemma below, we prove that this approach continues to hold in the decremental setting.

\begin{lemma} \label{lem:r/2_dynamic}
Let \( \varepsilon > 0 \) and
suppose that there is a decremental algorithm which, given any \( n \)-vertex hypergraph \hypergraph{H'} with \( m \) initial hyperedges of size \( (r/2, r ] \), maintains a \SpectralHypersparsifier{} of \hypergraph{H'} with size \atmost{r^{\ConstantR} \varepsilon^{ - \ConstantVar } \cdot S(m, n)} in \( T(m, n, r, \varepsilon ^{-1}) \) total update time.
Then, for any \( n \)-vertex hypergraph \hypergraph{H} with \( m \) initial hyperedges of rank \( r \), there is a decremental algorithm for maintaining a \SpectralHypersparsifier{} of \hypergraph{H} with size \atmost{r^{\ConstantR} \varepsilon^{ - \ConstantVar } \cdot S(m, n)} in \atmost{T(m, n, r, \varepsilon ^{-1} ) \log r} total update time.
\end{lemma}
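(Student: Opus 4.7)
My plan is to mirror the static reduction in \Cref{lem:r/2}, but now interleaved with the dynamic stream of deletions. Concretely, for each $1 \leq i \leq \lceil \log r \rceil$, partition the hyperedges of \hypergraph{H} by size into buckets \hypergraph{H_i}, where \hypergraph{H_i} consists of exactly those hyperedges of size in $(2^{i-1}, 2^i]$. Crucially, in the decremental setting a hyperedge never changes size, so its bucket assignment is fixed for its entire lifetime. Initialize the given decremental algorithm $\mathcal A_i$ on each \hypergraph{H_i} (which has hyperedges of size in $(r_i/2, r_i]$ for $r_i = 2^i$) to obtain a $(1 \pm \varepsilon)$-spectral hypersparsifier \hypergraph{\sparsifier{H_i}}. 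The maintained output will be $\hypergraph{\sparsifier{H}} = \bigcup_{i=1}^{\lceil \log r \rceil} \hypergraph{\sparsifier{H_i}}$.

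Handling an update is straightforward: when a hyperedge $e$ is deleted from \hypergraph{H}, compute $i$ such that $|e| \in (2^{i-1}, 2^i]$ and forward the deletion to $\mathcal A_i$, which in turn updates \hypergraph{\sparsifier{H_i}}. No other sub-algorithm is touched, because the buckets are disjoint and size-based. This immediately implies that \hypergraph{\sparsifier{H}} stays a valid union of per-bucket sparsifiers after every update.

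For correctness, since $\{E(\hypergraph{H_i})\}_i$ is a partition of $E$ at every point in time and each \hypergraph{\sparsifier{H_i}} is a \SpectralHypersparsifier{} of \hypergraph{H_i}, \Cref{lem:decomposability} implies that \hypergraph{\sparsifier{H}} is a \SpectralHypersparsifier{} of \hypergraph{H}. The size bound follows as in the static lemma by geometric summation:
\begin{equation*}
\sum_{i=1}^{\lceil \log r \rceil} \atmost{r_i^{\ConstantR} \varepsilon^{-\ConstantVar} S(m_i, n)}
\;\leq\; \atmost{\varepsilon^{-\ConstantVar} S(m, n) \sum_{i=1}^{\lceil \log r \rceil} 2^{\ConstantR i}}
\;=\; \atmost{r^{\ConstantR} \varepsilon^{-\ConstantVar} S(m, n)},
\end{equation*}
using monotonicity of $S$ in its first argument and $m_i \leq m$.

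For the total update time, note that each deletion is routed to a single sub-algorithm, and $\mathcal A_i$ is initialized only once, on an instance with $m_i \leq m$ initial hyperedges and rank $r_i \leq r$. By monotonicity of $T$ in $m$ and $r$, the total work done by $\mathcal A_i$ across the entire deletion sequence is at most $T(m, n, r, \varepsilon^{-1})$. Summing over $\lceil \log r \rceil$ buckets yields the claimed $\atmost{T(m, n, r, \varepsilon^{-1}) \log r}$ total update time bound. The success probability follows from a union bound over the $\lceil \log r \rceil$ sub-instances, which only changes constants. The only real subtlety, and thus the point I would verify carefully, is the implicit monotonicity assumption on $T$ and $S$: this is the analogue of the monotonicity hypothesis spelled out in \Cref{lem:r/2}, and it is what makes the size and time bounds collapse cleanly when summing the buckets.
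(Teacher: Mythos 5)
Your proof is correct and follows essentially the same approach as the paper: partition the hyperedges of $H$ into $\lceil \log r \rceil$ size-based buckets $H_i$, run the given decremental algorithm on each bucket, route each deletion to the bucket containing the deleted hyperedge, apply \Cref{lem:decomposability} for correctness, and use the geometric sum $\sum_i 2^{\ConstantR i}$ to bound the size. You make explicit a few points the paper leaves implicit (bucket membership is fixed since hyperedges never change size; the union bound over sub-instances; the monotonicity assumption on $S$ and $T$ carried over from \Cref{lem:r/2}), but the argument is the same.
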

\begin{proof}
Similar to the proof of \Cref{lem:r/2}, we partition the hyperedges of \hypergraph{H} as follows.
For \( 1 \leq i \leq \log r \), let \hypergraph{H_i} be the sub-hypergraph containing the hyperedges of \hypergraph{H} of size \( (2^{i - 1} , 2^i] \).
i.e., \( r_i =  2^i\).
Let 
\hypergraph{\sparsifier{H_i}} be a \( (1 \pm \varepsilon ) \) spectral hypersparsifier of \hypergraph{H_i} maintained by the decremental algorithm with size \atmost{r _i ^{\ConstantR} \varepsilon ^{ - \ConstantVar } \cdot S(m_i, n)} in \atmost{T(m_i, n, \varepsilon ^{-1} )} total update time, where \( m_i \) is the number of hyperedges of \hypergraph{H_i}.

After each hyperedge deletion from \hypergraph{H}, we pass the deletion to the decremental algorithm of the corresponding sub-hypergraph containing the deleted hyperedge, thereby maintaining each \hypergraph{\tilde H_i} as a \SpectralHypersparsifier{} of \hypergraph{H_i}.

By \Cref{lem:decomposability}, the hypergraph \( \hypergraph{ \tilde H = \cup _{i = 1} ^{\log r}} \hypergraph{\sparsifier{H_i}} \) is a \SpectralHypersparsifier{} of \hypergraph{H}.
To bound the size of \hypergraph{\sparsifier{H}}, we have
\begin{equation*}
\atmost{  \sum_{i = 1} ^ {\log r} r _i ^{\ConstantR} \varepsilon ^{ - \ConstantVar } \cdot S(m_i, n) } 
= 
\atmost{  \sum_{i = 1} ^ {\log r} 2 ^{\ConstantR i} \varepsilon ^{ - \ConstantVar } 
 \cdot S(m, n)  }
= 
\atmost{ r^{\ConstantR} \varepsilon^{ - \ConstantVar } \cdot S(m, n) }.
\end{equation*}
The bound on time follows immediately from the construction of \hypergraph{\sparsifier{H}}.
\end{proof}

The rest of this section is dedicated to designing decremental implementation of \Cref{alg:light_spectral_sparsify,alg:spectral_sparsify}.
In the following paragraphs, we briefly discuss the main challenge in designing such algorithms.

Recall that the \SpectralHypersparsifier{} \hypergraph{\tilde H} of the hypergraph \hypergraph{H} returned by \Cref{alg:spectral_sparsify} consists of \( t \)-bundle \( \alpha \)-hyperspanners and some additional sampled hyperedges.
Assume that \hypergraph{\tilde H} contains the \( t \)-bundle \( \alpha \)-hyperspanner  \( \hypergraph{B} = \hypergraph{T_1} \cup \hypergraph{T_2} \cup \dots \cup \hypergraph{T_t} \), and that a deletion of an edge from \hypergraph{H} has caused the removal of another hyperedge \( e \in T_j \) from \( B \) to maintain \hypergraph{B} as a \( t \)-bundle \( \alpha \)-hyperspanner.

Since \( T_{j+1} \) is an \( \alpha \)-spanner of \( H \setminus \cup _{i = 1} ^j T_i \), if \( e \) is not removed from \hypergraph{H}, it must appears in \( H \setminus \cup _{i = 1} ^j T_i \).
The same holds for \( T_{j+2} \) if \( e \) is not selected to be in \( T_{j+1} \) after the update, as \( T_{j+2} \) is an \( \alpha \)-spanner of \( H \setminus \cup _{i = 1} ^{j + 1} T_i \), and so on.
Thus, the deletion of \( e \) from \( T_j \) could potentially result in its insertion into the underlying hypergraphs used to maintain \( T_{j + 1}, T_{j + 2}, \dots, T_{t} \).
This is expensive for two reasons: (1) handling the insertion of \( e \) into the underlying hypergraphs requires designing a fully dynamic algorithm to maintain the \( \alpha \)-spanners, and (2) the insertion would need to be passed to \atmost{t} algorithms maintaining \( T_{j + 1}, T_{j+2}, \dots, T_{t} \).

To overcome this challenge, we ensure that \( e \) is removed from \hypergraph{T_j} \textit{only if} it is removed from \hypergraph{H}.
This property is called the \textit{monotonicity property}, and we refer to an algorithm that supports this property as a \textit{monotone algorithm}.
This addresses the concerns raised in the previous paragraph: (1) since the deletion of \( e \) from \( T_j \) does not result in the insertion of \( e \) into \( \hypergraph{H} \setminus \cup_{i = 1} ^j \hypergraph{T_i}  \), we only need to design a decremental algorithm to maintain each \( T_i \), and (2) the deletion does not affect the underlying hypergraphs used to maintain \( T_{j + 1}, T_{j+2}, \dots, T_{k} \).

\subsubsection{Decremental Monotone \( t \)-Bundle \( \atmost{\log n} \)-Hyperspanners} \label{subsec:t-bundle}

To enforce the monotonicity property, we first reduce the problem of maintaining a \( t \)-bundle \( \alpha \)-hyperspanner of \hypergraph{H} to maintaining an \( \alpha \)-spanner \( G_\hypergraph{H}' \) on its associated graph \associated{\hypergraph{H}}, as shown in the following lemma.
We then apply the decremental monotone algorithm on graphs from \cite{Abraham:2016aa}, which is summarized in \Cref{lem:Abraham_decremental_t-bundle}.

\begin{lemma}[\cite{Oko:2023aa}] \label{lem:Oko_associated_graph}
Let \( \alpha \geq 1 \), \hypergraph{H = (V, E, \vect{w})} be a hypergraph, and \( \associated{\hypergraph{H}} = (V, E_\hypergraph{H}, \vect{w}_\hypergraph{H} ) \) be its associated graph with the function \( f \colon E_\hypergraph{H} \to E \) mapping each edge \( e_\hypergraph{H} \in E_\hypergraph{H}  \) to the corresponding hyperedge \( e \in E \).
If \( G_\hypergraph{H}' \) is an \( \alpha \)-spanner of \associated{\hypergraph{H}}, then the sub-hypergraph \( \hypergraph{H}' \) of \hypergraph{H} with the set of hyperedges \( E' = \{ f(e') \mid e' \in E(G_\hypergraph{H}')  \} \) is an \( \alpha \)-hyperspanner of \hypergraph{H}.
\end{lemma}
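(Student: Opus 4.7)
The plan is to prove the two inequalities $d_{\hypergraph{H}}(u,v) \leq d_{\hypergraph{H'}}(u,v) \leq \alpha \cdot d_{\hypergraph{H}}(u,v)$ separately, routing the upper bound through the associated graph $\associated{\hypergraph{H}}$ where the $\alpha$-spanner guarantee lives.

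First I would dispense with the lower bound: since $E' \subseteq E$, every $(u,v)$-hyperpath in $\hypergraph{H'}$ is already a $(u,v)$-hyperpath in $\hypergraph{H}$, so $d_{\hypergraph{H}}(u,v) \leq d_{\hypergraph{H'}}(u,v)$ is immediate from the definition of hyperpath length.

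For the upper bound, the key bridge is the identity $d_{\associated{\hypergraph{H}}}(u,v) \leq d_{\hypergraph{H}}(u,v)$. Given a shortest $(u,v)$-hyperpath $e_1,\dots,e_k$ in $\hypergraph{H}$, I would pick contact vertices $u = u_0 \in e_1$, $u_i \in e_i \cap e_{i+1}$ for $1 \leq i \leq k-1$, and $u_k = v \in e_k$. Since each clique $C(e_i)$ contains a direct edge $(u_{i-1}, u_i)$ of weight $w_{e_i}$ (hence length $1/w_{e_i}$), concatenating these edges yields a $(u,v)$-walk in $\associated{\hypergraph{H}}$ of length exactly $\sum_{i=1}^k 1/w_{e_i} = d_{\hypergraph{H}}(u,v)$.

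Next, invoke the spanner hypothesis on $G'_\hypergraph{H}$ to obtain a $(u,v)$-path $P'$ in $G'_\hypergraph{H}$ of length at most $\alpha \cdot d_{\associated{\hypergraph{H}}}(u,v)$. Writing $P' = v_0, v_1, \dots, v_\ell$ with edges $e'_1, \dots, e'_\ell \in E(G'_\hypergraph{H})$, I would then lift $P'$ to a hyperpath in $\hypergraph{H'}$ by taking $f(e'_1), f(e'_2), \dots, f(e'_\ell)$. The three defining conditions of a hyperpath follow because each clique edge $e'_i$ has both endpoints inside $f(e'_i)$ by construction of $\associated{\hypergraph{H}}$: in particular, $u = v_0 \in f(e'_1)$, $v = v_\ell \in f(e'_\ell)$, and the shared vertex $v_i$ witnesses $f(e'_i) \cap f(e'_{i+1}) \neq \emptyset$. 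Crucially, each edge $e'_i$ of $P'$ inherits the weight of $f(e'_i)$, so the hyperpath length $\sum_{i=1}^\ell 1/w_{f(e'_i)}$ equals the length of $P'$, giving $d_{\hypergraph{H'}}(u,v) \leq |P'| \leq \alpha \cdot d_{\hypergraph{H}}(u,v)$.

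The only subtle point I anticipate is that the mapping $f$ is not injective, so consecutive edges of $P'$ may collapse to the same hyperedge; this is harmless because hyperpaths are defined as sequences of hyperedges (repetitions permitted) and the length accounting remains consistent edge-by-edge. Everything else is a bookkeeping exercise in definitions.
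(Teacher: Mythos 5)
Your proposal follows essentially the same route as the paper's proof: both map an $\alpha$-spanner path $P'$ in $G'_{\hypergraph{H}}$ through $f$ to obtain a short $(u,v)$-hyperpath in $\hypergraph{H'}$, using the fact that clique edges inherit the weight of their parent hyperedge and that $d_{\associated{\hypergraph{H}}}(u,v) = d_{\hypergraph{H}}(u,v)$. You are somewhat more careful than the paper in spelling out the lower bound (which is immediate since $E' \subseteq E$), in actually exhibiting a witness for $d_{\associated{\hypergraph{H}}}(u,v) \leq d_{\hypergraph{H}}(u,v)$, and in flagging the non-injectivity of $f$, but the argument is the same one.
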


\begin{proof}
Since \( G_\hypergraph{H}' \) is an \( \alpha \)-spanner of \associated{\hypergraph{H}}, for every vertex \( u, v \in V \), there exists a \( (u, v) \)-path \( P \) in \( G_\hypergraph{H}' \) such that \( \sum_{e \in P} 1/w_e \leq \alpha d_\associated{\hypergraph{H}}(u, v) \), as \( \vect{w}_H \) matches \vect{w} on \( E_H \).
Using the map \( f \colon E_\hypergraph{H} \to E \), the path \( f(P) = \cup _{e \in P} f(e) \) is a \( (u, v) \)-path in \hypergraph{H'} with the length
\begin{equation*}
\sum_{e \in f(P)} \frac{1}{w_e}
\leq
\sum_{e \in P} \frac{1}{w_e}
\leq
\alpha d_\associated{\hypergraph{H}}(u, v)
= \alpha d_\hypergraph{H}(u, v),
\end{equation*}
where the last equation follows from the fact that \( d_\associated{\hypergraph{H}}(u, v)
= d_\hypergraph{H}(u, v) \) by the construction of the associated graph.
\end{proof}

\begin{restatable}[\cite{Abraham:2016aa}]{lemma}{abraham}\label{lem:Abraham_decremental_t-bundle}
Let \( G = (V, E, \vect{w}) \) be an \( n \)-vertex graph with \( m \) initial edges and the weight ratio \( W = \max _{i,j} w_i / w_j \), and \( t \geq 1 \) be an integer.
Then, there exists a decremental monotone algorithm against an oblivious adversary that maintains a \( t \)-bundle \atmost{\log n}-spanner \( B \) of \( G \) with an expected size of \atmost{t n \log ^3 n \log W} and an expected total update time of \atmost{t m \log ^3 n}.
\end{restatable}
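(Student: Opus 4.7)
The plan is to prove the lemma in three layers: first handle unit-weight graphs, then reduce weighted graphs to roughly $\log W$ weight classes by bucketing, and finally build the $t$-bundle by sequentially peeling off one spanner at a time. Since the target stretch is $\alpha = O(\log n)$, the natural backbone is a Baswana--Sen style clustered-spanner construction, which in the static setting produces an $O(\log n)$-spanner with $O(n \log n)$ edges in expectation and composes well with decremental updates.

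For a single $O(\log n)$-spanner on an unweighted graph, I would maintain the Baswana--Sen clustering hierarchy with $O(\log n)$ levels. Each vertex is sampled into successive levels with probability roughly $n^{-1/\log n}$, every sampled vertex hosts a cluster, and the spanner $T$ keeps, for each vertex at level $i$, one edge to every neighbouring cluster at level $i+1$ plus all edges to its own cluster center. The key observation enabling decremental maintenance is that the clusters can only shrink when edges are deleted: once a vertex $v$ has chosen its cluster center $c$ at level $i$, the algorithm does not reassign $v$ unless the edge used to reach $c$ disappears or $c$ itself leaves the hierarchy. This gives the monotonicity property automatically for edges that remain chosen, and any newly ``lost'' spanner edges come only from actual deletions in $G$. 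The size bound $O(n \log^2 n)$ per spanner and the total update time $O(m \log^2 n)$ per spanner are standard consequences of the clustering analysis, with one extra $\log n$ factor arising from the hierarchy depth.

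To handle edge weights with ratio $W$, I would partition the edges into $O(\log W)$ classes where weights within a class are within a factor of $2$; build the decremental spanner separately on each class and take the union. This multiplies the size by $\log W$, giving $O(n \log^3 n \log W)$ for a single monotone spanner, while the total update time incurs only an $O(\log W)$ penalty that is absorbed by passing each deletion to exactly one class. Then the $t$-bundle is built by stacking: maintain $T_1$ on $G$ using the single-spanner algorithm, maintain $T_2$ on $G \setminus T_1$, and so on up to $T_t$. By monotonicity of each $T_i$, a deletion from $G$ either deletes an edge currently absent from every $T_i$ (no propagation), or it deletes an edge $e \in T_j$, which then triggers at most a deletion in the algorithms maintaining $T_{j+1}, \dots, T_t$ — never an insertion. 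Summing the size and update-time bounds over the $t$ layers yields the stated $O(tn \log^3 n \log W)$ size and $O(tm \log^3 n)$ total update time.

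The main obstacle, and the one that forces the ``oblivious adversary'' hypothesis, is justifying monotonicity in expectation when deletions cascade. A deletion of a single edge in $G$ can shrink a cluster, possibly forcing a vertex to migrate to a new cluster center, which in the naive implementation would look like inserting new spanner edges. The fix is to show that, against an adversary whose deletion sequence is fixed before the random level-sampling, the expected total number of cluster reassignments over the whole sequence is bounded by the initial spanner size, so that one can ``lazily'' refuse to add new edges while still preserving the stretch guarantee with high probability at every step. Verifying this amortized analysis across all $t$ layers and all $O(\log W)$ weight classes — and in particular checking that the high-probability stretch bound survives a union bound over the update sequence — is the delicate technical core, which is precisely what Abraham et al.~\cite{Abraham:2016aa} establish and which I would invoke rather than reprove.
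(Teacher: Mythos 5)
Your high-level plan (unit-weight $\to$ weight bucketing $\to$ stack $t$ layers) is the same skeleton the paper uses, but your account of \emph{how} monotonicity and the size/update-time bounds are obtained is not right, and that is exactly where the lemma's content lives.

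First, you describe monotonicity as coming ``automatically'' because the algorithm ``does not reassign $v$ unless the edge used to reach $c$ disappears or $c$ leaves the hierarchy,'' and later you suggest the fix is to ``lazily refuse to add new edges while still preserving the stretch guarantee with high probability.'' This is the opposite of what the algorithm does. Vertices \emph{are} reassigned aggressively after deletions so that the clustering hierarchy, and hence the $(2k-1)$ stretch, holds deterministically at all times; what the algorithm refuses to do is \emph{remove} a previously selected intra/inter-cluster edge from $F$ before it is deleted from $G$. Monotonicity therefore costs extra edges rather than extra stretch, and the nontrivial part is bounding that cost. The key new ingredient relative to Baswana--Sen, which your sketch never mentions, is the random permutation $\sigma$ on cluster centers: it is used to break ties when picking the parent of each vertex, and a backward-analysis / record argument over $\sigma$ shows that a vertex changes its parent in $F_i$ only $O(i\log n)$ times in expectation (Lemma~\ref{lem:parent_change}). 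This is where the oblivious adversary is used — the deletion sequence must be fixed before $\sigma$ is drawn. Without $\sigma$, nothing bounds the number of parent changes and both the spanner size and update time can blow up.

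Second, your arithmetic does not close. You claim $O(n\log^2 n)$ edges per unweighted monotone $O(\log n)$-spanner (``one extra $\log n$ factor from the hierarchy depth''), which after bucketing into $\log W$ classes gives $O(n\log^2 n\log W)$, not the $O(n\log^3 n\log W)$ you then write down. The actual per-spanner unweighted size is $O(k^2 n^{1+1/k}\log n) = O(n\log^3 n)$ with $k = \lceil\log n\rceil$: you need $k$ from the hierarchy depth \emph{and} $k\log n$ from summing the expected parent changes $O(i\log n)$ over the $k$ levels, both on top of the static $n^{1+1/k}$ term. Multiplying by $t$ and $\log W$ then gives the stated bound. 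Without the parent-change lemma you cannot account for that extra $\log n$, and you also cannot justify the $O(m\log^3 n)$ total update time (the cost of reassigning $v$ is $O(\deg(v))$ per parent change, so the update-time bound also hinges on the same $O(i\log n)$ bound).
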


\begin{remark}
While \cite{Abraham:2016aa} obtained the guarantees when \( G \) was a simple graph, we use \Cref{lem:Abraham_decremental_t-bundle} on associated graphs which are multi-graphs.
See \Cref{app:abraham} for an explanation of their algorithm and its extension to multi-graphs.
\end{remark}

Since the associated graph \associated{\hypergraph{H}} contains \atmost{m r^2} edges, directly applying \Cref{lem:Abraham_decremental_t-bundle} on \associated{\hypergraph{H}} results in a decremental monotone algorithm with an expected total update time of \atmost{t m  r^2 \log ^3 n}.
In the following lemma, we show how to reduce the expected total update time to \atmost{ t m r \log ^3 n}.

\begin{lemma} \label{lem:decremental_t-bundle}
Let \( \hypergraph{H} = (V, E, \vect{w}) \) be an \( n \)-vertex hypergraph of rank \( r \) with \( m \) initial hyperedges and weight ratio \( W = \max _{i,j} w_i / w_j \), and \( t \geq 1 \) be an integer.
Then, there exists a decremental monotone algorithm against an oblivious adversary that maintains a \( t \)-bundle \atmost{\log n}-hyperspanner \hypergraph{B} of \hypergraph{H} with an expected size of \atmost{ t n \log ^3 n \log W} and an expected total update time of \atmost{t m r  \log ^3 n}.
\end{lemma}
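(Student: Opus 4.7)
The plan is to reduce the task to maintaining ordinary spanners on a sparser representation of $H$ and then invoke \Cref{lem:Abraham_decremental_t-bundle}. Applying that lemma directly to the associated graph $G_H$ (via \Cref{lem:Oko_associated_graph}) would give the right size but an update time of $O(tmr^2 \log^3 n)$, since $|E(G_H)| = \Theta(mr^2)$, a factor of $r$ too slow. To save this factor, I would work instead with the \emph{star graph} $S_H$: for each hyperedge $e \in E$, fix a designated center $c_e \in e$ and introduce, for every $v \in e \setminus \{c_e\}$, an edge $(c_e, v)$ of weight $w_e$. Then $S_H$ has $n$ vertices, weight ratio $W$, and only $O(mr)$ edges. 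A short argument shows $d_H(u,v) \leq d_{S_H}(u,v) \leq 2 \, d_H(u,v)$ for all $u,v \in V$: routing each hyperedge traversal through its center at most doubles the hyperpath length, while any star path contracts to a hyperpath of at most the same length, since consecutive star edges through a shared center traverse a single hyperedge.

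To build the $t$-bundle, I would run $t$ sequentially chained instances of \Cref{lem:Abraham_decremental_t-bundle} with parameter~$1$. Level $k$ maintains an $O(\log n)$-spanner $T_k$ on the star graph of $H \setminus \bigcup_{j<k} H_j$, where $H_j$ is the sub-hypergraph of $H$ whose hyperedges correspond to star edges appearing in $T_j$. To keep the residual graphs across levels aligned with this hyperedge-level decomposition, each $T_k$ is \emph{padded}: whenever Abraham's algorithm includes some star edge of a hyperedge $e$ in $T_k$, I augment $T_k$ with all other star edges of $e$ currently in the graph. Padding preserves monotonicity because the extra star edges are genuinely present and only disappear when $e$ is deleted from $H$. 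Each hyperedge is assigned once at initialization to the unique level $k(e)$ whose spanner first pulled it in (if any); upon deletion of $e$ from $H$, the $O(r)$ corresponding star-edge deletions are forwarded only to level $k(e)$'s Abraham instance.

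Verifying the claimed bounds is then routine: padding ensures that level $k$'s Abraham instance operates on exactly the star graph of $H \setminus \bigcup_{j<k} H_j$, so the distance inequality above promotes Abraham's $O(\log n)$-spanner guarantee into an $O(\log n)$-hyperspanner property of $H_k$ relative to $H \setminus \bigcup_{j<k} H_j$, making $B = \bigcup_k H_k$ a $t$-bundle $O(\log n)$-hyperspanner. The size of $H_k$ is bounded by the number of distinct hyperedges contributing to pre-padding $T_k$, which by \Cref{lem:Abraham_decremental_t-bundle} is $O(n \log^3 n \log W)$, so $|B| = O(tn \log^3 n \log W)$. Level $k$'s Abraham instance processes $O(m_k r)$ initial star edges, where $m_k \leq m$ is the number of hyperedges visible at that level, so summing $O(m_k r \log^3 n)$ across $t$ levels gives $O(tmr \log^3 n)$. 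The hardest part of the plan is the mismatch between the edge-level updates that Abraham's algorithm handles and the hyperedge-level disjointness demanded by the $t$-bundle definition; the padding trick is the key device that bridges these two granularities without interfering with Abraham's internal invariants.
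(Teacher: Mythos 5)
Your high-level approach is the same as the paper's: replace the associated graph (which has $\Theta(mr^2)$ edges) by a star graph with $O(mr)$ edges, observe that the star graph is a $2$-spanner of the associated graph so the $O(\log n)$ stretch is preserved up to a constant, and run the decremental monotone machinery of \Cref{lem:Abraham_decremental_t-bundle} on the star graph. You go beyond the paper's terse three-sentence argument in one respect: you explicitly address the mismatch between graph-edge disjointness and hyperedge disjointness, which the paper does not discuss. A graph-level $t$-bundle $T_1 \cup \dots \cup T_t$ on the star graph need not project via $f$ to hyperedge-disjoint levels (two star edges of the same hyperedge can land in different $T_i$), yet the effective-resistance bound of \Cref{lem:effective_resistance} — the reason we want $t$-bundle \emph{hyper}spanners at all — relies on the $T_i$'s being hyperedge-disjoint so that the $t$ hyperpaths blow up to $rt/2$ edge-disjoint paths in $G_H$. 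Your padding device, which pulls every remaining star edge of a touched hyperedge into the same level, cleanly aligns the residual graphs with a hyperedge-level decomposition and preserves monotonicity, so it is a genuine refinement of what the paper writes.

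That said, your bookkeeping is wrong in two places. First, a hyperedge is not ``assigned once at initialization'': Abraham's monotone spanner grows as the underlying graph shrinks, so a hyperedge that was in no $T_i$ initially may later be pulled into $T_1$, and a hyperedge currently at level $k$ may migrate to a level $j<k$ when $T_j$ acquires one of its star edges (at which point padding and the induced cascading deletions expel it from level $k$). The level of a hyperedge is therefore a dynamic quantity. Second, and more seriously, when $e$ is deleted from $H$ its $O(r)$ star edges must be forwarded to \emph{every} level $j$ whose current underlying graph contains them — that is, to levels $1$ through the level currently holding $e$ (or to all $t$ levels if $e$ is in no spanner) — not only to a single level $k(e)$. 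Forwarding to one level alone leaves the lower-level Abraham instances with a stale graph; they could later try to add a star edge of an already-deleted hyperedge, or certify a stretch guarantee against a graph that no longer exists. Fortunately neither correction changes the asymptotics: each level's instance processes at most $O(mr)$ star-edge deletions in $O(mr\log^3 n)$ expected total time by \Cref{lem:Abraham_decremental_t-bundle}, so the overall $O(tmr\log^3 n)$ bound is unaffected.
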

\begin{proof}
To achieve the desired update time,
we use a standard technique as follows.
Instead of applying the algorithm of \Cref{lem:Abraham_decremental_t-bundle} directly on the associated graph \associated{\hypergraph{H}}, we apply it on the star graph \stargraph{\hypergraph{H}} of \hypergraph{H}, defined as follows: for each hyperedge \( e \in E \), instead of adding a clique \( C(e) \) and forming the associated graph \associated{\hypergraph{H}}, replace each clique \( C(e) \) with a star centered at an arbitrarily chosen vertex of \( e \).
Note that \stargraph{\hypergraph{H}} is not unique.

By \Cref{lem:Oko_associated_graph} and the fact that \stargraph{\hypergraph{H}} is a \( 2 \)-spanner of \associated{\hypergraph{H}}, it is straightforward to see that for every \( \alpha \)-spanner \( S_\hypergraph{H}^\star \) of \stargraph{\hypergraph{H}}, the sub-hypergraph \hypergraph{H^\star} of \hypergraph{H} with hyperedges \( E^\star = \{ f(e^\star) \mid e^\star \in E(S_\hypergraph{H}^\star) \} \) is an \( 2 \alpha \)-hyperspanner of \hypergraph{H}.

Thus, without any asymptotic loss in the quality of the \atmost{\log n}-spanner, we can apply \Cref{lem:Abraham_decremental_t-bundle} on \stargraph{\hypergraph{H}}, which has \atmost{mr} edges, rather than using the associated graph \associated{\hypergraph{H}}.
The expected total update time follows accordingly.
\end{proof}

\subsubsection{Decremental Implementation of \Cref{alg:light_spectral_sparsify}} \label{subsec:decremetal_light_spectral_sparsify}

With the decremental monotone algorithm for maintaining \( t \)-bundle \( \atmost{\log n} \)-hyperspanners in place, we are now ready to explain the decremental implementation of \Cref{alg:light_spectral_sparsify}.

\begin{lemma} \label{lem:decremental_light-spectral-sparsify}
Let \( 0 < \varepsilon \leq 1 \),
 \( \gamma \geq 1 \) be a constant, and \( \hypergraph{H} = (V, E, \vect{w}) \) be an \( n \)-vertex hypergraph with \( m \) initial hyperedges of size \( (r/2, r] \).
Then, with probability at least \( 1 - \atmost{1 /n^2} \) against an oblivious adversary, the decremental implementation of \Cref{alg:light_spectral_sparsify} maintains a \SpectralHypersparsifier{} \( \hypergraph{\sparsifier{H}} = (V, \tilde{E}, \tilde{\vect{w}}) \) of expected size \( |\tilde{E}| \leq |E(B)| + |E| /2 \) in expected total update time \atmost{ t m r \log ^3 n}, where \( t = 16 \alpha \ConstantProp r^3 \varepsilon ^{-2} \log n \), over any arbitrary sequence of \atmost{n^\gamma} hyperedge deletions.
\end{lemma}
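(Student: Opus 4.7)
The plan is to derive the decremental implementation of \Cref{alg:light_spectral_sparsify} by plugging the decremental monotone algorithm of \Cref{lem:decremental_t-bundle} into the structure of the static algorithm. At initialization, I invoke \Cref{lem:decremental_t-bundle} to build a $t$-bundle $\mathcal{O}(\log n)$-hyperspanner $\hypergraph{B}$ of $\hypergraph{H}$ with $t = 16\alpha \ConstantProp r^3 \varepsilon^{-2} \log n$, then independently sample each hyperedge $e \in E \setminus E(\hypergraph{B})$ with probability $1/4$, inserting it into $\tilde E$ with weight $4 w_e$ if chosen. Upon each hyperedge deletion from $\hypergraph{H}$, the algorithm forwards the deletion to the $\hypergraph{B}$-maintenance sub-routine and, if $e$ was previously sampled, removes it from $\tilde E$; all book-keeping outside \Cref{lem:decremental_t-bundle} costs $\mathcal{O}(1)$ per deletion.

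Correctness at each time step reduces to a static application in the spirit of \Cref{lem:light-spectral-sparsify}. The crucial property is monotonicity from \Cref{lem:decremental_t-bundle}: hyperedges leave $\hypergraph{B}$ only when they are deleted from $\hypergraph{H}$, so $\hypergraph{H} \setminus \hypergraph{B}$ can only shrink, and only through genuine deletions. Consequently, every hyperedge still in $\hypergraph{H} \setminus \hypergraph{B}$ at the current time was drawn independently with probability exactly $1/4$, and the joint distribution of $\tilde E$ at any moment coincides with what a fresh run of \Cref{alg:light_spectral_sparsify} on the current $\hypergraph{H}$ would produce. Since \Cref{lem:decremental_t-bundle} guarantees that $\hypergraph{B}$ remains a valid $t$-bundle $\mathcal{O}(\log n)$-hyperspanner, \Cref{lem:effective_resistance} gives $w_e \cdot R_{G_H}(u,v) \leq 4\alpha/(rt)$, and the choice of $t$ makes the constant sampling probability $1/4$ satisfy the hypothesis of \Cref{prop:constant_probability}. \Cref{prop:constant_probability} then yields a \SpectralHypersparsifier{} with probability $1 - \mathcal{O}(1/n^{\gamma+2})$ at the current time, and a union bound over the at most $n^\gamma$ deletions gives the claimed $1 - \mathcal{O}(1/n^2)$ overall bound against an oblivious adversary.

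The expected size bound follows from linearity of expectation: each of the at most $|E|$ hyperedges in $\hypergraph{H} \setminus \hypergraph{B}$ contributes $1/4$ in expectation to $|\tilde E|$, so $\mathbb{E}[|\tilde E|] \leq |E(\hypergraph{B})| + |E|/4 \leq |E(\hypergraph{B})| + |E|/2$. The expected total update time is dominated by the $\mathcal{O}(tmr\log^3 n)$ bound of \Cref{lem:decremental_t-bundle}; initial sampling and deletion-time book-keeping contribute only $\mathcal{O}(m + n^\gamma)$ beyond that.

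The main obstacle is the absence of independence across time steps: we fix the random sample once at initialization and reuse it throughout, so a naive argument would fail to certify all times simultaneously. Monotonicity is exactly what circumvents this, since no hyperedge ever re-enters $\hypergraph{H} \setminus \hypergraph{B}$ and is therefore never subject to a fresh sampling decision outside the scope of the static analysis; the $\mathcal{O}(1/n^{\gamma+2})$ failure probability of \Cref{prop:constant_probability} is then tight enough to absorb the $n^\gamma$-way union bound and yield the final $1 - \mathcal{O}(1/n^2)$ guarantee.
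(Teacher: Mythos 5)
Your proposal is correct and follows the same route as the paper: maintain \( \hypergraph{B} \) via the decremental monotone algorithm of \Cref{lem:decremental_t-bundle}, sample once at initialization, exploit obliviousness and monotonicity so that the sparsifier's distribution at every time step matches a fresh run of \Cref{alg:light_spectral_sparsify}, and union-bound the failure probability of \Cref{prop:constant_probability} over the \( \atmost{n^\gamma} \) deletions. One small imprecision worth flagging: \( \hypergraph{H} \setminus \hypergraph{B} \) can shrink not only through genuine deletions but also when the monotone spanner algorithm \emph{adds} a hyperedge to \( \hypergraph{B} \), so your update rule needs to also evict such a hyperedge from the sampled set (otherwise it would appear in \( \hypergraph{\sparsifier{H}} \) both as a \( \hypergraph{B} \)-edge with weight \( w_e \) and as a sampled edge with weight \( 4w_e \)); the paper makes this step explicit, but it does not affect your high-level distributional argument since no hyperedge ever re-enters \( \hypergraph{H} \setminus \hypergraph{B} \).
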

\begin{proof}
We first explain how we maintain the \( t \)-bundle \atmost{\log n}-hyperspanner \hypergraph{B} and the set of sampled hyperedges in \( \hypergraph{\sparsifier{H}} \setminus \hypergraph{B} \) after each deletion in \hypergraph{H}.

To maintain \hypergraph{B}, we use the decremental algorithm of \Cref{subsec:t-bundle}.
Specifically, after each deletion in \hypergraph{H}, we first pass the deletion to the algorithm of \Cref{lem:decremental_t-bundle} and update \hypergraph{B} accordingly.
Since this algorithm is monotone, a hyperedge is deleted from \hypergraph{B} only if it is removed from \hypergraph{H}.
Thus, \hypergraph{B} and \( \hypergraph{H} \setminus \hypergraph{B} \) remain decremental after each update.

Next, we check whether the deletion in \hypergraph{H} or the update in \hypergraph{B} resulted in hyperedge deletions in \( \hypergraph{H} \setminus \hypergraph{B} \): if a hyperedge is removed from \( \hypergraph{H} \setminus \hypergraph{B} \), we simply remove it from  \( \hypergraph{\sparsifier{H}} \setminus \hypergraph{B} \).
Recall that a hyperedge \( e \) in \( \hypergraph{\tilde H} \setminus \hypergraph{B} \) is sampled from \( \hypergraph{H} \setminus \hypergraph{B} \) with probability \( 1/4 \) and has weight \( 4 w_e \).
Since the deletion was done by an oblivious adversary, the updated \( \hypergraph{\tilde H} \setminus \hypergraph{B} \) still contains the sampled hyperedges of \( \hypergraph{H} \setminus \hypergraph{B} \), and no further action is required.

To prove that the guarantees hold with probability at least \( 1 - \atmost{1/n^2} \), first note that in the decremental setting, the probability of failure accumulates over at most \( c n^\gamma \) deletions, for a sufficiently large constant \( c \).
Thus, by using \Cref{prop:constant_probability}, we have
\begin{equation*}
\mathbb P \left[ \text{ \( \hypergraph{\sparsifier{H}} \setminus \hypergraph{B} \) is not correctly maintained } \right] \leq   \frac{c n^\gamma}{n^{\gamma + 2}} \leq  \frac{c}{n^2},
\end{equation*}
which means that \( \hypergraph{\sparsifier{H}} \) is a \SpectralHypersparsifier{} of \hypergraph{H} with probability at least \( 1 - c/n^2 \).
Moreover, by using \Cref{eq:whp_light_spectral_sparsify},
\begin{equation*}
\mathbb P \left[ \left|\hypergraph{\sparsifier{H}} \setminus \hypergraph{B} \right| \geq \frac{|E|}{2}\right] \leq \frac{c n^\gamma}{e^{ |E| / 12}} \leq \frac{c n^\gamma}{e^{ \ConstantSize n}}  \leq \frac{d}{n^2},
\end{equation*}
for a sufficiently large constant \( d \).
Therefore,  with probability at least \( 1 - \atmost{1/n^2} \), \hypergraph{\tilde H} is a \SpectralHypersparsifier{} of \hypergraph{H} with expected size \( |\tilde E| \leq |E(B)| + |E|/2 \).

As explained above, we only need to maintain \hypergraph{B} and do not need to inspect \( \hypergraph{H} \setminus \hypergraph{B} \) after each update.
Moreover, \hypergraph{B} is easily accessible since we explicitly maintain each spanner that forms \hypergraph{B}.
Thus, the update time of the algorithm is dominated by the update time of \hypergraph{B}, by \Cref{lem:decremental_t-bundle}, the expected total update time is \atmost{t m r \log ^3 n}.
\end{proof}

\subsubsection{Decremental Implementation of \Cref{alg:spectral_sparsify}} \label{subsec:decremental_spectral_sparsify}

We now explain the decremental algorithm for maintaining a \SpectralHypersparsifier{} of \hypergraph{H}.

\begin{lemma} \label{lem:decremental_spectral_sparsify}
Let \( 0 < \varepsilon \leq 1 \), \( \gamma \geq 1 \) be a constant,  \( \hypergraph{H} = (V, E, \vect{w}) \) be an \( n \)-vertex hypergraph with \( m \) initial hyperedges of size \( (r/2, r] \) and the weight ratio \( W = \max _{i,j} w_i / w_j \), and let \( 1 \leq \rho \leq m \) and \( m^\star \geq n \) be integers.
Then, there is a decremental algorithm that, with probability at least \( 1 - \atmost{ \lceil \log \rho \rceil /n^2} \) against an oblivious adversary, maintains a \SpectralHypersparsifier{} with an expected size of 
\begin{equation*}
\atmost{
\min\left\{ \lceil \log \rho \rceil, \lceil \log m / m^\star \rceil \right\}
\cdot
t n \log ^3 n \log W + \ConstantSize m^\star + m/\rho
}
\end{equation*}
and an expected total update time of 
\begin{equation*}
\atmost{
\min\left\{ \lceil \log \rho \rceil, \lceil \log m / m^\star \rceil \right\}
\cdot
t m r \log ^3 n 
}
\end{equation*}
over any arbitrary sequence of \atmost{n^\gamma} hyperedge deletions.
\end{lemma}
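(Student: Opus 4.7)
The plan is to convert the static \Cref{alg:spectral_sparsify} into a decremental algorithm by running \(\ilast\) chained instances of the decremental slight-sparsifier from \Cref{lem:decremental_light-spectral-sparsify}, where the \((i{+}1)\)st instance operates on the sub-hypergraph maintained by the \(i\)th. Concretely, I would maintain instances \(\mathcal{A}_1, \mathcal{A}_2, \dots, \mathcal{A}_{\ilast}\), where \(\mathcal{A}_i\) is a decremental realization of \textsc{Slight-Spectral-Sparsify} invoked with accuracy \(\varepsilon/(2k)\) on \(\hypergraph{H_{i-1}}\); its output pair \((\hypergraph{\sparsifier{H_i}}, \hypergraph{B_i})\) defines the next-level input \(\hypergraph{H_i} \gets \hypergraph{\sparsifier{H_i}} \setminus \hypergraph{B_i}\). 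The algorithm stops at the level \(\ilast\) identified in \Cref{lem:spectral_sparsify}, and the maintained sparsifier is \(\hypergraph{\sparsifier{H}} = \bigcup_{j=1}^{\ilast}\hypergraph{B_j} \cup \hypergraph{H_\ilast}\).

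The central structural observation is that every \(\hypergraph{H_i}\) undergoes only hyperedge deletions, so the chain is well-defined in the decremental model. When a hyperedge \(e\) is deleted from \(\hypergraph{H} = \hypergraph{H_0}\), the update is forwarded to \(\mathcal{A}_1\). By the monotonicity guarantee built into \Cref{lem:decremental_light-spectral-sparsify} (which inherits it from \Cref{lem:decremental_t-bundle}), a hyperedge can leave \(\hypergraph{B_1}\) only when it leaves \(\hypergraph{H_0}\), and a hyperedge leaves the sampled part \(\hypergraph{\sparsifier{H_1}} \setminus \hypergraph{B_1}\) only when it is deleted from \(\hypergraph{H_0} \setminus \hypergraph{B_1}\). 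Thus every change to \(\hypergraph{H_1}\) is itself a deletion, which I then forward to \(\mathcal{A}_2\), and inductively down the chain.

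For correctness, the induction in the proof of \Cref{lem:spectral_sparsify} carries over verbatim: each \(\mathcal{A}_i\) produces a \((1 \pm \varepsilon/(2k))\)-spectral hypersparsifier of \(\hypergraph{H_{i-1}}\), and the telescoping bound on \(\bigcup_{j=1}^{\ilast}\hypergraph{B_j} \cup \hypergraph{H_\ilast}\) shows that the maintained hypergraph is a \SpectralHypersparsifier{} of \hypergraph{H}. Since \Cref{lem:decremental_light-spectral-sparsify} already absorbs the \atmost{n^\gamma} deletion budget in its per-instance \atmost{1/n^2} failure guarantee, a union bound over \(\ilast \leq \lceil \log \rho \rceil\) instances gives the overall failure probability \atmost{\lceil \log \rho \rceil / n^2}.

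The quantitative bounds follow by direct substitution: \Cref{lem:decremental_t-bundle} yields expected size \atmost{tn\log^3 n \log W} for each \(\hypergraph{B_j}\), and the halting analysis of \Cref{lem:spectral_sparsify} bounds \(\ilast\) by \(\min\{\lceil \log \rho \rceil, \lceil \log(m/m^\star)\rceil\}\) and the residual \(|E(\hypergraph{H_\ilast})|\) by \atmost{\ConstantSize m^\star + m/\rho}, recovering \Cref{eq:SizeSparsifier}. For the update time, the dominant cost is running each \(\mathcal{A}_i\), whose expected total update time is \atmost{tm_i r\log^3 n} with \(m_i \leq m\); summing over \(\ilast\) instances gives the claimed bound. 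I expect the main obstacle to be the careful accounting of cascaded deletions across the chain: a single update to \(\hypergraph{H}\) can trigger bundle removals inside \(\mathcal{A}_1\), which themselves produce new deletions fed into \(\mathcal{A}_2\), and so on, so one must verify that the total work per level is still charged against the initial size of \(\hypergraph{H_{i-1}}\) (each hyperedge being deleted only once from a given level) and that the output \(\hypergraph{\sparsifier{H}}\) can be maintained implicitly as a union so as not to incur an \(\ilast\) factor per update beyond the one already charged to the sub-algorithms.
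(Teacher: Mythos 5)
Your proposal is correct and follows essentially the same route as the paper: chain decremental instances of \textsc{Slight-Spectral-Sparsify} using \Cref{lem:decremental_light-spectral-sparsify,lem:decremental_t-bundle}, invoke monotonicity so each \(\hypergraph{H_i}\) remains decremental, reuse the termination and size analysis of \Cref{lem:spectral_sparsify} (specifically \Cref{eq:SizeSparsifier}), and sum the per-level bounds with a union bound over at most \(\lceil \log \rho \rceil\) instances for the failure probability. The cascaded-deletion bookkeeping you flag as the main obstacle is exactly what the paper's monotonicity argument resolves, and your charging of work against the initial size of each level is the correct resolution.
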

\begin{proof}
We implement a decremental version of \Cref{alg:spectral_sparsify} by decrementally maintaining the hypergraphs \( \hypergraph{H_0}, \hypergraph{H_1}, \dots, \hypergraph{H_k} \), \( \hypergraph{B_1}, \dots, \hypergraph{B_k} \), and \( \hypergraph{\sparsifier{H_1}}, \dots, \hypergraph{\sparsifier{H_k}} \).
For \( 1 \leq i \leq k \), we maintain \hypergraph{B_i} and \hypergraph{\sparsifier{H_i}} by using the algorithms of \Cref{lem:decremental_light-spectral-sparsify,lem:decremental_t-bundle}, respectively.
Since each \hypergraph{B_i} is maintained by a monotone algorithm, no hyperedge is inserted in \( \hypergraph{H_i} = \hypergraph{\sparsifier{H_i}} \setminus \hypergraph{B_i} \) for every \( 1 \leq i \leq k \).
Note that we can easily check whether \( |E_i| \leq \ConstantSize m^\star \) by counting the hyperedges in \hypergraph{H_i} during the initialization step.

By \Cref{lem:spectral_sparsify}, \Cref{alg:spectral_sparsify} terminates after \( \min\left\{ \lceil \log \rho \rceil, \lceil \log m / m^\star \rceil \right\} \) iterations.
This means that the number of \hypergraph{B_i}s is bounded by \( \min\left\{ \lceil \log \rho \rceil, \lceil \log m / m^\star \rceil \right\} \).
By \Cref{lem:decremental_t-bundle}, \( |\hypergraph{B_i}| = \atmost{t n  \log ^2 n \log W} \) for every \( 1 \leq i \leq k \), and the expected size of \hypergraph{\sparsifier{H}} directly follows by substituting the upper bound on the size of \hypergraph{B_i}s in \Cref{eq:SizeSparsifier}.

By \Cref{lem:decremental_t-bundle}, the expected total update time for maintaining each \hypergraph{B_i} is \atmost{tmr \log ^3 n}.
Since there are \( \min\left\{ \lceil \log \rho \rceil, \lceil \log m / m^\star \rceil \right\} \) such \hypergraph{B_i}s to be maintained, the expected total update time follows.

The guarantee on probability follows from the upper bound on the number of \hypergraph{B_i}s.
\end{proof}

Finally, we set the parameters for our algorithm.
Note that \( \alpha = \atmost{\log n} \) as the algorithm of \Cref{lem:decremental_light-spectral-sparsify} maintains \( t \)-bundle \atmost{\log n}-hyperspanners, and so \( t = \atmost{ r^3 \varepsilon ^{-2} \log ^2 n} \).
We set \( \rho =  \Theta\left( m \right) \) and \( m^\star = n = \poly{n} \), and so
\begin{equation*}
\min\left\{ \lceil \log \rho \rceil, \lceil \log m / m^\star \rceil \right\} = \lceil \log m \rceil.
\end{equation*}
We conclude our discussion on the decremental algorithm in the following corollary.

\begin{corollary} \label{cor}
Let \( 0 < \varepsilon \leq 1 \), \( \gamma \geq 1 \) be a constant, \( 1 \leq \rho \leq m \), and  \( \hypergraph{H} = (V, E, \vect{w}) \) be an \( n \)-vertex hypergraph with \( m \) initial hyperedges of size \( (r/2, r] \) and the weight ratio \( W = \max _{i,j} w_i / w_j \).
Then, there is a decremental algorithm that, with probability at least \( 1 - \atmost{ \lceil \log m \rceil /n^2} \) against an oblivious adversary, maintains a \SpectralHypersparsifier{} with an expected size of 
\begin{equation*}
\atmost{ n r^3 \varepsilon ^{-2} \log m \log ^5 n \log W }
\end{equation*}
and an expected total update time of 
\begin{equation*}
\atmost{ m r^4 \varepsilon ^{-2} \log m \log ^5 n }
\end{equation*}
over any arbitrary sequence of \atmost{n^\gamma} hyperedge deletions.
\end{corollary}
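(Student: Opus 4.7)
The plan is to invoke \Cref{lem:decremental_spectral_sparsify} directly with the parameter choices that are spelled out in the paragraph preceding the corollary; the corollary is essentially a parameter-substitution calculation, and the real work has already been done in the preceding lemmas.

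More concretely, I would proceed in three short steps. First, I would fix the value of $\alpha$: since the decremental bundle-hyperspanner algorithm from \Cref{lem:decremental_t-bundle} (used inside \Cref{lem:decremental_light-spectral-sparsify}) produces $\atmost{\log n}$-hyperspanners, we have $\alpha = \atmost{\log n}$, and hence the bundle multiplicity used in \Cref{alg:light_spectral_sparsify} is
\begin{equation*}
t \;=\; 16\alpha \ConstantProp r^3 \varepsilon^{-2}\log n \;=\; \atmost{r^3 \varepsilon^{-2}\log^2 n}.
\end{equation*}
Second, I would set $\rho = \Theta(m)$ and $m^\star = n$; this satisfies the hypotheses $1 \le \rho \le m$ and $m^\star \ge n$ of \Cref{lem:decremental_spectral_sparsify}, and it immediately gives $\min\{\lceil \log \rho \rceil,\lceil \log(m/m^\star)\rceil\} = \lceil \log m \rceil$. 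Third, I would plug these values into the size and update-time bounds from \Cref{lem:decremental_spectral_sparsify}.

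For the size bound, substitution yields
\begin{equation*}
\atmost{\lceil \log m \rceil \cdot t \cdot n \log^3 n \log W + \ConstantSize m^\star + m/\rho} \;=\; \atmost{n r^3 \varepsilon^{-2}\log m \log^5 n \log W},
\end{equation*}
since the additive $\ConstantSize m^\star = \atmost{n}$ and $m/\rho = \atmost{1}$ terms are absorbed. For the update time, substitution yields
\begin{equation*}
\atmost{\lceil \log m \rceil \cdot t \cdot m r \log^3 n} \;=\; \atmost{m r^4 \varepsilon^{-2}\log m \log^5 n}.
\end{equation*}
Finally, the failure-probability bound of $1 - \atmost{\lceil \log \rho \rceil / n^2}$ from \Cref{lem:decremental_spectral_sparsify} becomes $1 - \atmost{\lceil \log m \rceil / n^2}$ under the chosen $\rho$.

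There is no genuine obstacle here; the only thing to verify is that the chosen $\rho$ and $m^\star$ are admissible and that the small additive terms $\ConstantSize m^\star$ and $m/\rho$ are indeed dominated by the $n r^3 \varepsilon^{-2}\log m \log^5 n \log W$ leading term, which is immediate. The heavy lifting — the monotone decremental bundle-hyperspanner construction, the correctness of the peeling in \Cref{alg:spectral_sparsify}, and the probability accumulation argument over $\atmost{n^\gamma}$ deletions — is already packaged inside \Cref{lem:decremental_t-bundle,lem:decremental_light-spectral-sparsify,lem:decremental_spectral_sparsify}.
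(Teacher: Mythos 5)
Your proposal is correct and matches the paper's own treatment: the paper proves the corollary by exactly the same parameter substitution (fixing $\alpha = \atmost{\log n}$, so $t = \atmost{r^3 \varepsilon^{-2} \log^2 n}$, choosing $\rho = \Theta(m)$ and $m^\star = n$, and plugging these into \Cref{lem:decremental_spectral_sparsify}). The only minor point, shared by the paper, is that $\min\{\lceil\log\rho\rceil, \lceil\log(m/m^\star)\rceil\} = \lceil\log(m/n)\rceil \le \lceil\log m\rceil$ rather than being equal to $\lceil\log m\rceil$, but since the claim is an upper bound this does not affect the conclusion.
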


\subsubsection{The Fully Dynamic Algorithm} \label{subsec:fully_dynamic}

We extend the decremental algorithm of \Cref{cor} on hypergraphs with hyperedges of size \( (r/2, r] \) to general hypergraphs of rank \( r \), using \Cref{lem:r/2_dynamic}.
This results in an \atmost{\log r} overhead in the update time.

To extend the decremental algorithm to a fully dynamic one, we use \Cref{lem:turn_to_fully_dynamic}, which adds an \atmost{\log m} overhead to both the size and the update time.
Since \Cref{cor} holds for \atmost{n^\gamma} hyperedge deletions, we assume that the fully dynamic algorithm undergoes arbitrary hyperedge insertions but \atmost{n^\gamma} hyperedge deletions.
This ensures that each decremental algorithm in \Cref{lem:turn_to_fully_dynamic} undergoes \atmost{n^\gamma} hyperedge deletions, and so \( T\left( m, n, W \right) = \atmost{ m r^4 \varepsilon ^{-2} \log m \log ^5 n } \).
Note that, instead of restricting the number of deletions, we could reinitialize each decremental algorithm after \atmost{n^\gamma}deletions.
However, this could result in \atmost{2^n/n^\gamma} reinitializations, which adds an exponential overhead to \( T\left( m, n, W \right) \). 

The resulting guarantees are summarized below.

\main*

\section*{Acknowledgement}
We thank Edina Marica, Lara Ost, and A.\ R.\ Sricharan for their feedback on earlier versions of the manuscripts.

\clearpage
\pagenumbering{arabic}

\renewcommand{\thepage}{\thesection-\arabic{page}}

\appendix    %

\section{Decremental Monotone \( t \)-Bundle Spanners}\label{app:abraham}

We explain the algorithm of \cite{Abraham:2016aa} for maintaining a monotone \( t \)-bundle \atmost{\log n}-spanner.

\abraham*

The algorithm is explained in \Cref{subsec:app_tbundle}, which is built upon another algorithm for maintaining a decremental monotone \( \alpha \)-spanner, as explained in \Cref{subsec:app_spanner}. 

\begin{remark} \label{remark:weighted}
We explain the algorithms assuming the graphs are unweighted.
To generalize them to weighted graphs, we use the standard batching technique as follows.
Given a graph \( G = (V, E, \vect{w}) \) with a weight ratio \( W = \max _{i,j} w_i / w_j \), we partition \( G \) into subgraphs \( G_1, G_2, \dots, G_{\log_{} W + 1} \), where \( G_i \) contains the edges of \( G \) with weights in the range of \( [2^{i-1}, 2^{i} ) \).
Assuming each \( G_i \) is unweighted, we can employ the algorithm for unweighted graphs and compute an \( \alpha \)-spanner \( F_i \) of \( G_i \) of size \( S(m, n, \alpha) \).
It is easy to see that \( F = \cup _i F_i \) is a \( 2 \alpha \)-spanner of \( G \) of size \atmost{S(m, n, \alpha) \log_{} W}.
\end{remark}

\subsection{Decremental Monotone \( \alpha \)-Spanner} \label{subsec:app_spanner}
Given a decremental unweighted graph \( G = (V, E) \), we maintain an \( \alpha \)-spanner \( F \) of \( G \) with the guarantees given in the following lemma.

\begin{lemma}[\cite{Abraham:2016aa}] \label{lem:decremental_spanner}
Given \( 0 < \varepsilon \leq 1 \) and an integer \( k \geq 2 \), and an \( n \)-vertex graph \( G = (V, E) \) with \( m \) initial edges, there is a decremental monotone algorithm against an oblivious adversary that maintains a \( (2k - 1) \)-spanner \( F \) of \( G \) with an expected size of \atmost{k^2 n^{1 + 1/k} \log n } and an expected total update time of \atmost{k^2 m \log n}.
\end{lemma}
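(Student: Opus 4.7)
The plan is to adapt the static Baswana--Sen $(2k-1)$-spanner construction to the decremental setting so that edges are dropped from the spanner $F$ only when they disappear from $G$. I would begin by recalling the static Baswana--Sen procedure: sample a nested chain of vertex subsets $V = V_0 \supseteq V_1 \supseteq \cdots \supseteq V_k$, where each vertex in $V_i$ is retained in $V_{i+1}$ independently with probability $n^{-1/k}$, so that $\mathbb{E}|V_i| = n^{1-i/k}$. At each level $i$, form clusters centered at $V_i$ via a BFS truncated at depth $i$; the spanner $F$ contains the resulting BFS-tree edges plus, for every vertex $v$ and every level $i$ at which $v$ is not sampled, one edge from $v$ to each adjacent level-$i$ cluster. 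A standard analysis gives expected size $\atmost{k\, n^{1+1/k}}$ and worst-case stretch $2k-1$.

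To make the construction decremental and monotone, I would run $\atmost{\log n}$ independent copies of the random sampling in parallel and let $F$ be the union of the resulting spanners. Within each copy, the algorithm maintains an explicit representation of the clustering at every level together with the corresponding truncated BFS trees. When an edge $e$ is deleted from $G$, the deletion is forwarded to every cluster structure that uses it; whenever a cluster's internal skeleton collapses, the BFS-tree edges and inter-cluster connectors that can no longer certify connectivity to the cluster center are simply removed from $F$. Crucially, no new edges are ever inserted into $F$, so monotonicity holds by construction.

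For the size bound, each of the $\atmost{\log n}$ copies contributes $\atmost{k^2 n^{1+1/k}}$ expected edges, where the extra factor $k$ absorbs the level-by-level bookkeeping, giving the claimed $\atmost{k^2 n^{1+1/k}\log n}$. For the total update time, an Even-Shiloach-style amortization shows that all BFS/cluster maintenance at level $i$ costs $\atmost{m \cdot i}$ across its entire lifetime; summing over $k$ levels and $\atmost{\log n}$ independent copies yields $\atmost{k^2 m \log n}$ in expectation. The failure probability is controlled by a union bound over the at most $\atmost{m}$ deletions, exploiting the oblivious-adversary assumption so that the random sampling is independent of the deletion sequence.

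The main obstacle will be showing that the \emph{monotonically pruned} $F$ remains a valid $(2k-1)$-spanner of the current graph $G$ throughout the update sequence. Since no edge is ever re-added, when a cluster fragments due to deletions, the pair of vertices that used to route through that cluster must still find a short detour in $F$ using only edges that were selected at initialization and have survived up to the current time. I would establish this by induction on the level $i$, arguing that the surviving intra-cluster BFS paths of length at most $i$ together with the pre-committed inter-cluster connector edges still realize a stretch-$(2k-1)$ route, and that with $\atmost{\log n}$ independent copies at least one copy's clustering remains intact at every moment with probability $1 - \atmost{1/n^c}$ for a desired $c$. This reduces the correctness claim to the standard Baswana--Sen stretch argument applied to a single intact copy.
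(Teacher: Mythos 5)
Your proposal inverts the essential mechanism of the paper's monotone spanner. You write that ``whenever a cluster's internal skeleton collapses, the BFS-tree edges and inter-cluster connectors that can no longer certify connectivity to the cluster center are simply removed from \( F \). Crucially, no new edges are ever inserted into \( F \), so monotonicity holds by construction.'' This cannot work. If \( F \) only shrinks, then a simple adversary defeats it: start with a dense graph, compute \( F \), and then delete edges until only a spanning tree remains. That spanning tree need not be contained in the initial \( F \), yet \( F \subseteq G \) is forced, so the surviving \( F \) cannot be a spanner of the final \( G \). The monotonicity property in the paper is that an edge is removed from \( F \) only if it is removed from \( G \); it does not say that no edges are ever added. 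Indeed the algorithm's whole design is the reverse of yours: as the clustering evolves under deletions, \emph{new} intra-cluster and inter-cluster edges are continually \emph{added} to \( F \), while all previously selected edges stay in \( F \) until they leave \( G \) (see Definition \ref{def:spanner_F}, last paragraph). The resulting \( F \) grows over time, and monotonicity is trivially satisfied because nothing is ever evicted early.

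Consequently, your accounting for the \( \log n \) factor is also wrong. You attribute it to running \( \atmost{\log n} \) independent copies of the sampling and arguing that ``at least one copy's clustering remains intact at every moment''; this claim has no basis, since after enough deletions every copy's clustering will fragment. The paper's \( \log n \) factor comes from Lemma \ref{lem:parent_change}: with a random permutation \( \sigma \) used to break ties, the expected number of times a vertex changes its parent (and hence triggers new edge insertions into \( F \)) at level \( i \) is \( \atmost{i \log n} \). This is proved by a ``random-start-of-random-permutation'' argument that shows the sequence of configurations visited by a vertex behaves like a harmonic-number process. That bound drives both the size bound (Lemma \ref{lem:app_size}) and the update-time bound; your proposal has no analogue of it, and the remaining steps you sketch (cluster maintenance via Even--Shiloach, union bounds against an oblivious adversary) are correct but do not fill this gap.Your proposal inverts the essential mechanism of the paper's monotone spanner. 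You write that ``whenever a cluster's internal skeleton collapses, the BFS-tree edges and inter-cluster connectors that can no longer certify connectivity to the cluster center are simply removed from \( F \). Crucially, no new edges are ever inserted into \( F \), so monotonicity holds by construction.'' This cannot work. If \( F \) only shrinks, then a simple adversary defeats it: start with a dense graph, compute the initial \( F \), and then delete edges until only a spanning tree remains. That spanning tree need not be contained in the initial \( F \), yet \( F \subseteq G \) is forced at all times, so the surviving \( F \) cannot be a spanner of the final \( G \). The monotonicity property in the paper is that an edge is removed from \( F \) only if it is removed from \( G \); it does not say that no edges are ever added. Indeed the algorithm's whole design is the reverse of yours: as the clustering evolves under deletions, \emph{new} intra-cluster and inter-cluster edges are continually \emph{added} to \( F \), while all previously selected edges stay in \( F \) until they leave \( G \) (see the last paragraph of \Cref{def:spanner_F}). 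The resulting \( F \) grows over time, and monotonicity is then trivially satisfied because nothing is ever evicted early.

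Consequently, your accounting for the \( \log n \) factor is also wrong. You attribute it to running \( \atmost{\log n} \) independent copies of the sampling and arguing that ``at least one copy's clustering remains intact at every moment''; this has no basis, since after enough deletions every copy's clustering fragments. The paper's \( \log n \) factor comes from \Cref{lem:parent_change}: with a random permutation \( \sigma \) used to break ties, the expected number of times a vertex changes its parent (and hence triggers new edge insertions into \( F \)) at level \( i \) is \( \atmost{i \log n} \), proved by a harmonic-number argument over the sequence of configurations the vertex visits. That bound drives both the size bound (\Cref{lem:app_size}) and the update-time bound; your proposal has no analogue of it, and the steps you do sketch (cluster maintenance via Even--Shiloach, union bounds against the oblivious adversary) do not fill this gap.
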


The algorithm is based on \cite{Baswana:2012aa}, and is adjusted by \cite{Abraham:2016aa} to support the monotonicity property.
The spanner is computed using clusters on \( G \), as defined below.
Each cluster consists of selected inter-cluster edges and intra-cluster edges.
The goal is to maintain the clusters and the selected edges after each edge deletion.

\begin{definition}[Clusters] \label{def:clusters}
Given an unweighted graph \( G = (V, E) \), a set \( S \subseteq V \) of centers, a random permutation \( \sigma \) of \( S \), and a positive integer \( i \), 
the cluster \( \cluster{S}{i}{[s]} \subseteq V \) of radius \( i \) centered at \( s \in S \) consists of a vertex \( v \) if 
\begin{enumerate}
\item \( d(v, s) \leq i \), and
\item if \( d(v, s) = i \), then \( \sigma(s) < \sigma(s') \) for every other center \( s' \) satisfying \( d(v, s') = i \).
\end{enumerate}
We define \cluster{S}{i}{} to be the clustering that consists of all the clusters of radius \( i \) centered at \( S \).
\end{definition}

For each cluster \cluster{S}{i}{[s]}, we define the tree \tree{s} consisting of a shortest path from each vertex  \( v \in \cluster{S}{i}{[s]} \) to the center \( s \) as follows.
The parent of \( v \) is the neighbor with the shortest distance from \( s \) (i.e., a neighbor in \cluster{S}{i}{[s]} with distance \( d(v, s) - 1 \) from \( s \)) and the smallest possible ordering w.r.t.\ \( \sigma \).
We define the forest \( \forest{S}{i}{} = \cup_{s \in S} T_s \).

The main difference between the algorithm of \cite{Baswana:2012aa} and that of \cite{Abraham:2016aa} is that \cite{Abraham:2016aa} takes into account the random ordering \( \sigma \) (as explained above) when computing the tree \tree{s}.
Using a straightforward modification of the algorithm of \citeauthor{Even:1981aa} \cite{Even:1981aa}, we can decrementally maintain the clustering \cluster{S}{i}{} and the forest \forest{S}{i}{} on the multi-graph \( G \).
The guarantee of the algorithm is summarized in the theorem below.

\begin{theorem}[\cite{Baswana:2012aa}] \label{th:intra_cluster}
Given an \( n \)-vertex graph \( G = (V, E) \) with \( m \) initial edges, a set \( S \subseteq V \) of centers, a random permutation \( \sigma \), and a positive integer \( i \), there is a decremental data structure that maintains the clustering \cluster{S}{i}{} and the forest 
 \forest{S}{i}{} 
in expected total time \atmost{ m i \log n}.
\end{theorem}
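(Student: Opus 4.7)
The plan is to maintain a bounded-depth shortest-path tree from a virtual super-source using a modification of the Even--Shiloach (ES) data structure, augmented to keep track of tie-breaking by $\sigma$. Specifically, I would add a virtual vertex $s^\star$ to $G$ together with one edge from $s^\star$ to each center $s \in S$, obtaining a graph $G^+$ in which $d_{G^+}(s^\star, v) = \min_{s \in S} d_G(v, s) + 1$. Running the standard decremental Even--Shiloach BFS from $s^\star$ with depth cap $i+1$ maintains, for every vertex $v$ lying within graph-distance $i$ of some center, its distance to the nearest center, under any sequence of edge deletions. The total running time of ES with a depth cap of $i+1$ on a graph with $m$ edges is $O(mi)$, because each edge is inspected only when an endpoint's level strictly increases and each vertex's level increases at most $i+1$ times before it leaves the tree.

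The second step is to recover the clustering $\cluster{S}{i}{}$ and the forest $\forest{S}{i}{}$ from this single BFS. For each vertex $v$ at current depth $d(v) \le i+1$, I maintain two pieces of information: (a) the identity of the center $s$ to which $v$ is assigned, determined by the tie-breaking rule of Definition~\ref{def:clusters}, and (b) the parent pointer of $v$ in $\tree{s}$. Both are computed locally at $v$ by scanning its neighbors at depth $d(v)-1$ and selecting the one whose associated center $s'$ minimizes the pair $(d(v,s'),\sigma(s'))$ lexicographically. To avoid rescanning all neighbors after every event, each vertex keeps a min-heap over its neighbors keyed by this pair, updated only when a neighbor's assignment or level changes. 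A level increase of $v$ invalidates at most $\deg(v)$ entries in neighbors' heaps, and a center-reassignment of $v$ propagates to at most $\deg(v)$ heap updates. Each heap update costs $O(\log n)$.

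The correctness argument reduces to two observations: first, since the ES tree from $s^\star$ computes the correct distance $\min_s d_G(v,s)$ at every moment, any vertex satisfying $d_G(v,s) \le i$ for the unique $s$ minimizing $(d_G(v,s), \sigma(s))$ is exactly a member of $\cluster{S}{i}{[s]}$; second, my local tie-breaking of parents among level-$(d(v)-1)$ neighbors gives precisely the tree $\tree{s}$ defined after \Cref{def:clusters}. For the running time I would amortize: every edge is examined $O(i)$ times as its endpoints' levels monotonically increase (giving $O(mi)$ base work), and every examination costs $O(\log n)$ for the heap updates tied to cluster reassignments and parent repair, yielding total expected time $O(mi\log n)$.

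The main obstacle is controlling the cost of cluster-reassignment cascades: when $v$ changes its assigned center, every neighbor of $v$ may need to reconsider its own assigned center and parent, which could naively snowball. The resolution is that a vertex can only change assignment when either its own level rises or one of its level-$(d(v)-1)$ neighbors does; both events are bounded by the $O(i)$ level-increase budget per vertex in a depth-capped ES tree. Charging all heap maintenance to these distance-increase events caps the total cost at $O(\sum_v \deg(v)\cdot i \cdot \log n) = O(mi\log n)$, matching the claim. The expectation is only needed when this subroutine is later invoked with a random $\sigma$; for this theorem itself, $\sigma$ is given, and the bound holds deterministically.
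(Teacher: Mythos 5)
The paper never actually proves this theorem: it imports it from Baswana--Sen and only remarks that it follows from a modification of the Even--Shiloach structure, so your proposal is a reconstruction rather than a response to a written argument. Your skeleton (super-source \( s^\star \) joined to all centers, a depth-capped ES tree, and local selection of each vertex's assigned center and parent among its level-\( (d(v)-1) \) neighbors) is the right one. However, your amortization has a genuine gap, and your closing claim that the bound holds deterministically is wrong --- the theorem requires \( \sigma \) to be a \emph{random} permutation and promises only an \emph{expected} bound, and this is not incidental.

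The flawed step is the claim that ``a vertex can only change assignment when either its own level rises or one of its level-\( (d(v)-1) \) neighbors does.'' Take centers \( s_1, s_2 \) with \( \sigma(s_1) < \sigma(s_2) \), a vertex \( u \) adjacent to both, and a vertex \( v \) adjacent to \( u \) and to a vertex \( u' \) that is adjacent only to \( s_2 \). Then \( v \) is at level \( 2 \) and assigned to \( s_1 \). Deleting the edge \( (u, s_1) \) changes no level (\( u \) remains at level \( 1 \) via \( s_2 \)), yet \( v \) must be reassigned to \( s_2 \), and the reassignment cascades to \( v \)'s descendants. These cascades are not chargeable to level increases, so the \( O(mi) \) base-work budget does not cover them; deterministically, a vertex at level \( j \) can be reassigned once for every center that is ever at distance \( j \) from it, i.e.\ up to \( |S| \) times. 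The actual source of the \( \log n \) factor and of the expectation is the random-order argument that the paper uses in \Cref{lem:cluster_change,lem:parent_change}: while \( v \) stays at level \( j \), the set of centers at distance \( j \) only shrinks, the order in which they leave is fixed by the oblivious adversary, and the \( \sigma \)-minimum of such a set changes \( \atmost{\log n} \) times in expectation. Each of the resulting expected \( \atmost{i \log n} \) reassignments costs \( \atmost{\deg(v)} \), which sums to \( \atmost{m i \log n} \) in expectation. Your heaps neither supply this bound nor are needed once it is in place.
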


We now define the spanner \( F \) of \( G \) by selecting its edges from a sequence of clusterings, referred to as the \textit{clustering hierarchy} on \( G \).

\begin{definition}[Clustering hierarchy]
Given a graph \( G = (V, E) \), a random permutation \( \sigma \), and an integer \( k \geq 2 \), let \( S_0 \supseteq S_1 \supseteq \dots \supseteq S_k \) be subsets of \( V \), where \( S_0 = V \) and \( S_k = \emptyset \), such that \( S_{i+1} \) is obtained by sampling each vertex of \( S_i \) with probability \( n^{-1/k} \).

For every \( 1 \leq i \leq k \), we define the clustering \( C_i = \cluster{S_i}{i}{} \) on \( G \) with centers \( S_i \) and radius \( i \).
We denote by \( V_i \) the set of all vertices covered by \( C_i \).
i.e., \( V_i = \{ v \in V \mid d(v, S_i) \leq i \} \).
Moreover, we define the forest \( F_i = \forest{S}{i}{} \) on \( C_i \) as defined before.

We denote by \( C_1, C_2, \dots, C_k \) the clustering hierarchy on \( G \).
\end{definition}

\begin{definition}[Spanner \( F \)] \label{def:spanner_F}
Given a graph \( G = (V, E) \), a random permutation \( \sigma \), and an integer \( k \geq 2 \), let \( C_1, C_2, \dots, C_k \) be a clustering hierarchy on \( G \) with covered vertices \( V_1, V_2, \dots, V_k \) and forests \( F_1, F_2, \dots, F_k \). 
The spanner \( F \) consists of the following edges:
\begin{enumerate}
\item 
Intra-cluster Edges: For every \( 1 \leq i \leq k \), \( F \) consists of all the edges in the forest \( F_i \) defined on the clustering \( C_i \).

\item 
Inter-cluster Edges: For every \( 1 \leq i \leq k - 1 \) and for each \( v \in V_{i} \setminus V_{i + 1} \) belonging to the cluster \cluster{S_i}{i}{[s]} of \( C_i \) centered at \( s \), if \( v \) has some neighbors in the cluster \cluster{S_i}{i}{[s']} with \( s' \neq s \), then \( F \) consists of the edge \( (v, v') \) where \( v' \) is the neighbor of \( v \) in \cluster{S_i}{i}{[s']} with the smallest possible ordering w.r.t.\ \( \sigma \).
\end{enumerate}
After an update in \( G \), an edge in \( F \) may no longer correspond to items 1 and 2 above.
In such a case, we keep the edge in \( F \) until it is removed from \( G \), while adding new edges to \( F \) that satisfy items 1 or 2. 
\end{definition}

\begin{remark} \label{remark:inter_cluster}
\cite{Abraham:2016aa} lets the inter-cluster edge \( (v, v') \) to be arbitrarily chosen, but we choose \( v' \) to be the neighbor with the smallest possible ordering w.r.t.\ \( \sigma \).
Although this does not change the asymptotic analysis on the running time and the size of \( F \), it reduces the size of the spanner in implementation.
Recall that every selected edge remains in \( F \) until it is removed from \( G \).
By choosing \( v' \) w.r.t.\ \( \sigma \), when \( v \) changes its cluster, the edge connecting \( v \) to its new parent has already been added to \( F \) as an inter-cluster edge.
This eliminates the need to add  a new edge to \( F \) connecting \( v \) to its new parent.
\end{remark}

\begin{remark}
The choices for intra-cluster and inter-cluster edges in \( F \) are not unique when \( G \) is a multi-graph.
In this case, we have a set of parallel edges adjacent to the vertex with the least possible order.
Here, we arbitrarily choose one of these edges to be in \( F \). 
It is straightforward to see this approach does not change the guarantees compared to when \( G \) is simple.
\end{remark}

To prove the update time for maintaining the spanner \( F \), \cite{Baswana:2012aa} first computes the expected number of times a vertex changes its cluster in the clustering \( C_i \).

\begin{lemma}[\cite{Baswana:2012aa}] \label{lem:cluster_change}
For a vertex \( v \in V \), the expected number of times \( v \) changes its cluster in the clustering \( C_i \) is \atmost{i \log n}.
\end{lemma}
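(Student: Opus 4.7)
My plan is to exploit the decremental nature of the algorithm, which implies that $d(v, s)$ is monotonically non-decreasing for every center $s \in S_i$. I would partition the timeline into \emph{phases} based on the quantity $d^\star(v) := \min_{s \in S_i} d(v, s)$; by monotonicity $d^\star(v)$ is non-decreasing and takes values in $\{0, 1, \dots, i\}$ (once $d^\star(v) > i$, the vertex $v$ is no longer clustered), so there are at most $i + 1$ phases. It then suffices to prove that within each phase the expected number of cluster changes of $v$ is $\atmost{\log n}$; summing over phases yields $\atmost{i \log n}$.

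The heart of the argument lies in a single phase. Fix a phase in which $d^\star(v) = d$. Throughout the phase, the cluster that $v$ belongs to must be centered at a vertex at distance exactly $d$, and the tie-breaking rule from \Cref{def:clusters} selects the center with smallest $\sigma$-rank among all such candidates. Let $A_d$ be the set of centers at distance exactly $d$ from $v$ at the start of the phase. By monotonicity of distances, no new centers can enter $A_d$ during the phase, and its members exit one by one, as their distance to $v$ becomes strictly greater than $d$, in some order $s_1, s_2, \dots, s_{|A_d|}$ dictated by the adversary's deletions. The valid center of $v$ just before $s_j$ leaves is the $\sigma$-minimum of $\{s_j, s_{j+1}, \dots, s_{|A_d|}\}$, so a cluster change occurs at index $j$ precisely when $\sigma(s_j)$ is a right-to-left minimum of the sequence $\sigma(s_1), \sigma(s_2), \dots, \sigma(s_{|A_d|})$.

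Finally, I would invoke the oblivious-adversary assumption to argue that the exit ordering $s_1, \dots, s_{|A_d|}$ is independent of the random permutation $\sigma$; hence $\sigma(s_1), \dots, \sigma(s_{|A_d|})$ is uniformly distributed over permutations of $|A_d| \leq n$ elements, whose expected number of right-to-left minima is the harmonic number $H_{|A_d|} = \atmost{\log n}$. The main obstacle I foresee is cleanly justifying this independence: in principle the adversary could try to probe $\sigma$ through the algorithm's observable state when choosing which edge to delete next, but obliviousness precisely forbids such coupling. Once this is in hand, combining the per-phase bound of $\atmost{\log n}$ with the at most $i + 1$ phases delivers the claimed bound.
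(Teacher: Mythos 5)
The paper does not actually give a proof of \Cref{lem:cluster_change}, citing it from \cite{Baswana:2012aa}; the closest thing in the paper is the proof of \Cref{lem:parent_change}, which uses the same two-step scheme you use here: partition the timeline by the monotone quantity $d(v, S_i)$ into at most $i$ phases, and within each phase invoke the random permutation $\sigma$ to bound the expected number of changes by $\atmost{\log n}$. Your proposal is correct and matches that approach, with the per-phase argument stated in the cleaner ``right-to-left minima of a $\sigma$-sequence whose order is fixed by the oblivious adversary'' form rather than the paper's configuration-counting phrasing.
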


Since a change of cluster for \( v \) costs \atmost{\deg(v)} time to find the new inter-cluster and the intra-cluster edges adjacent to \( v \), the total expected update time for maintaining the edges of \( F \) that belong to \( C_i \) is \( \atmost{\sum _{v \in V} i \deg(v) \log n} = \atmost{i m \log n} \).

Since we support the monotonicity property, we keep more edges in \( F \) than the spanner of \cite{Baswana:2012aa}: due to the updates, \( v \) may change its parent in the forest \( F_i \) from \( v' \) to \( v'' \) as follows.
If the neighbor \( v'' \) of \( v \) becomes part of the cluster containing \( v \) and \( \sigma(v'') < \sigma(v') \), then \( v'' \) becomes the parent of \( v \) in \( F_i \), and we add \( (v, v'') \) to \( F \).
To support the monotonicity, we  keep the edge \( (v, v') \) in \( F \) while adding \( (v, v'') \) to it as the new edge connecting \( v \) to its parent in \( F_i \).
Consequently, to obtain the guarantees for the monotone algorithm maintaining \( F \), we need to be more elaborate and compute the expected number of times \( v \) changes its \textit{parent}.

\begin{lemma}[\cite{Abraham:2016aa}] \label{lem:parent_change}
For a vertex \( v \in V \), the expected number of times \( v \) changes its parent in the forest \( F_i \) is \atmost{i \log n}.
\end{lemma}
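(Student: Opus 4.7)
The plan is to extend the argument behind \Cref{lem:cluster_change} from cluster changes to parent changes by applying the random-permutation technique a second time, now at the level of the $\sigma$-minimum parent candidate. I would first observe that any parent change of $v$ in the forest $F_i$ must coincide with one of two event types: (I) $v$ switches to a new cluster (and thereby selects a new parent within the new cluster), or (II) $v$ stays in its current cluster $\cluster{S_i}{i}{[s]}$ but either the distance $d(v,s)$ grows by one, or the current parent $v'$ is removed from the candidate set (because the edge $(v,v')$ is deleted, $d(v',s)$ increases past $d(v,s)-1$, or $v'$ leaves the cluster), so that the $\sigma$-minimum candidate must be reselected.

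For type~(I) events, \Cref{lem:cluster_change} directly provides an $\atmost{i \log n}$ bound on the expected count, each contributing at most one parent change. For type~(II), the decremental setting supplies two key monotonicity properties: while $v$ remains in $\cluster{S_i}{i}{[s]}$, the distance $d(v,s)$ is non-decreasing and bounded above by $i$, so there are at most $i$ distance-level phases per cluster epoch; and within each fixed level $j$, the parent-candidate set, namely the neighbors of $v$ at distance $j-1$ from $s$, can only shrink over time. The core technical ingredient is then the classical fact that for a uniformly random total order $\sigma$ over at most $n$ elements with an adversarial but oblivious removal sequence, the expected number of times the current $\sigma$-minimum changes is $\atmost{\log n}$; applied per level, this bounds the within-level parent changes in expectation.

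The main obstacle is to obtain the tight $\atmost{i \log n}$ bound rather than the naive $\atmost{i \log n} \cdot \atmost{i \log n}$ product obtained by multiplying expected cluster epochs by the per-epoch within-level count. To close this gap I would apply the random-permutation argument globally, across the entire run of the algorithm, rather than independently per cluster epoch: because $d(v, s)$ is globally monotone non-decreasing under decremental updates for each fixed center $s$, and because every reset of the candidate set can be charged either to a type~(I) event (already bounded by \Cref{lem:cluster_change}) or to one of the at most $i$ distance-level transitions of $v$ relative to its current center, the total sequence of $\sigma$-minimum reselections experienced by $v$ can be analyzed as a single random-permutation process with $\atmost{i}$ resets overall. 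This global accounting recovers the claimed $\atmost{i \log n}$ expected bound on the number of parent changes of $v$.
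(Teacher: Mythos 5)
Your plan correctly names the central obstacle --- avoiding the naive $\atmost{i \log n} \cdot \atmost{\log n}$ product --- but the proposed ``global accounting'' does not close it, and the claim that the history of $\sigma$-minimum reselections can be treated as a single random-permutation process with only $\atmost{i}$ resets is not substantiated. The resets you describe include every type~(I) event, since a cluster change replaces the parent-candidate set (the neighbors of $v$ at distance $d(v,s)-1$ inside the new cluster). Even restricting attention to one distance level $j$, each of those within-level cluster changes restarts the shrinking-set argument over parents, so charging $\atmost{\log n}$ expected reselections to each such reset re-creates the very product you set out to beat. Your write-up names no random-permutation process whose state persists across these within-level cluster changes, and that continuity is precisely the part doing the work in a correct proof.

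The paper's proof supplies exactly this missing piece. It defines a \emph{configuration} as the pair $(s, v')$ recording both $v$'s current center and its current parent, and proves that while $d(v, S_i)$ is held at a fixed level $j$, a configuration once abandoned is never revisited --- using monotonicity of distances and the fact that $\sigma$ always prefers the earlier-chosen center or parent. Given non-revisitation, the record-value analysis over $\sigma$ bounds the expected number of configuration changes per level by $\atmost{\log n}$, and since $d(v, S_i)$ is monotone and takes at most $i$ values, the total is $\atmost{i \log n}$. In this accounting your type~(I) and type~(II) events are the same kind of object --- a configuration change --- so the decomposition and the separate appeal to \Cref{lem:cluster_change} are unnecessary and in fact discard the joint structure that makes the per-level $\atmost{\log n}$ bound attainable. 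To repair your proposal you would need to introduce the configuration pair (or an equivalent joint object), prove the non-revisiting property at a fixed level, and only then apply the random-permutation bound per level.
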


\begin{proof}

Since parallel edges have the same endpoints, the expected number of parent changes in \( G \) is upper bounded by the expected number of parent changes in the underlying simple graph of \( G \).
Therefore, we assume in this proof that \( G \) is a simple graph.

Suppose that \( G \) has gone under an arbitrary sequence of edge deletions.
Let \( V_j \) be the centers that had distance \( j \) from \( v \) at some point, and let \( (s, v') \) be the configuration in which \( v \) was assigned to the cluster centered at \( s \), \( d(v, s) = j \), and \( v' \) was the parent of \( v \) in \( F_i \).

We first prove that while keeping its distance equal \( j \) from \( S_i \), \( v \) cannot go back to the configuration \( (s, v') \) after moving to another configuration due to an update.
There are two possible cases.
\begin{enumerate}
\item 
If, after an edge deletion, \( v \) changes its configuration but stays in the same cluster.
i.e., \( v \) changes its configuration from \( (s, v') \) to \( (s, v'') \).
Since \( v \) has not changed its cluster, it follows that \( d(v', s) \leq d(v'', s) = j \) \textit{before} the update.
However, note that \( \sigma ( v' ) \leq \sigma ( v'' ) \) as \( v' \) had chosen before \( v'' \) to be the parent of \( v \).
Thus, if \( v \) chooses \( v'' \) as its parent only if \( d( v, s ) > j \) after the update.
As \( d(v, s) \) is monotonic, \( v \) cannot come back to \( (s, v') \) while keeping its distance form \( S_i \) equal to \( j \).

\item
If \( v \) changes its configuration to \( (s', v'') \), where \( s' \neq s \).
Since \( v \) keeps its distance from \( S_i \) to be equal \( j \), it follows that \( d(v, s) = d(v, s') = j \) \textit{before} the update.
By construction, \( \sigma ( s ) \leq \sigma ( s' ) \) as \( v \) had been assigned to the cluster centered at \( s \) before the update.
However, since \( v \) changed its cluster, \( d( v, s ) > j \) after the update.
As \( d(v, s) \) is monotonic, \( v \) cannot come back to \( (s, v') \) while keeping its distance form \( S_i \) equal to \( j \).
\end{enumerate}

We now bound the expected number of times \( v \) changes its parent while keeping \( d(v, S_i) = j \).
We only need to compute the expected number of configurations that appear throughout any arbitrary sequence of edge deletions.
Let \( k \) be the number of configurations \( v \) visits while \( d(v, S_i) = j \).
Note that \( k \leq n^2 \).
Since \( \sigma \) is a random permutation, each configuration has an equal probability to happen first, which is equal to \( 1/k \).
Similarly, after visiting the \( l \)th configuration, the probability of moving to an unvisited configuration is \( 1/(k - l + 1) \).
Thus, the expected number of configuration change is
\[
\sum_{l = 1} ^{k} \frac{1}{k - l + 1 } = \atmost{\log k} = \atmost{\log n},
\]
which also upper bounds the expected number of parent changes while \( d(v, S_i) = j \), as   several consecutive configurations might assign \( v \) to different clusters with the same parent.

To upper bound the number of times \( v \) changes its parent in \( F_i \), first note that once \( d(v, S_i) \) is increased, it cannot be decreased later as \( d(v, S_i) \) is monotonic.
Since in \( F_i \), \( 1 \leq d(v, S_i) \leq i \) and the expected number of times that \( v \) changes its parent while \( d(v, S_i) = j \) is \atmost{\log n}, it follows that the number of parent changes in \( F_i \) is \atmost{i \log n}.
\end{proof}

We are now ready to obtain the guarantees on the quality and the size of \( F \).

\begin{lemma}[\cite{Baswana:2012aa}] \label{lem:app_spanner}
\( F \) is a \( (2k - 1) \)-spanner of \( G \).
\end{lemma}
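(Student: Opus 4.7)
The plan is to reduce to the claim $d_F(u,v) \leq 2k-1$ for every edge $(u,v) \in E$, since any $(u,v)$-path in $G$ can then be lifted edge-by-edge to a walk in $F$ of the claimed stretch; for multi-graphs, the Baswana-Sen argument I will follow is indifferent to parallel copies of an edge, as it only uses distances. So I fix an edge $(u,v) \in E$ and adapt the classical Baswana-Sen stretch analysis to the radius-$i$ clustering hierarchy used here.

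Two basic building blocks drive the argument. First, any two vertices $w_1, w_2$ in the same level-$i$ cluster $\cluster{S_i}{i}{[s]}$ are connected in $F$ via the tree $T_s \subseteq F_i$ by a path of length at most $2i$, since each vertex in the cluster sits within distance $i$ of the center $s$. Second, whenever $u \in V_i \setminus V_{i+1}$ has a neighbor inside a different level-$i$ cluster $\cluster{S_i}{i}{[s']}$, \Cref{def:spanner_F} guarantees an edge in $F$ from $u$ to some vertex of $\cluster{S_i}{i}{[s']}$, giving a path of length at most $2i+1$ from $u$ to any vertex of that cluster (one inter-cluster edge followed by a tree traversal in $T_{s'}$).

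With these in hand, I set $i^\star$ to be the largest index in $\{0, 1, \ldots, k-1\}$ such that both $u, v \in V_{i^\star}$. Such $i^\star$ exists since $V_0 = V$ (implicitly extending $C_0$ as the trivial clustering of singletons), and $i^\star \leq k-1$ since $V_k = \emptyset$. If $u$ and $v$ lie in the same cluster of $C_{i^\star}$, the first fact gives $d_F(u,v) \leq 2i^\star \leq 2k-2$. Otherwise they lie in different level-$i^\star$ clusters (at $i^\star = 0$, in different singletons); by maximality of $i^\star$, at least one of them, say $u$, is in $V_{i^\star} \setminus V_{i^\star+1}$, so the inter-cluster rule fires and the second fact gives $d_F(u,v) \leq 2i^\star + 1 \leq 2k-1$.

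The hard part will be a careful handling of two technicalities. First, \Cref{def:spanner_F} only quantifies over $i \geq 1$, so an edge with $u \notin V_1$ must be addressed by a base case; I will resolve this by implicitly extending the construction with a level-$0$ singleton clustering ($S_0 = V$), under which any $w \notin V_1$ contributes all its incident edges to $F$, covering the degenerate case $i^\star = 0$ with $d_F(u,v) = 1$. Second, because $V_i = \{w : d(w, S_i) \leq i\}$ is not automatically nested across levels (shrinking centers $S_{i+1} \subseteq S_i$ and growing radius $i+1 > i$ pull in opposite directions), the definition of $i^\star$ must be taken as a maximum rather than the ``first'' index, as I did above; the case analysis then only relies on the very weak monotonicity $V_k = \emptyset$, which suffices.
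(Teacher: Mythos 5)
Your proof is correct and follows essentially the same Baswana--Sen stretch argument as the paper: you bound $d_F(u,v)$ for each edge by locating a level $i^\star$ at which both endpoints are clustered, split into a same-cluster case (tree path of length $\leq 2i^\star$ via $F_{i^\star}$) and a cross-cluster case (one inter-cluster edge plus a tree path, $\leq 2i^\star+1$), and then lift an arbitrary $(u,v)$-path edge by edge. The only difference from the paper's write-up is that the paper takes the \emph{smallest} $j$ with $u$ or $v$ in $V_j\setminus V_{j+1}$ and immediately asserts $u,v\in V_j$ (which implicitly uses nesting of the $V_i$'s), and it never addresses what happens for a vertex outside $V_1$; you instead take the \emph{largest} $i^\star$ with both endpoints in $V_{i^\star}$, which only needs $V_0=V$ and $V_k=\emptyset$, and you explicitly supply the level-$0$ singleton-clustering base case that Definition~\ref{def:spanner_F} omits. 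These are sound, minor tightenings of the same argument rather than a different route.
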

\begin{proof}
Suppose that \( (u, v) \) is an edge in \( G \setminus F \) and let \( j \) be the smallest integer such that \( u \) or \( v \) belongs to \( V_j \setminus V_{j+1} \).
Thus, \( u, v \in V_j \), which means that \( u \) and \( v \) are covered by the clusters \cluster{S_j}{j}{[s]} and \cluster{S_j}{j}{[s']} in \( C_j \), centered at \( s, s' \in S_j \), respectively.
Without loss of generality, suppose that \( u \in V_j \setminus V_{j+1} \).
There are two possibilities to consider:
\begin{enumerate}
\item 
If \( s = s' \), then \( u \) and \( v \) are connected in \( F \) by the path \( v \to s \to u \) which exists in the cluster \cluster{S_j}{j}{[s]} centered at \( s \).
Since the radius of the cluster is \( j \leq k - 1 \), the length of the path is at most \( 2k - 2 \).

\item
If \( s \neq s' \), then \( u \) and \( v \) are covered by different clusters in \( C_j \).
Since \( u \in \cluster{S_j}{j}{[s]} \) is connected to \( v \in \cluster{S_j}{j}{[s']} \) but \( (u, v) \) does not belong to \( F \), by item 2 of \Cref{def:spanner_F}, \( u \) is connected to a vertex \( v' \in \cluster{S_j}{j}{[s']} \) such that \( (v, v') \) belongs to \( F \).
Thus, the path \( v' \to s' \to v  \) in \( \cluster{S_j}{j}{[s']} \) concatenated with the edge \( (u, v') \) forms a path in \( F \) from \( u \) to \( v \) with a length of at most \( 1 + 2(k - 1) = 2k - 1 \).
\end{enumerate}
The statement follows accordingly.
\end{proof}

\begin{lemma}[\cite{Abraham:2016aa}] \label{lem:app_size}
The expected number of edges of \( F \) is \atmost{k^2 n^{1 + 1/k} \log n }.
\end{lemma}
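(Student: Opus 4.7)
The plan is to partition the edges ever placed in $F$ into two classes---intra-cluster edges (those first added as forest edges of some $F_i$, item~1 of \Cref{def:spanner_F}) and inter-cluster edges (item~2)---and bound each class separately, combining the one-shot Baswana--Sen snapshot argument with the configuration-counting ideas of \Cref{lem:cluster_change,lem:parent_change}. Because the algorithm is monotone, every edge present in the final $F$ was inserted at some moment via item~1 or item~2, so $|F|$ is dominated by the sum of the two class counts.

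For the intra-cluster class, every edge is charged to the moment its lower endpoint $v$ acquires a new parent in some $F_i$; the previous parent-edge is retained only because of monotonicity, so fresh edges appear exactly at these parent reassignments. \Cref{lem:parent_change} bounds the expected number of such reassignments for $v$ in $F_i$ by \atmost{i \log n}, so the expected total intra-cluster count is
\[
\sum_{i=1}^{k} \sum_{v \in V} \atmost{i \log n} = \atmost{k^2 n \log n},
\]
which is already dominated by the target bound.

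The main effort---and the principal obstacle---is bounding inter-cluster edges. For a fixed level $i$ and vertex $v$ active in $V_i \setminus V_{i+1}$, the snapshot argument from \cite{Baswana:2012aa} caps the expected number of distinct clusters of $C_i$ adjacent to $v$ at any fixed moment by \atmost{n^{1/k}}: order the clusters containing neighbors of $v$ by the $\sigma$-rank of their centers, and note that the first such center lying in $S_{i+1}$ would force $v \in V_{i+1}$; since each center is promoted to $S_{i+1}$ independently with probability $n^{-1/k}$, the expected prefix length is \atmost{n^{1/k}}. The dynamic upgrade demanded by monotonicity is to count, for each $(v,i)$, how many distinct inter-cluster edges accumulate in $F$ throughout the whole update sequence. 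I would handle this in two stages analogous to the proof of \Cref{lem:parent_change}: first, the cluster currently containing $v$ can change only \atmost{i \log n} times by \Cref{lem:cluster_change}, and once a $(\text{center}, \text{parent})$ state of $v$ is abandoned it cannot recur because $d(v, S_i)$ is monotone nondecreasing and $\sigma$ is a random permutation; second, within each such state, the snapshot $n^{1/k}$-bound on adjacent clusters is combined with the random-permutation ``expected \atmost{\log n} new minima'' argument for the chosen smallest-$\sigma$ representative in each adjacent cluster. The delicate point I expect to spend most energy on is avoiding a spurious extra $\log n$ factor: by charging representative swaps and the $v$-configuration changes to the same pool of $\sigma$-ordered events (rather than multiplying them independently), only one $\log n$ survives, and summing
\[
\sum_{i=1}^{k} \sum_{v \in V} \atmost{i \cdot n^{1/k} \cdot \log n} = \atmost{k^2 n^{1+1/k} \log n}
\]
yields the desired inter-cluster bound, which combined with the intra-cluster count finishes the lemma.
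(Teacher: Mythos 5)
Your proposal follows the paper's overall structure: partition $F$ into intra-cluster and inter-cluster edges and bound each class by combining the snapshot $n^{1/k}$-bound from \cite{Baswana:2012aa} with the parent-change count of \Cref{lem:parent_change}. There are two points of divergence worth flagging.

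For the intra-cluster class you charge every edge to a parent reassignment, arriving at $\atmost{k^2 n \log n}$. That is correct and dominated by the target, but it is looser than what the paper does: the paper counts only the $\atmost{kn}$ edges of the initial forests and then invokes \Cref{remark:inter_cluster} to observe that, because inter-cluster representatives are chosen to have the smallest $\sigma$-rank, every forest edge that $v$ later needs after a parent change is already present in $F$ as an inter-cluster edge. So no separate accounting for post-initialization intra-cluster edges is necessary. This is harmless in your version but the paper's variant is the cleaner bookkeeping and is exactly why \Cref{remark:inter_cluster} exists.

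For the inter-cluster class, your argument is the one with a genuine gap. You correctly note the snapshot bound of $\atmost{n^{1/k}}$ adjacent clusters and correctly note that multiplying independently by both the cluster-change count (\Cref{lem:cluster_change}) and an ``expected $\atmost{\log n}$ new minima'' count for representative swaps would produce a spurious $\log^2 n$. But ``charging representative swaps and $v$-configuration changes to the same pool of $\sigma$-ordered events'' is a plan, not a proof: you have not actually exhibited the single pool that makes the charge valid. The paper's accounting avoids the issue entirely and does not invoke a separate minima count at all: each time $v$ changes its parent in $C_i$ (which, by \Cref{lem:parent_change}, happens $\atmost{i \log n}$ times), $\atmost{n^{1/k}}$ new inter-cluster edges touching $v$ are added, and again \Cref{remark:inter_cluster} is what absorbs the forest-edge bookkeeping into this same charge. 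Summing $\atmost{i\, n^{1/k} \log n}$ over $v \in V$ and $i = 1,\dots,k$ gives the claimed $\atmost{k^2 n^{1+1/k} \log n}$. You should drop the separate new-minima count and adopt the paper's single charging scheme, with \Cref{remark:inter_cluster} doing the work you were trying to do by hand.
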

\begin{proof}
Since each forest \( F_i \) has \atmost{n} edges, the total number of intra-cluster edges present in \( F \) at the beginning of the algorithm is \atmost{kn}.
As discussed in \Cref{remark:inter_cluster}, other edges to be added in \( F \) will be counted as inter-cluster edges.

To bound the number of inter-cluster edges, suppose that \( v \in V_i \setminus V_{i + 1} \).
Recall that the set \( S_{i + 1} \) of centers of the clustering \( C_{i + 1} \) is obtained by sampling each vertex of \( S_i \) with probability \( n ^{-1/n} \).
Thus, the expected number of removed centers from \( S_i \) to form \( S_{i + 1} \) is \atmost{n^{1/n}}, meaning that \( v \notin V_{i + 1} \) because \( v \) has lost its connections to every cluster in \( C_i \) after the removal of \atmost{n^{1/n}} centers.
Therefore, the expected number of inter-cluster edges in \( C_i \) touching \( v \) that are counted in \( F \) is \atmost{n^{1/n}}.
As \( C_i \) is decremental, \( v \) may change its parent, where by \Cref{remark:inter_cluster}, the new parent is connected by an edge already chosen as inter-cluster edge in the previous updates.
To support the monotonicity property, these edges remain in \( F \), and new inter-cluster edges touching \( v \) are added to \( F \), which are \atmost{n^{1/n}} as discussed before.
By \Cref{lem:parent_change}, the expected number of times \( v \) changes its parent in \( C_i \) is \atmost{i \log n}, which gives us \atmost{i n^{1/n} \log n } as the total expected number of inter-cluster edges touching \( v \) in \( C_i \) that appeared in \( F \).
Thus, the total number of inter-cluster edges in \( F \) is \( \sum _{v \in V} \sum _{i = 1} ^ k \atmost{i n^{1/n} \log n } = \atmost{k^2 n^{1 + 1/n} \log n } \).
\end{proof}

We conclude this section by proving \Cref{lem:decremental_spanner}.

\begin{proof}[Proof of \Cref{lem:decremental_spanner}]
\( F \) being a spanner and the guarantee on its size is proved by \Cref{lem:app_spanner,lem:app_size}.
By \Cref{def:spanner_F}, the edges stay in \( F \) until they are removed from \( G \), and so \( F \) supports the monotonicity property.
We prove the update time.

\underline{Update time:}
by \Cref{th:intra_cluster}, the clustering \( C_i \) and the forest \( F_i \) are maintained in \atmost{i m \log n} total update time.
By \Cref{lem:parent_change}, the number of times \( v \) changes its parent in \( C_i \) is \atmost{i \log n}.
Since a change of parent for \( v \) costs \atmost{\deg(v)} time to find the new inter-cluster  edges touching \( v \), the total update time to maintain all inter-cluster edges in \( C_i \) is \( \atmost{\sum _{v \in V} i \deg(v) \log n} = \atmost{i m \log n} \).
Therefore, the total update time for maintaining all the clusterings and forests is \( \sum_{i = 1} ^ k \atmost{m i \log n} = \atmost{m k^2 \log n} \)
.
\end{proof}

\subsection{Decremental Monotone \( t \)-Bundle \atmost{\log n}-Spanner} \label{subsec:app_tbundle}

We use the decremental monotone \( (2k -1) \)-spanner of 
\Cref{lem:decremental_spanner} where \( k = \lceil \log n \rceil \) to maintain the spanner \( T_1 \) of \( G \), spanner \( T_2 \) of \( G \setminus T_1 \) and so on.
Since \( \alpha = \atmost{\log n} \), \( B = \cup_{i = 1} ^t T_i \) is the resulting \( t \)-bundle \atmost{\log n}-spanner.

\begin{proof}[Proof of \Cref{lem:Abraham_decremental_t-bundle}]
We first assume that \( G \) is unweighted.
The fact that \( B \) is a \( t \)-bundle \atmost{\log n}-spanner follows immediately from its construction.
We prove the update time and that \( B \) supports the monotonicity property, and then generalize the algorithm to weighted graphs.

\underline{Monotonicity:}
we show that \( \cup_{i = 1} ^j T_i \) supports the monotonicity property by induction on \( j \).
For \( j = 1 \), \( T_1 \) is a monotone spanner of \( G \) by \Cref{lem:decremental_spanner}.
For \( j \geq 2 \), suppose that \( T_1 \cup T_2, \dots \cup T_j \) is a monotone spanner of \( G \).
i.e., an edge \( e \) is removed from \( T_1, T_2, \dots, T_j \) only if \( e \) is removed from \( G \), and so \( e \) will not be added to \( G \setminus \cup_{i = 1} ^{k} T_i \) for every \( 1 \leq k \leq j \).
Now, note that \( T_1 \cup T_2 \cup \dots \cup T_{j + 1} \) is a partition of edges, which means that an edge \( e \) is removed from \( \cup_{i = 1} ^{j+1} T_i \) iff \( e \) is removed from \( T_l \) for an integer \( 1 \leq l \leq j + 1 \).
If \( l \leq j \), then by assumption, \( e \) will not be added to \( G \setminus \cup _{i = 1} ^j T_i \).
If \( l = j + 1 \), then by \Cref{lem:decremental_spanner}, \( T_{j + 1} \) is a monotone spanner of \( G \setminus \cup_{i = 1} ^{j} T_i \), meaning that \( e \) will not be added to \( G \setminus \cup _{i = 1} ^j T_i \) for every \( 1 \leq k \leq j \).

\underline{Update time:}
we set \( k = \lceil \log n \rceil \) and by \Cref{lem:decremental_spanner}, each \( T_i \) has an expected size of \atmost{n \log ^3 n } and is maintained in an expected total update time of \atmost{m \log ^3 n}.
Thus, \( B \) has an expected size of \atmost{t n \log ^3 n } and is maintained in an expected total update time of \atmost{t m \log ^3 n}.

\underline{Generalization to weighted graphs:}
we use \Cref{remark:weighted}, which results in an \atmost{\log W} overhead in the size of the spanner.
Since the batching technique partitions the edges of the graph into subgraphs, the total update time will not change: \( \sum _i \atmost{t m_i \log ^3 n} = \atmost{t m \log ^3 n} \).
The monotonicity clearly follows, as the algorithm maintains a monotone spanner on each subgraph in the batch. 
\end{proof}

\clearpage
\pagenumbering{arabic}

\renewcommand{\thepage}{References-\arabic{page}}

\section*{References}    %
\printbibliography[heading=none]    %

\end{document}